\newif\iflong
\newif\ifshort
\newcommand{\mytitle}{Fractional Matchings under Preferences: Stability and Optimality}
\newcommand{\mytitle}{Fractional Matchings under Preferences: Stability and Optimality}
\title{\mytitle\thanks{Supported by the WWTF research grant~(VRG18-012).}}
\author{
  Jiehua Chen\textsuperscript{\rm 1}, Sanjukta Roy\textsuperscript{\rm 2}, Manuel Sorge\textsuperscript{\rm 1}\\
}
\newcommand{\myemph}[1]{{\color{green!40!black}\emph{#1}}}
\newcommand{\myparagraph}[1]{
  \smallskip
  \noindent\textbf{#1}
}
\newcommand{\prob}[6]{%
  \smallskip

  \noindent
  \begin{tabular}{@{\;}l@{\,}p{0.4\textwidth}@{\;}}
    \multicolumn{2}{@{\;}l}{\textsc{#1}}\\
    \emph{#2} & #3\\
    \emph{#4} & #5\\
  \end{tabular}

  \smallskip

}
\newcommand{\probdef}[3]{\prob{#1}{Input:}{#2}{Question:}{#3}{as}}
\newcommand{\probdefopt}[3]{\prob{#1}{Input:}{#2}{Task:}{#3}{as}}
\newcommand{\appsymb}{$\star$}
\tikzstyle{thickline} = [line width=1.8pt]
\tikzstyle{agent} = [circle, inner sep=1pt, draw, fill=black!60]
\tikzset{linemarkr/.style =   {line cap=round, opacity=.4, line width= 3pt, red}}
\tikzset{linemarkg/.style =   {line cap=round, opacity=.3, line width= 7pt, green}}
\tikzset{linemarkb/.style =   {line cap=round, opacity=.3, line width= 7pt, blue}}
\tikzset{linemarky/.style =   {line cap=round, opacity=.2, line width= 7pt, yellow}}
\newcommand{\gettikzxy}[3]{%
  \tikz@scan@one@point\pgfutil@firstofone#1\relax
  \edef#2{\the\pgf@x}%
  \edef#3{\the\pgf@y}%
}
\newcommand{\npc}{\textsf{NP-c}}
\newcommand{\pp}{\textsf{P}}
\newcommand{\mycite}[1]{\small #1}
\newcommand{\newhardresult}[1]{\colorbox{red!60!black}{\parbox[b][1.5ex][c]{4.5ex}{\textcolor{white}{\small #1}}}} %
\newcommand{\newpresult}[1]{\colorbox{green!60!black}{\parbox[b][1.5ex][c]{4.5ex}{\textcolor{white}{%
        \small #1}}}} %
\newcommand{\knownresult}[1]{\parbox[b][1ex][c]{4.5ex}{\small #1}}
\newcommand{\ecresultsymbol}{$\clubsuit$}
\newcommand{\afresultsymbol}{$\heartsuit$}
\newcommand{\ecresultcite}[1]{{\knownresult{#1}}& [\ecresultsymbol]}
\newcommand{\IS}{\textsc{Independent Set}\xspace}
\newcommand{\ISs}{\textsc{IS}\xspace}
\newcommand{\MaxIS}{\textsc{Max Independent Set}\xspace}
\newcommand{\MaxISs}{\textsc{Max-IS}\xspace}
\newcommand{\EOSFM}{\textsc{Max-Welfare OSM}\xspace}
\newcommand{\POSFM}{\textsc{Max-Full OSM}\xspace}
\newcommand{\ECSFM}{\textsc{Max-Welfare CSM}\xspace}
\newcommand{\PCSFM}{\textsc{Max-Full CSM}\xspace}
\newcommand{\welfare}{\ensuremath{\gamma}}
\newcommand{\fullymatched}{\ensuremath{\#\mathsf{fully}}}
\newcommand{\fullnum}{\ensuremath{\tau}}
\newcommand{\lis}{\ensuremath{{\color{red!80!black}{\mathsf{IS}}}}}
\newcommand{\lvc}{\ensuremath{{\mathsf{VC}}}}
\newcommand{\bettere}[2]{\ensuremath{{\mathcal{BE}_{#1}\!(#2)}}}
\newcommand{\better}[2]{\ensuremath{{\mathcal{B}_{#1}\!(#2)}}}
\newcommand{\defqed}{$\diamond$}
\newcommand{\colU}[1]{\textcolor{blue!70!black}{#1}}
\newcommand{\colW}[1]{\textcolor{red!70!black}{#1}}
\newtheorem{theorem}{Theorem}[section]
\newtheorem{corollary}[theorem]{Corollary}
\newtheorem{lemma}[theorem]{Lemma}
\newtheorem{observation}[theorem]{Observation}
\newtheorem{proposition}[theorem]{Proposition}
\newtheorem{claim}{Claim}
\theoremstyle{definition}
\newtheorem{definition}[theorem]{Definition}
\newtheorem{example}[theorem]{Example}
\newtheorem{construction}{Construction}
\crefname{table}{Table}{Tables}
\crefname{figure}{Figure}{Figures}
\crefname{theorem}{Theorem}{Theorems}
\crefname{definition}{Definition}{Definitions}
\crefname{corollary}{Corollary}{Corollaries}
\crefname{observation}{Observation}{Observations}
\crefname{lemma}{Lemma}{Lemmas}
\crefname{example}{Example}{Examples}
\crefname{reduction}{Reduction}{Reductions}
\crefname{construction}{Construction}{Constructions}
\crefname{subsection}{Subsection}{Subsections}
\crefname{section}{Section}{Sections}
\crefname{proposition}{Proposition}{Propositions}
\crefname{algorithm}{Algorithm}{Algorithms}
\crefname{claim}{Claim}{Claims}
\crefname{rrule}{Reduction Rule}{Reduction Rules}
\newcommand{\perfect}{perfect\xspace}
\newcommand{\sat}{\ensuremath{\mathsf{sat}}}
\newcommand{\rank}{\ensuremath{\mathsf{rank}}}
\newcommand{\egal}{\ensuremath{\mathsf{egal}}}
\newcommand{\Pot}{\ensuremath{\mathcal{P}}\xspace}
\newcommand{\LLAST}{\ensuremath{\textcolor{red!60!black}{\mathsf{last}}}}
\newcommand{\ffirst}{\ensuremath{\textcolor{green!50!black}{\mathsf{1st}}}}
\newcommand{\wel}{\ensuremath{\mathsf{welfare}}}
\newcommand{\util}{\ensuremath{\mathcal{U}}}
\newcommand{\cutility}{utility\xspace}
\newcommand{\Utilities}{Utilities\xspace}
\newcommand{\R}{\ensuremath{\mathds{R}}}
\newcommand{\Q}{\ensuremath{\mathds{Q}}}
\newcommand{\pref}{\ensuremath{{\succ}}}
\newcommand{\wpref}{\ensuremath{{\succeq}}}
\newcommand{\indif}{\ensuremath{\sim}}
\newcommand{\meet}{\ensuremath{{\wedge}}}
\newcommand{\join}{\ensuremath{{\vee}}}
\newcommand{\med}{\ensuremath{\mathsf{med}}}
\newcommand{\cstable}{cardinally stable\xspace}
\newcommand{\ostable}{ordinally stable\xspace}
\newcommand{\lstable}{linearly stable\xspace}
\newcommand{\csm}{\textcolor{red!60!black}{CSM}\xspace}
\newcommand{\osm}{\textcolor{blue!80!black}{OSM}\xspace}
\newcommand{\lsm}{LSM\xspace}
\newcommand{\csms}{\ensuremath{\mathsf{C}}\xspace}
\newcommand{\osms}{\ensuremath{\mathsf{O}}\xspace}
\newcommand{\lsms}{\ensuremath{\mathsf{L}}\xspace}
\newcommand{\cblocking}{cardinally blocking\xspace}
\newcommand{\oblocking}{ordinally blocking\xspace}
\newcommand{\lblocking}{linearly blocking\xspace}
\newcommand{\cardinalstability}{cardinal stability\xspace}
\newcommand{\ordinalstability}{ordinal stability\xspace}
\newcommand{\linearstability}{linear stability\xspace}
\newcommand{\NPH}{\text{{\normalfont NP}\nobreakdash-hard}\xspace}
\newcommand{\APXH}{\text{{\normalfont APX}\nobreakdash-hard}\xspace}
\newcommand{\NPC}{\text{{\normalfont NP}\nobreakdash-complete}\xspace}
\newcommand{\NP}{\text{\normalfont NP}\xspace}
\newcommand{\WOH}{\text{{\normalfont W[1]}\nobreakdash-hard}\xspace}
\newcommand{\agent}{\xspace}
\begin{document}
\maketitle

\begin{abstract}
  \looseness=-1
  We thoroughly study a generalized version of the classic Stable Marriage and Stable Roommates problems where agents may share partners. %
  We consider two prominent stability concepts: \ordinalstability~\cite{AF03Scarfs} and \cardinalstability~\cite{CFKV19}, and two optimality criteria:
  maximizing social welfare (i.e., the overall satisfaction of the agents) and
  maximizing the number of fully matched agents (i.e., agents whose shares sum up to one).
  After having observed that \ordinalstability always exists and implies \cardinalstability,
  and that the set of \ostable matchings in a restricted case admits a lattice structure,
  we obtain a complete picture regarding the computational complexity of finding an optimal \ostable or \cstable matching.
  In the process we answer an open question raised by \citet{CFKV20j}.
\end{abstract}

\section{Introduction}

{\centering ``A joy shared is a joy doubled!''\par}

This is particularly prevalent in matching markets,
where the market participants, jointly referred to as \myemph{agents},
have preferences over whom they want to have as partner. %
The goal is to match agents with partners so as to achieve some desirable properties, such as \myemph{stability}, i.e., no two agents would like to deviate from their current assignments under the matching. %
In its most simple form, a \myemph{matching} consists of disjoint pairs of agents, meaning that each agent is assigned to at most one other agent; we call such matchings \myemph{integral matchings}.
A \myemph{stable integral matching} is an integral matching where no two agents would prefer to be matched to each other rather than with their assigned partners, if any.

\looseness=-1
Unfortunately, a stable integral matching does not always exist.
If however the agents are allowed to \emph{share} partners, i.e., to have a \myemph{fractional} matching,
then the social welfare may increase and stability is guaranteed! %
Here, a \myemph{fractional} matching is a function %
which assigns each pair of agents %
a value %
between zero and one such that for each agent, the sum of the values of all pairs containing this agent is at most one. %
An integral matching is hence a restricted variant of fractional matchings where each fraction is either zero or one.

Fractional matchings have applications in time-sharing. %
{For example, in a job market the agents may be partitioned into two sets, freelancers and companies.
  A fractional matching models the amount of time a freelancer spends working for a company.
  The preferences can model intensity of interest in working with the agents of the other set,
  and then \emph{stability} models an equilibrium in such a job market.
  Similar scenarios are time-sharing assignments between advisors and apprentices or between workers and projects.
  An instance of the non-bipartite case occurs when agents (e.g., nurses) work in multiple shifts, and each shift is carried out by two workers.
  A fractional matching determines the fraction of shifts that each worker carries out with another worker.
  The preferences can model the intensity of willingness to work with each other,
  and then \emph{stability} models the situation where no workers want to swap shifts. 
  Fractional matchings also find application in random matching~\cite{RothRothblumVate1993lsm-lattice,aziz_random_2019}: By the Birkhoff-von Neumann theorem a fractional matching can be interpreted as a probability distribution over integral matchings in the bipartite case.
  Choosing an integral matching at random instead of deterministically enables many desirable properties such as fairness and increased expected welfare.}

There are multiple natural ways to extend the notion of stability for integral matchings to fractional matchings.
For an illustration, let us consider the following example with six agents, called~$a,b,c,d,e,f$ as shown in Figure~\ref{fig:example_intro}.

 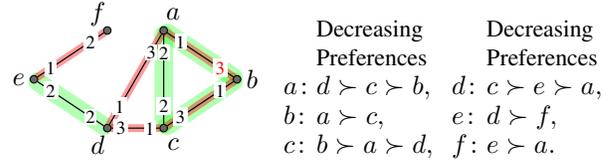
\begin{figure}[t]
\begin{minipage}{.4\linewidth}
  \begin{tikzpicture}
    \foreach \n / \i / \r / \s in {a/1/.75/1.0,b/2/1.35/1.55,c/3/.75/1.0,d/4/.75/1.0,e/5/1.35/1.55,f/6/.75/1.0} {
      \node[agent] at (\i*-60+120:\r) (\i) {};
      \node[] at (\i*-60+120:\s) {$\n$};
    }

    \foreach \s / \t / \w / \v in {1/2/1/{\textcolor{red}{3}},1/3/2/2,1/4/{{3}}/1,%
      2/3/1/3,3/4/1/{{3}},4/5/2/2,5/6/1/2} {
      \path[draw] (\s) -- node[pos=0.2, fill=white, inner sep=.5pt] {\scriptsize \w} node[pos=0.8, fill=white, inner sep=1pt] {\scriptsize \v} (\t);
    }
    \begin{pgfonlayer}{background}
      \foreach \i / \j in {1/2,2/3,3/1,4/5} {
        \draw[linemarkg] (\i) edge (\j);
        \draw[gray] (\i) edge (\j);
      }

      \foreach \i / \j in {1/2,2/3,3/4,4/1,5/6} {
        \draw[linemarkr] (\i) edge (\j);
        \draw[gray] (\i) edge (\j);
      }
    \end{pgfonlayer}
  \end{tikzpicture}
\end{minipage}
\begin{minipage}{.55\linewidth}
  \[\begin{array}{l@{}l@{\,\;}l@{}l}
      & \mbox{\small Decreasing}  & & \mbox{\small Decreasing}\\
      & \mbox{\small Preferences}  & & \mbox{\small Preferences}\\
      \agent~a\colon & d\succ c\succ b,  &   \agent~d\colon & c\succ e\succ a,\\
      \agent~b\colon & a\succ c, &   \agent~e\colon & d\succ f,\\
      \agent~c\colon & b\succ a \succ d, &        \agent~f\colon & e \succ a.
    \end{array} \]
\end{minipage}
\caption{{Left: An acceptability graph with six agents $a$, $b$, $c$, $d$, $e$, and $f$. The values on an edge denote the satisfactions of the endpoints of the edge towards each other, e.g., $a$'s satisfaction towards $b$ is $1$ and $b$'s satisfaction towards $a$ is $3$.
Right: The ordinal preferences derived from the satisfaction values.}}\label{fig:example_intro}
\end{figure}

The preference of an agent~$i$ towards another agent~$j$ is specified through a non-negative cardinal value (the higher the better), called \myemph{satisfaction},
and is depicted at the end of edge~$\{i,j\}$ closer to~$i$ in the graph on the left.
For instance,~$b$'s satisfactions towards~$a$ and $c$ are~$3$ and $1$, respectively.
This means that $b$ \myemph{prefers}~$a$ to~$c$, expressed as $b\colon a\pref c$.

In this example, no integral matching is stable due to the \emph{cyclic} preferences of the three agents~$a,b,c$:
No matter how an integral matching looks like, at least two of the three would prefer to be with each other rather than with the assignment by the matching. 
Indeed, odd cycles with such \iflong kinds of \fi cyclic preferences are the main obstruction to having a stable integral matching.
However, in practice, odd cycles are rather the norm as social networks often have large clustering coefficients which essentially means that it is likely for three agents to form a triangle~%
\mbox{\cite{newman_networks_2010}}.
Thus it is not far-fetched to suppose that odd cycles with cyclic preferences are likely in matching markets and hence, no stable integral matchings exist.

For fractional matchings the situation is different.
{Consider the green fractional matching~\myemph{$M$} in Figure~\ref{fig:example_intro} (indicated by the green edges):} %
Agents~$a,b,c$ in the triangle are half-integrally matched with each other~(i.e., each of the three pairs receives a half-integral value:~$0.5$), and $d$ is integrally matched with $e$. %
 This green matching~\myemph{$M$} is \myemph{\cstable}~\cite{CFKV19}, i.e., no two agents could increase their utilities by being integrally matched with each other.
Herein, the \myemph{utility} of an agent towards a fractional matching is the sum of her satisfactions towards her potential partners weighted by the respective fractional matching value.
The utilities of agents~$a$, \(b\), and~\(c\) are $1 \cdot 0.5 + 2 \cdot 0.5= 1.5$, %
$(1+3)\cdot 0.5=2$, and $(3+2)\cdot 0.5=2.5$, respectively. 

The green matching~\myemph{$M$}  (shown by the green lines in Figure~\ref{fig:example_intro}) satisfies two more fractional stability concepts which were originally defined for ordinal preferences~(see the right hand side of \cref{fig:example_intro}).
More precisely, the green matching~\myemph{$M$} is \myemph{\ostable}~\cite{AF03Scarfs}, i.e.,
for each pair of agents at least one agent in the pair is \emph{satisfied with~$M$ regarding the pair}.
Herein, \emph{an agent~$i$ is satisfied with a matching regarding a pair~$\{i,j\}$} 
if the fractional values assigned by the matching between~$i$ and someone she finds better or equal to~$j$ sum up to one. 
{For instance, agent~$a$ is satisfied with the green matching~\myemph{$M$} regarding~$\{a,b\}$ since she prefers $c$ to $b$ and the values of matching her to $c$ and $b$ sum up to one.
 However, she is not satisfied with~\myemph{$M$} when regarding~$\{a,c\}$ since the values of matching her to someone better or equal to $c$ sum up to $0.5$ which is less than one.}
Ordinal stability models the desired property that no two agents exist who both can increase the fractional value of matching them,
by possibly decreasing the fractional values of matching either of them to someone less preferred.

Lastly, the green matching~\myemph{$M$} is also \myemph{\lstable}~\cite{RothRothblumVate1993lsm-lattice,abeledo_stable_1994}, meaning that each pair of agents is \myemph{jointly satisfied with~$M$}.
A pair~$\{i,j\}$ is \myemph{jointly satisfied with a matching}
if the fractional values of matching~$i$ to someone better or equal to~$j$ plus the fractional values of matching~$j$ to someone better or equal than $i$ sum up to at least one. %
 For instance, under the green matching, for pair~$\{a,c\}$ the sum is $0.5+0.5=1$, and hence $\{a,c\}$ is jointly satisfied with the green matching.

For inclusivity we may strive to maximize the total matching values. %
Indeed, there is another \cstable\ (fractional) matching~\textcolor{red!85!black}{$M'$}, indicated by the red lines in the graph in Figure~\ref{fig:example_intro}, where everyone is \myemph{fully matched}, i.e., the matching values for each agent sum up to one.
In matching~\textcolor{red!85!black}{$M'$}, we match both $a$ and $c$ each half-integrally with both~$d$ and $b$,
and match $e$ with~$f$ integrally.
Matching~\textcolor{red!85!black}{$M'$} is, however, neither \ostable nor \lstable since $\{d,e\}$ is not jointly satisfied with~\textcolor{red!85!black}{$M'$}, implying also that neither $d$ nor $e$ is satisfied with~\textcolor{red!85!black}{$M'$} regarding the pair~$\{d,e\}$.

\looseness=-1 
In terms of social welfare, defined as the sum of the utilities of all agents,
the red matching~\textcolor{red!85!black}{$M'$} has a welfare of~$11$, making it superior to the green matching~\myemph{$M$}, with a welfare of~$10$.
Indeed, matching~\textcolor{red!85!black}{$M'$} has achieved maximum-welfare since this value is the maximum that any matching of the corresponding edge-weighted graph can achieve; here the weight of an edge representing a pair~$\{i,j\}$ is equal to the sum of the satisfactions of $i$ and $j$ towards each other. 

\myparagraph{Our contribution.}
It is fairly straightforward to see that when restricted to integral matchings all three stability concepts coincide with the classical (weak) stability concept. 
Aiming for a better understanding of fractional matchings under preferences,
in the first part of the paper
we take a structural approach to study how the three stability concepts (\cardinalstability, \ordinalstability, and \linearstability) relate to each other.
In the second part, we focus on computing stable fractional matchings that maximize the number of fully matched agents or the social welfare.
Since \linearstability can be formulated via linear programs, finding an optimal \lstable matching can be solved in polynomial time whenever the objective can be formulated as a linear function of the matching values. %
Hence, we focus on the other two stability concepts.
We investigate how the complexity of finding an optimal stable fractional matching is influenced by
\ifshort\begin{inparaitem}\else \begin{compactitem}
  \item the presence of \myemph{ties} (i.e., an agent may have the same satisfaction towards different agents) and
  \item the type of matching market (i.e., in a \myemph{marriage market} the agents are divided into two disjoint parts such that all agents in one part have preferences over a subset of agents in the other part whereas in a \myemph{roommates market} there is no such division).
\ifshort\end{inparaitem}\else \end{compactitem} 
\ifshort
We highlight our findings below; also see  \cref{tab:results}.
\else
We highlight our findings below.
\fi

\begin{table*}[t]
  \centering
  \resizebox{\textwidth}{!}
  {
    \begin{tabular}{@{}l@{\,}|
    @{\,}p{.85cm}@{\,}p{.9cm}@{\;\;}p{.85cm}@{\,}p{.9cm}@{\,}|@{}c@{\,}p{.9cm}@{\;}p{.85cm}@{\;\;}p{.9cm}@{\,}p{.85cm}@{\,}c@{\,}|
    @{\,}p{.9cm}@{}p{.85cm}@{\;\;}p{.9cm}@{}p{.85cm}@{\,}|@{}c@{\,}p{.9cm}@{\;}p{.85cm}@{\;\;}p{.9cm}@{\,}p{.85cm}@{}}
  \toprule
  & \multicolumn{9}{c}{\cardinalstability} &  & \multicolumn{9}{c}{\ordinalstability}\\\cline{2-10}\cline{12-20}
  & \multicolumn{4}{c}{Marriage} &\multicolumn{1}{@{}c@{}}{} & \multicolumn{4}{c}{Roommates} & & \multicolumn{4}{c}{Marriage} &\multicolumn{1}{@{}c@{}}{} & \multicolumn{4}{c}{Roommates}  \\\cline{2-5}\cline{7-10}\cline{12-15}\cline{17-20}
    & \multicolumn{2}{c}{no ties} & \multicolumn{2}{c}{ties} & & \multicolumn{2}{c}{no ties}& \multicolumn{2}{c}{ties}&&
                                                                                                          \multicolumn{2}{c}{no ties} & \multicolumn{2}{c}{ties} & & \multicolumn{2}{c}{no ties}& \multicolumn{2}{c}{ties}\\\midrule
    always exists? & \knownresult{yes} & \mycite{[\ecresultsymbol]} & \knownresult{yes} & \mycite{[\ecresultsymbol]} && \newpresult{yes} & \mycite{[L~\ref{osroommates:always}]}  &\newpresult{yes} & \mycite{[L~\ref{osroommates:always}]} & & 
       \knownresult{yes} & \mycite{[\afresultsymbol]} & \newpresult{yes} & \mycite{[L~\ref{osroommates:always}]} && \knownresult{yes} & \mycite{[\afresultsymbol]} & \newpresult{yes}& \mycite{[L~\ref{osroommates:always}]}\\\hline
   max-\#-fully-matched %
  & \newhardresult{\npc} & \mycite{[T~\ref{th:PCSFM_hard}]} & \newhardresult{\npc} & \mycite{[T~\ref{th:ECSFM_hard}]} & & \newhardresult{\npc} & \mycite{[T~\ref{th:PCSFM_hard}]} & \newhardresult{\npc} & \mycite{[T~\ref{th:ECSFM_hard}]} &&  \newpresult{\pp} & \mycite{[L~\ref{osm:noties+perfect:p}]}
  & \newhardresult{\npc}  & \mycite{[T~\ref{th:Perfect-Welfare-ties-OSM_hard}]} & & \newpresult{\pp} & \mycite{[L~\ref{osm:noties+perfect:p}]}  & \newhardresult{\npc}  & \mycite{[T~\ref{th:Perfect-Welfare-ties-OSM_hard}]}\\
   max-welfare 
  &  \newhardresult{\npc} & \mycite{[T~\ref{th:ECSFM_hard}]} & \ecresultcite{\npc}  &&  \newhardresult{\npc} & \mycite{[T~\ref{th:ECSFM_hard}]} & \ecresultcite{\npc} && \newpresult{\pp} & \mycite{[T~\ref{thm:egal-bipartite-poly}]} & \newhardresult{\npc} & \mycite{[T~\ref{th:Perfect-Welfare-ties-OSM_hard}]} & & \newhardresult{\npc} & \mycite{[T~\ref{th:Welfare-OSR_hard}]} & \newhardresult{\npc} & \mycite{[T~\ref{th:Welfare-OSR_hard}]}  \\
  \bottomrule
\end{tabular}}
\iflong \caption{Complexity Results of known and new results for deciding perfect (resp.\ max-welfare) \cstable (\ostable) fractional matchings. Results marked with \ecresultsymbol\ are from \citet{CFKV19}. Results marked with \afresultsymbol\ are from \citet{AF03Scarfs}. Results marked in red and green are derived from the current paper. For all results regarding the existence of stable fractional matchings, finding such one can be done in polynomial time.}
\else
\caption{New and known results for deciding a
  \cstable (resp.\ \ostable) matching with max.\ \# of fully matched agents (resp.\ max-welfare).
  Results marked with \ecresultsymbol\ and \afresultsymbol\ are from  \citet{CFKV19} and \citet{AF03Scarfs}, respectively. Results marked in red and green are new; green means polynomial-time algorithms and red NP-completeness.}\fi
\label{tab:results}
\end{table*}%

\begin{compactenum}[(1)]
  \item Among the three stability concepts, \ordinalstability is the most stringent one since it implies both \cardinalstability and \linearstability, even in roommates markets, whereas the latter two do not necessarily imply each other.
  Similar to the \linearstability for strict preferences,
  in the marriage case the set of \ostable matchings admits a distributed lattice, %
  and in the roommates case this set is closed under a median operation~(see \cref{sec:structural}). %
  \item \looseness=-1 We introduce the problem of finding an \ostable or \cstable matching maximizing the number of fully matched agents.
  We show that for \ordinalstability, ties make a difference: It is polynomial-time solvable when ties are not present, and NP-hard otherwise.
\iflong {For the marriage case, the tractability result in this dichotomy comes from the fact that each \ostable\ matching is a convex combination of integral stable matchings and hence techniques for integral stable matchings can be applied.}
  
  For \cardinalstability, it is NP-hard even for preferences without ties and for the marriage case. 
  \else
  For \cardinalstability, it is NP-hard even in the marriage case without ties.
  \fi
  \item For maximizing the social welfare, the problem is mostly NP-hard, with only one exception: Finding a maximum-welfare \ostable matching for preferences without ties and the marriage case is polynomial-time solvable; the other cases remain NP-hard.  

  {Note that the hardness result for \cardinalstability behind \cref{th:ECSFM_hard} (also see the remark afterwards) is in stark contrast to the usual understanding of marriage problems without ties, for which most problems are solvable in polynomial time.
    Moreover, the result resolves an open question asked by Caragiannis et al.~\cite{CFKV20j}.}
\end{compactenum}
\ifshort Due to paucity of space we defer the proofs of the statements marked with (\appsymb) and statements of some known results to an appendix. %
\fi
\iflong
Our results are summarized in \cref{tab:results}.
\fi

\myparagraph{Related work.}
\looseness=-1
\citet{RothRothblumVate1993lsm-lattice} studied \linearstability~(they called it \emph{fractional stability}) in marriage markets without ties, and showed that the set of \lstable matchings enjoys a lattice structure.
\citet{abeledo_stable_1994} also studied \linearstability, but in roommates markets.
They observed that \linearstability in roommates markets does not have a lattice structure in general, but showed that \lstable matchings are closed under the so-called median operation.
Following \citeauthor{RothRothblumVate1993lsm-lattice,abeledo_stable_1994}, we show that the same results hold also for \ordinalstability.

\citet{aziz_random_2019} considered multiple fractional stability concepts in marriage markets, including \linearstability and \ordinalstability (which they called \emph{fractional stability} and \emph{ex-ante stability}, respectively), but not \cardinalstability.
They showed that \ordinalstability implies \linearstability.
We strengthen their result by showing the same for the roommates case.

\citet{CFKV19} introduced the problem of finding maximum-welfare \cstable matchings in marriage markets.
They showed that the problem is NP-hard and hard to approximate even if each agent has at most three different satisfaction values but may contain ties in her preferences.
We improve on this result by showing NP-hardness even when no ties are present and each agent finds at most five agents acceptable.
A subset of the structural results, namely the ones about \cardinalstability{} in the marriage setting and for perfect matching (see \cref{obs:ostable->cstable}) has been observed independently in parallel in a recent journal version~\cite[Appendix A]{CFKV20j} of this paper~\cite{CFKV19}.

Finally, \citet{AF03Scarfs} studied \ordinalstability in the hypergraphic setting where each agent~$i$ has \emph{strict} preferences over subsets (hyperedges) of agents which contain~$i$, and a fractional matching is a function that gives each hyperedge a non-negative fractional value such that the sum of values of the hyperedges incident to each agent is at most one.
They elaborated that the powerful Scarf lemma from game theory guarantees the existence of \ostable matchings.
However, \citeauthor{KiPRST2013osmhypegraphic}~\cite{KiPRST2013osmhypegraphic} and \citet{IK2018osm-hypergraphic} showed that finding an \ostable matching in the hypergraphic setting is as hard as finding a Nash equilibrium (PPAD-hard), even when each agent finds only a constant number of hyperedges acceptable.

\looseness=-1
For an overview on integral stable matchings, we refer to the books of \citet{GusfieldIrving1989} and \citet{Manlove2013}.%

\section{Preliminaries}\label{sec:prelim}
Given an integer~$z$, we use~\myemph{$[z]$} to denote the set $\{1,2,\ldots, z\}$.%

\myparagraph{Graphs with cardinal preferences, and matchings.}
Let $G$$=$$(V, E)$ be a graph and $\sat\colon V\times V \to \Q_{\geq 0}$ be a function, 
where
\iflong
\begin{compactitem}[--]
\else \begin{inparaitem} \fi
  \item \myemph{$V$} denotes a set of vertices (also called \myemph{agents}),
  \item \myemph{$E$} denotes a set of edges such that an edge between two vertices means that the corresponding agents find each other \myemph{acceptable},
  and
  \item \myemph{$\sat$} specifies the \myemph{cardinal preferences} \iflong (also called \myemph{satisfaction}) \fi of an agent towards another agent, i.e., for all~$u,v \in V$, \iflong the value~\fi$\sat(u,v)$ specifies the \myemph{satisfaction} of~$u$ towards~$v$.
\iflong \end{compactitem}
\else \end{inparaitem}\fi

\smallskip

\noindent \textbf{Remarks.} We assume throughout that
\begin{inparaenum}[(1)]
  \item $G$ contains no isolated vertices,
  \item \iflong for all~$u\in V$ it holds that $\sat(u,u)=0$, and
  \else $\forall u\in V\colon \sat(u,u)=0$, and \fi
  \item \iflong for all~$u,v\in V$ it holds that $\{u,v\} \in E$ if and only if ``$\sat(u,v) > 0$ or $\sat(v,u) > 0$''.\else
  $\forall u,v\in V\colon \{u,v\} \in E \Leftrightarrow (\sat(u,v) > 0 \vee \sat(v,u) > 0)$.\fi
\end{inparaenum}
\iflong

\medskip
\fi
\looseness=-1
From the satisfaction function~$\sat$ of~$G$ we derive a \myemph{preference list}~$\wpref_v$ over the \myemph{neighborhood} $N_G(v)=\{u \mid \{v,u\}\in E\}$ of each agent~$v\in V$ as follows:
Let \myemph{$\wpref_v$} denote a complete and transitive binary relation of~$N_G(v)$ %
such that for each two agents~$x,y$ with $x,y\in N_G(v)$ it holds that
\ifshort $x\wpref_v y$ if and only if $\sat(v,x)\ge \sat(v,y)$; we say that $v$ \myemph{weakly prefers} $x$ to~$y$.\fi
\iflong \begin{align}
\nonumber x\wpref_v y&\text{ if and only if }\sat(v,x)\ge \sat(v,y);\text{ we say that}\\ &v ~\myemph{weakly prefers} ~x~to~y.\hfill\tag{PREF} \label{eq:Pot}
\end{align}
\fi
We use \myemph{$\pref_v$} to denote the asymmetric part of~$\wpref_i$ (i.e., $\sat(v,x) > \sat(v,y)$), meaning that $v$ \myemph{(strictly) prefers} $x$ to $y$,
and \myemph{$\indif_i$} to denote the symmetric part of $\wpref_i$ (i.e., $\sat(v,x) = \sat(v,y)$), meaning that
$x$ and $y$ are \myemph{tied} by~$v$.
We use \myemph{$\Pot=(\wpref_v)_{v\in V}$} to denote the collection of the preference lists derived from~$\sat$.
We say that $x$ is a \myemph{most preferred} agent of~$v$ if for each agent~$y\in N_G(v)$ we have $x \wpref_v y$.

For each two agents~$u,v\in V$, we use \myemph{$\bettere{u}{v}$} (resp.\ \myemph{$\better{u}{v}$}) to denote the set of agents that $u$ weakly prefers (resp.\ strictly prefers) over~$v$,
i.e., $\bettere{u}{v} \coloneqq \{w\in V \setminus \{u\}\mid w\wpref_uv\}$, and 
$\better{u}{v} \coloneqq \{w\in V \setminus \{u\}\mid w\pref_uv\}$. %

\looseness=-1
An instance~$I=(G,\sat)$ contains \myemph{(preferences with) ties} if there exists $v\in V$ and two neighbors~$x,y\in N_G(v)$ with $\sat(v,x)=\sat(v,y)$; otherwise it has \myemph{strict preferences}.
We extend the standard integral matching concept to fractional ones.
\iflong \begin{definition}[Fractional Matching] \fi
  A \myemph{fractional matching}~$M \colon E \rightarrow \R_{\geq 0}$ is an assignment of non-negative weights to each edge~$e\in E$ such that
  $\sum_{\{v,u\} \in E}  M(\{u,v\}) \leq 1$
  for each agent~$v \in V$.
\iflong \hfill \defqed\end{definition}\fi
For the sake of readability and when there are no ambiguities, we abbreviate ``fractional matchings'' to ``matchings''.
To ease notation, for each edge~$\{u,v\}$ we use \myemph{$M(x,y)$} and \myemph{$M(y,x)$} to refer to the matching value~$M(\{x,y\})$.
Moreover, for each two agents~$x,y$, we use~\myemph{$M(x,\textcolor{red!60!black}{\wpref}y)$} and \myemph{${M}(x,\textcolor{red!60!black}{\pref}y)$} to denote the following sums:
\begin{align*}
  {M}(x,\textcolor{red!60!black}{\wpref} y) \coloneqq  \sum_{\mathclap{y'\in \textcolor{red!60!black}{\bettere{x}{y}}}}{M(x,y')}\text{, and } {M}(x,\textcolor{red!60!black}{\pref} y) \coloneqq  \sum_{\mathclap{y'\in \textcolor{red!60!black}{\better{x}{y}}}}{M(x,y')}.
\end{align*}

\noindent
\iflong A fractional matching~$M$ may satisfy one of the following properties: \fi
  An agent~$v$ is called \myemph{fully matched} (resp.\ \myemph{matched}) under~$M$
  if $\sum_{u \in N_G(v)} M(v,u)=1$ (resp.\ $\sum_{u \in N_G(v)} M(v,u)> 0$).
  $M$ is called \myemph{\perfect} %
  if each agent is fully matched. %
  $M$ is called \myemph{integral} (resp.\ \myemph{half-integral}) if $M(e)\in \{0,1\}$
   (resp.\ $M(e)\in \{0,0.5,1\}$) for each edge~$e$.

   As noted by \citet[p.\ 226]{aziz_random_2019}, by the Birkhoff-von Neumann theorem a fractional matching~$M$ in a bipartite graph can be decomposed into a convex combination of integral matchings~\cite[Theorem~3.2.6]{horn_topics_1991}. %
   \iflong
   (The bound \(k \in O(n^2)\) below follows from the fact that Theorem~3.2.6 in \cite{horn_topics_1991} indeed shows that the each fractional matching is contained in an \(O(n^2)\)-dimensional polyhedron together with Carath\'eodory's Theorem about convex hulls.)
   \fi
\iflong \begin{proposition}\label{prop:marriage:support}
For each fractional matching~$M$ of a bipartite graph~$G$ over $n$~vertices, there exists an integer~$k\in O(n^2)$, positive coefficients~$x_1,x_2,\ldots, x_k \in \mathds{R}_{>0}$, and $k$~integral matchings~$M_1,M_2,\ldots, M_k$ of $G$ such that $\sum_{j\in[k]} x_j = 1$ and for each edge~$e \in E$ it holds that
\begin{align*}
  M(e) = \sum_{j\in [k]} x_j \cdot M_j(e). 
\end{align*}
\end{proposition}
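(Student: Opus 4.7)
The plan is to combine the Birkhoff--von Neumann theorem with Carath\'eodory's theorem on convex hulls, as hinted at in the excerpt. The first step is to argue that $M$ lies in the convex hull of the indicator vectors of integral matchings of $G$. One clean route uses the \emph{bipartite fractional matching polytope} $\mathcal{P} = \{x \in \R_{\geq 0}^{E} : \sum_{e \ni v} x_e \leq 1 \text{ for all } v \in V\}$, whose vertices are exactly the integral matchings of $G$; this follows from total unimodularity of the bipartite incidence matrix. Since $M \in \mathcal{P}$, it is automatically a convex combination of such vertices. Alternatively, I would reduce directly to Birkhoff--von Neumann: extend $G$ to a balanced complete bipartite graph by adding dummy vertices, and extend $M$ to a doubly stochastic matrix $M'$ by saturating all degrees with dummy edges. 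Decomposing $M'$ into permutation matrices (perfect matchings of the augmented graph) and restricting each back to the original edge set yields integral matchings of $G$, because restriction preserves the degree-at-most-one property, and this yields the desired decomposition of $M$.

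For the bound $k \in O(n^{2})$, I would next invoke Carath\'eodory's theorem: any point in a convex set contained in a $d$-dimensional affine space is a convex combination of at most $d+1$ extreme points. Since $\mathcal{P}$ lies in $\R^{E}$ and $|E| \leq \binom{n}{2}$ for any simple graph on $n$ vertices, the dimension is at most $\binom{n}{2}$ and we obtain $k \leq |E|+1 \in O(n^{2})$. Dropping any terms whose coefficient turns out to be zero yields the strictly positive coefficients $x_{1},\ldots,x_{k}$ required by the statement, and the fact that they sum to $1$ is inherited from the convex combination.

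The main technical nuisance, rather than a genuine obstacle, is handling the non-perfect case: Birkhoff--von Neumann in its textbook form decomposes doubly stochastic matrices into permutation matrices, i.e., it assumes all row and column sums are exactly one. Since a fractional matching may leave vertices only fractionally matched, one must either pad with dummy edges before invoking Birkhoff--von Neumann and then discard those edges from each resulting matching, or rely on the slightly more general integrality result for the bipartite fractional matching polytope. Either route is standard, and the substantive work is absorbed into the two cited theorems, so the proof collapses to a short assembly of these ingredients together with the dimension count $|E| = O(n^{2})$.
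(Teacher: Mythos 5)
Your proposal is correct and takes essentially the same route as the paper, which likewise obtains the decomposition from the Birkhoff--von Neumann theorem (cited as Theorem~3.2.6 of Horn and Johnson) and the bound $k\in O(n^2)$ from Carath\'eodory's theorem applied to the matching polytope in $\R^{E}$. The only difference is that you spell out how to handle non-perfect matchings (padding to a doubly stochastic matrix, or integrality of the bipartite fractional matching polytope via total unimodularity), a step the paper leaves implicit in its citation.
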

\else
\begin{proposition}\label{prop:marriage:support}
For each fractional matching~$M$ of a bipartite graph~$G$ over $n$~vertices, there exists an integer~$k\in O(n^2)$, positive coefficients~$x_1,x_2,\ldots, x_k \in \mathds{R}_{>0}$, and integral matchings~$M_1,M_2,\ldots, M_k$ of $G$ such that $\sum_{j\in[k]} x_j = 1$ and for each edge~$e \in E$ it holds that $M(e) = \sum_{j\in [k]} x_j \cdot M_j(e)$. 
\end{proposition}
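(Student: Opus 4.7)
The plan is to derive the decomposition from two classical ingredients: the integrality of the bipartite fractional matching polytope (Birkhoff--von Neumann), which provides the existence of some convex combination of integral matchings representing $M$, and Carath\'eodory's theorem, which controls the number of terms in such a combination.

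First I would set up the bipartite fractional matching polytope
\[
\Pot(G) \coloneqq \{x \in \R^{E}_{\ge 0} \mid \textstyle\sum_{e \ni v} x_e \le 1 \text{ for all } v \in V\},
\]
and recall that, since $G$ is bipartite, the constraint matrix of $\Pot(G)$ is totally unimodular. Hence every vertex of $\Pot(G)$ has integral coordinates, and since the coordinates also lie in $[0,1]$, every vertex of $\Pot(G)$ is (the indicator vector of) an integral matching of $G$. This is the Birkhoff--von Neumann statement in the form we need: the fractional matching $M$, viewed as a point in $\Pot(G)$, lies in the convex hull of the vertices of $\Pot(G)$, i.e.\ in the convex hull of a finite set $\mathcal{I}$ of integral matchings. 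Consequently, $M = \sum_{N \in \mathcal{I}} \lambda_N \cdot N$ for some coefficients $\lambda_N \ge 0$ summing to $1$.

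Next I would bound the number of nonzero coefficients. The polytope $\Pot(G)$ is contained in the affine space $\R^{E}$, of dimension $|E| \le \binom{n}{2} \in O(n^2)$. By Carath\'eodory's theorem, any point in the convex hull of a subset of $\R^{E}$ can be written as a convex combination of at most $|E|+1$ of those points. Applying this to $M$ together with the vertex set of $\Pot(G)$, we obtain integral matchings $M_1, \ldots, M_k$ with $k \le |E|+1 \in O(n^2)$ and nonnegative coefficients $x_1, \ldots, x_k$ summing to $1$ such that $M(e) = \sum_{j \in [k]} x_j M_j(e)$ for every edge $e$. Finally, I would discard all $M_j$ with $x_j = 0$ to ensure strict positivity of the remaining coefficients; the sum is unchanged and the count only decreases, so the $O(n^2)$ bound is preserved.

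The only potentially delicate step is ensuring total unimodularity (equivalently, vertex-integrality) of $\Pot(G)$ for bipartite $G$ even in the presence of the inequalities $\sum_{e \ni v} x_e \le 1$ rather than equalities; this is standard and follows by introducing slack variables, after which the incidence matrix of a bipartite graph plus identity columns remains totally unimodular. Once this is in place, everything else is bookkeeping with Carath\'eodory's bound, and no further complications arise.
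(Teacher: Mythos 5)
Your proof is correct and follows essentially the same route as the paper: the paper likewise obtains the existence of the decomposition from the Birkhoff--von Neumann theorem (cited as Theorem~3.2.6 of Horn and Johnson) and derives the bound $k \in O(n^2)$ by applying Carath\'eodory's theorem in the $O(n^2)$-dimensional space containing the fractional matching polytope. The only difference is cosmetic: you justify the vertex-integrality of the bipartite fractional matching polytope directly via total unimodularity (with slack variables), whereas the paper simply cites the matrix-theoretic statement.
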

\fi
\noindent\looseness=-1
The integral matchings $(M_j)_{j \in [k]}$ constitute a \myemph{support} of the matching~$M$.
There may be multiple supports of~\(M\).

\myparagraph{Three stability concepts wrt.\ fractional matchings.}
\ifshort \begin{definition}\label{def:util-bp-stability}
  \else \begin{definition}[\Utilities, blocking pairs, and stability]\label{def:util-bp-stability}
    \fi
    Let $G$ be a graph with cardinal preferences~$\sat$. %
    The \myemph{\cutility} of each agent~$v\in V$ under a matching~$M$ of $(G,\sat)$
    is defined as \myemph{$\util_{\sat, M}(v) \coloneqq \sum_{\{v,u\}\in E(G)} \sat(v,u)\cdot M(v,u)$}.
    If $\sat$ is clear from the context, we omit it from $\util_{\sat, M}$.
    
    \noindent  Given a matching~$M$ of~$(G,\sat)$, an edge~$\{u,v\} \in E(G)$ is %
  \begin{compactitem}[--]
    \item a \myemph{\cblocking pair} \iflong(or \myemph{\cblocking edge}) \fi if 
    {$\util_M(u) <\sat(u,v)$ and  $\util_M(v) < \sat(v,u)$;}
  \item an \myemph{\oblocking pair} \iflong (or \myemph{\oblocking edge}) \fi if
  {$M(u,\wpref v) < 1$ and 
  $M(v,\wpref u) < 1$; }
  \item a \myemph{\lblocking pair} \iflong (or \myemph{\lblocking} edge) \fi
  if
    $M(u,\wpref v) \!+\!  M(v, \wpref u) \!-\! M(u, v) < 1$.
\end{compactitem}
A matching~$M$ of $(G,\sat)$ is \myemph{\cstable}, \myemph{\ostable}, and \myemph{\lstable} if it contains no \cblocking pairs, no \oblocking pairs, and no  \lblocking pairs, respectively.
The acronyms \myemph{\csm}, \myemph{\osm}, and \myemph{\lsm} stand for {\cstable}, \ostable, and \lstable fractional matching, respectively.\hfill \defqed
\end{definition}

\iflong
\noindent \textbf{Remark.}
\fi Note that for integral matchings, all three stability concepts are equivalent to the classical stability concept.
\iflong
\smallskip

\fi
To illustrate the three stability concepts, consider the following.

\begin{example}\label{ex:cstable-not-always-ostable}
  Take the following bipartite graph on vertices $U\cup W$, $U=[5]$, $W=\{a,b,\ldots,e\}$, with strict preferences.
  
  {
  \centering
  \begin{minipage}{.55\linewidth}
    \begin{tikzpicture}
      \def \xs {1.25}
      \def \os {.7}
      \def\ragx{2}
      \foreach \agent / \x / \y / \o in
      {1/0/2/above, 2/.8/2/above,
        a/0/0/below, b/.8/0/below, 3/1.75/2/above, c/1.75/0/below, 4/2.5/2/above, 5/3.25/2/above,
        d/2.5/0/below, e/3.25/0/below%
      }
      {
        \node[agent] at (\x*\xs, \y*\os) (\agent) {};
        \node[\o = 0pt of \agent,text height=1.6ex, inner sep=0pt] {\agent};
      }
      \begin{pgfonlayer}{background}
        \foreach \s / \t / \col / \ang in
        {%
          1/b/linemarkr/10, 2/a/linemarkr/-10,
          3/c/linemarkr/10, 3/d/linemarkr/0,
          4/c/linemarkr/0, 4/e/linemarkr/-5,
          4/d/linemarkr/-10, 5/d/linemarkr/-5%
        } {
          \draw[\col] (\s) edge[bend right = \ang] (\t);
        }
      \end{pgfonlayer}
      \foreach \s / \t / \w / \v / \a / \b / \angle in
      {1/a/2/0/0.2/0.8/10,
        1/b/1/2/0.2/0.8/10,
        2/a/0/1/0.15/0.9/-10,
        2/b/1/0/0.2/0.8/-5,
        3/b/{\textcolor{blue!70}{2}}/1/0.2/0.8/0,
        3/c/{\textcolor{blue!70}{3}}/0/0.2/0.8/10,
        3/d/0/2/0.2/0.8/0,
        4/c/1/1/0.2/0.8/0,
        4/d/2/1/0.2/0.8/-10,
        1/c/0/2/0.15/0.9/-5, 4/e/0/1/0.2/0.85/-5, 5/d/1/0/0.15/0.8/-5%
      } {
        \path[draw] (\s) edge[bend right = \angle] node[pos=\a, fill=white, inner sep=0.4pt] {\small \w} node[pos=\b, fill=white, inner sep=0.4pt] {\small \v} (\t);
      }
    \end{tikzpicture}
  \end{minipage}
  \begin{minipage}{.43\linewidth}
    \[\begin{array}{@{}l@{}l@{\;}l@{}l@{}}
      \agent~1\colon & a\pref b \pref c, &      \agent~a\colon & 2\pref 1,\\
      \agent~2\colon & b\pref a,   &   \agent~b\colon & 1\pref 3 \pref 2,\\
      \agent~3\colon & c\pref b \pref d,   &    \agent~c\colon & 1 \pref 4\pref 3,\\
      \agent~4\colon & d\pref c \pref e, &  \agent~d\colon & 3\pref 4 \pref 5, \\
      \agent~5\colon & d, &  \agent~e\colon &  4.
      \end{array} \]
  \end{minipage}
  \par}

\noindent It admits three \emph{stable integral} matchings~$N_1, N_2, N_3$, where
\begin{inparaenum}[(1)]
  \item $N_1(1,a)=N_1(2,b)=N_1(3,c)=N_1(4,d)=1$,
  \item $N_2(1,b)=N_2(2,a)=N_2(3,c)=N_2(4,d)=1$, and
  \item $N_3(1,b)=N_3(2,a)=N_3(3,d)=N_3(4,c)=1$,
\end{inparaenum}
all remaining edges are set to zero.
Matching~$M_1$ with $M_1=0.5\cdot N_1+0.5\cdot N_2$ is \ostable, \cstable, and \lstable.
In terms of \linearstability, matching~$M_2$ with $M_2=(1/2+\varepsilon)\cdot N_1 + (1/2-\varepsilon)\cdot N_3$ and $0<\varepsilon<1/6$ is \lstable, but it is neither \ostable nor \cstable: Edge $\{3,b\}$ is both \oblocking and \cblocking~$M_2$.
In terms of \cardinalstability, matching~$M_3$~(marked in red), where $M_3(1,b)=M_2(2,a)=1$, $M_3(3,c)=M_2(3,d)=M_3(4,c)=M_3(4,d)=M_3(4,e)=M_3(5,d)=1/3$, and all remaining edges are set to zero, is \cstable.
\iflong For instance, $\util_{M_3}(3)=1$ and $\util_{M_3}(c)=1/3$. \fi
$M_3$ is, however, neither \lstable nor \ostable: Edge~$\{3,d\}$ is both \lblocking and \oblocking~$M_3$.
Observe that in $M_3$ every agent is matched although no stable integral matching can match agent~$5$ or agent~$e$.
\end{example}

\subsection{Computational problems}\label{sub:comp-problems}
\looseness=-1
We focus on two types of decision problems, 
one aiming for maximizing the number of fully matched agents,
and the other aiming for maximizing social welfare.
For this, given a graph~$G$$=$$(V, E)$ with satisfaction function~$\sat$$\colon V\times V \to \Q_{\ge 0}$,
and given a fractional matching~$M$ in~$G$, 
let $\fullymatched(M)$ and $\wel_{\sat}(M)$ denote
\ifshort
the following:
\myemph{$\fullymatched(M)$$\coloneqq$$|\{x$$\in$$V$$\mid$$\sum_{y\in N_G(x)}$$M(x,y)$$=$$1\}|$} and \myemph{$\wel_{\sat}(M)\coloneqq\sum_{v\in V}\util_{\sat,M}(v).$}
\else
 the number of fully matched agents and the sum of utilities of the agents under~$M$: \begin{align*}
   \fullymatched(M)\coloneqq & |\{x\in V \mid \sum_{y\in N_G(x)}M(x,y)=1\}|, \text{ and }\\
  \wel_{\sat}(M)\coloneqq & \sum_{v\in V}\util_{\sat,M}(v).
 \end{align*}
\fi%
If $\sat$ is clear from the context then we drop it in~$\wel_\sat$.
\iflong

\else %
\fi The problems are defined as follows, where~$\Pi \in \{\text{\osm, \csm}\}$\footnote{We omit \linearstability since both problems for \linearstability can be formulated as linear programs and are hence polynomial.}:
\probdef{Max-Full $\Pi$ Matching~(Max-Full $\Pi$)}{%
  A graph $G = (V,E)$, a satisfaction function $\sat\colon V\times V \to \mathds{R}_{\ge 0}$, and a non-negative integer~$\fullnum$.%
}{%
  \ifshort
  Does $(G, \sat)$ admit a $\Pi$~$M$ with $\fullymatched(M)$$\ge$$\fullnum$?%
  \else Does $(G, \sat)$ admit a $\Pi$ matching~$M$ under which at least $\fullnum$~agents are fully matched, i.e., $\fullymatched(M)\ge \fullnum$?%
  \fi
}

\probdef{Max-Welfare $\Pi$ Matching~(Max-Welfare $\Pi$)}{%
  A graph $G$$=$$(V,E)$, a satisfaction function $\sat\colon V$ $\times$ $V$ $\to$ $\mathds{R}_{\ge 0}$, and a non-negative real~$\welfare$ $\in$ $\R_{\ge 0}$.%
}{%
  \ifshort Does $(G, \sat)$ admit a $\Pi$~$M$ with $\wel(M)\ge \welfare$?%
  \else
   Does $(G, \sat)$ admit a $\Pi$ matching~$M$ with welfare at least $\welfare$, i.e., $\wel(M)\ge \welfare$?%
  \fi
}

\begin{proposition}[\appsymb%
  ]
  \POSFM, \EOSFM, \PCSFM, and \ECSFM are contained in \NP.
\end{proposition}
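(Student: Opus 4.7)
The plan is to prove containment in \NP\ by exhibiting, for each of the four problems, a polynomial-size certificate together with a polynomial-time verification procedure. The natural certificate is a fractional matching~$M$, augmented by a set $S\subseteq V$ of agents claimed to be fully matched for the Max-Full variants. Verification is routine: one computes $\sum_{u \in N_G(v)} M(v,u)$ at each agent~$v$ to check feasibility and full-matching claims; one traverses every edge $\{u,v\}\in E$ to check that it is neither \cblocking (via $\util_M(u)$ and~$\util_M(v)$) nor \oblocking (via $M(u,\wpref v)$ and $M(v,\wpref u)$), as appropriate; and one reads off the objective by computing $\wel(M)$ or~$|S|$. The only nontrivial point is therefore to argue that every yes-instance admits such a certificate whose bit length is polynomial in the input size.

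For this, fix a yes-instance with witness matching~$M^\star$. For each edge $\{u,v\}\in E$, at least one endpoint witnesses the absence of blocking at~$M^\star$; record a function $f\colon E \to V$ with $f(\{u,v\})\in\{u,v\}$ pointing to such an endpoint. For the Max-Full variants, additionally fix a set $S^\star\subseteq V$ of exactly~$\fullnum$ fully matched agents of~$M^\star$. Consider the polytope $P\subseteq[0,1]^E$ of fractional matchings~$M$ satisfying (i) the standard matching inequalities $\sum_{u} M(v,u)\le 1$ for each~$v$, (ii) for each edge the single linear inequality prescribed by~$f$ (for \cardinalstability this is $\util_M(f(\{u,v\})) \ge \sat(f(\{u,v\}), w)$ where $w$ is the other endpoint, and analogously $M(f(\{u,v\}),\wpref w)\ge 1$ for \ordinalstability), and (iii) either $\sum_u M(v,u) = 1$ for all $v\in S^\star$ in the Max-Full case, or the welfare inequality $\wel(M)\ge \welfare$ in the Max-Welfare case. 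Every constraint is linear in~$M$ with rational coefficients whose bit length is polynomial in the input.

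By construction, $M^\star\in P$, so $P$ is non-empty; since it is bounded, it admits a vertex. A standard fact from polyhedral theory (Cramer's rule applied to any basis of tight constraints) ensures that every such vertex has bit complexity polynomial in the encoding size of the defining inequalities, and hence polynomial in the input. Any such vertex~$M^v$ satisfies all constraints of~$P$: it is a valid fractional matching; the $f$-prescribed inequalities rule out blocking at every edge, so~$M^v$ is \cstable (respectively~\ostable); and it meets the prescribed objective. Hence $M^v$ (together with~$S^\star$ for the Max-Full variants) is a legitimate polynomial-size \NP-certificate, establishing membership of all four problems in~\NP. The main obstacle, and essentially the only one, is the polynomial bit-length bound on vertices of the rational polytope~$P$, which is classical.
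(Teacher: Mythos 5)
Your proof is correct, and it rests on the same structural observation as the paper's: once you fix, for every edge, which endpoint is to satisfy the non-blocking condition (your function~$f$, the paper's binary variables~$y_{\{u,v\}}$) and, for the Max-Full variants, which agents are to be fully matched (your~$S^\star$, the paper's variables~$z_u$), all remaining requirements become a system of linear inequalities over the matching values. Where you diverge is in how this linearity is exploited. The paper phrases the whole thing as a mixed integer linear program, lets the non-deterministic machine guess the binary variables, and then invokes polynomial-time solvability of the resulting LP; the matching itself never appears in the certificate. You instead make the matching the certificate: you argue that the feasible region of that LP is a non-empty bounded rational polytope, hence has a vertex of polynomially bounded encoding length (Cramer's rule on a basis of tight constraints), and such a vertex can be checked directly by an elementary verifier that recomputes utilities, the sums $M(x,\wpref y)$, the welfare, and the full-matching claims. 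Both final steps are classical and closely related (polynomial-time LP solvability is itself usually derived from the same bit-complexity bound), so neither argument is stronger than the other; your version has the small benefits that the verifier needs no LP solver and that it makes explicit that optimal \cstable{} and \ostable{} matchings can always be taken with polynomially many bits, while the paper's version is a slightly more direct description of a non-deterministic polynomial-time algorithm. One cosmetic point shared by both arguments: for the Max-Welfare variants one must read the threshold~$\welfare$ (and the satisfactions) as rationals given in the input, since otherwise neither the welfare constraint nor the verifier's comparison makes sense.
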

\iflong
\begin{proof}
  To show NP-containment, we observe that our problems can be formulated via mixed integer linear programs~(MILP), which are contained in \NP~\cite{MIQP}. A similar MILP approach has already been used for \cardinalstability by \citet{CFKV19}, but they did not address the issue regarding NP-containment as they only considered the maximization variant of \ECSFM.

  In fact, our problems reduce in polynomial time to a very restricted variant of MILP for which all integer variables have \emph{binary} values.
  Due to this, we can directly provide a polynomial-time \emph{non-deterministic} algorithm to solve our problem:
  We guess non-deterministically in polynomial-time the values of the integer variables and
  solve the resulting linear program (LP) in polynomial time.
  For the sake of completeness, we describe this approach here.
  To this end, let us first describe the MILP for our problems.
  Let $I=(G,\sat)$ be an instance with graph~$G=(V,E)$ and cardinal preferences~$\sat$.
  A fractional matching~$M$ of $G$ can be encoded via an LP as follows.
  For each edge $\{u,v\}$, we introduce a \myemph{fractional variable $x_{\{u,v\}}$} to denote the matching value assigned to edge~$\{u,v\}$ by a solution matching~$M$.
  \begin{alignat}{3}
    \sum_{v \in N_G(u)} x_{\{u,v\}} &\leq 1,&~~~~~~& \forall u \in V\tag{LP1}\label{eq:matchingLP1}\\
    x_{\{u,v\}} & \in \mathds{R}_{\ge 0},&&  \forall \{u,v\} \in E\tag{LP2}\label{eq:matchingLP2}
\end{alignat}
To encode the \cardinalstability (resp.\ \ordinalstability) of $M$,
we need to make sure that no edge~$\{u,v\}\in E$ is \cblocking (resp.\ \oblocking) $M$. %
To formulate these constraints, for each edge~$\{u,v\}$
we introduce a binary variable $y_{\{u,v\}}$ and add the following three MILP constraints for \cardinalstability: %
\begin{align}
       \forall \{u,v\}\in E\colon \nonumber\\
         y_{\{u,v\}} &\in \{0,1\},  \tag{\csm{1}} \label{eq:cblockingconstraint1}\\
         \sum_{\mathclap{w\in N_G(u)}}\sat(u,w)\cdot x_{\{u,w\}} & \ge \sat(u,v)\cdot y_{\{u,v\}},  \tag{\csm{2}}%
        \label{eq:cblockingconstraint2} \\
     \sum_{\mathclap{w\in N_G(v)}}\sat(v,w)\cdot x_{\{v,w\}} & \ge \sat(v,u)\!\cdot\! (1\!-\!y_{\{u,v\}}).\tag{\csm{3}}\label{eq:cblockingconstraint3}
\end{align}
Note that the intended meaning of~$y_{\{u,v\}}=1$ is that the utility of agent~$u$ under $M$ should be at least~$\sat(u,v)$, while $y_{\{u,v\}}=0$ means that the utility of agent~$v$ under $M$ should be at least~$\sat(v,u)$.

  For \ordinalstability, we instead add the following three MILP constrains: %
  \begin{align}
      \forall \{u,v\}\in E\colon \nonumber\\
        y_{\{u,v\}} &\in \{0,1\}, \tag{\osm{1}} \label{eq:oblockingconstraint1}\\
       \sum_{w\in \bettere{u}{v}}x_{\{u,w\}} & \ge y_{\{u,v\}}, \tag{\osm{2}} \label{eq:oblockingconstraint2}\\
        \sum_{w\in \bettere{v}{u}}x_{\{v,w\}} & \ge 1-y_{\{u,v\}}.\tag{\osm{3}} \label{eq:oblockingconstraint3}
    \end{align}
    Note that the intended meaning of~$y_{\{u,v\}}=1$ is that the sum of values of matching agent~$u$ to someone better or equal to~$v$ should be at least one,
    while $y_{\{u,v\}}=0$ means that the sum of values of matching agent~$v$ to someone better or equal to~$u$ should be at least one,
    
    To solve \PCSFM (resp.\ \POSFM) with objective value~$\fullnum$, we introduce one more \myemph{binary variable~$z_u$} for each agent~$u\in V$
    to specify whether it will be fully matched and add the following MILP constraints: %
    \begin{align}
      \sum_{u\in V}z_{u} &\ge \fullnum,&\tag{FULL1}\label{eq:full1}\\
      \sum_{v\in N_G(u)} x_{\{u,v\}} &\geq z_{u}, ~~~~~~~~~~~~~& \forall u\in V, \tag{FULL2}\label{eq:full2}\\
      z_{u} &\in \{0,1\}, ~~~~~~~~~~~~~ &\forall u\in V. \tag{FULL3}\label{eq:full3}
    \end{align}
    to the constraints~\eqref{eq:matchingLP1}--\eqref{eq:matchingLP2} and \eqref{eq:cblockingconstraint1}--\eqref{eq:cblockingconstraint3}
    (resp.\ to the constraints~\eqref{eq:matchingLP1}--\eqref{eq:matchingLP2} and \eqref{eq:oblockingconstraint1}--\eqref{eq:oblockingconstraint3}).
    
    To solve \ECSFM (resp.\ \EOSFM) with objective value~$\welfare$, we only add the following constraint:
    \[
      \sum_{\{u,v\}\in E}(\sat(u,v)+\sat(v,u))\cdot x_{\{u,v\}} \ge \welfare.\tag{WELFARE}\label{eq:welfare}
    \]
    to the constraints~\eqref{eq:matchingLP1}--\eqref{eq:matchingLP2} and \eqref{eq:cblockingconstraint1}--\eqref{eq:cblockingconstraint3}
    (resp.\ to the constraints~\eqref{eq:matchingLP1}--\eqref{eq:matchingLP2} and \eqref{eq:oblockingconstraint1}--\eqref{eq:oblockingconstraint3}).
    
    This completes the description of the MILPs for our problems. 
    As already discussed at the beginning of the proof,
    since our MILPs have $O(|E|+|V|)$ binary variables~$y_{\{u,v\}}$, $\{u,v\}\in E$  and $z_{u}$, $u\in V$,
    we guess their values and check in polynomial time whether the guessed values combined with the resulting LP constraints %
    are feasible.
    This shows that our decision problems belong to NP.
\end{proof}
\fi
\noindent\looseness=-1
By the above containment results, when we show \NPC{ness} later it suffices to prove NP-hardness.

\section{Structural properties}%
\label{sec:structural}
We now discuss relations among\ifshort~(see \cref{fig:conceptsrelation}) \fi~ and existence of fractional matchings regarding the three stability concepts, and then show that \osm{s} behave similarly to \lsm{s} in terms of lattice property. %
\iflong First, we observe that \ordinalstability is a notion stronger than \linearstability and \cardinalstability, while \cardinalstability and \linearstability are not comparable to each other~(see \cref{fig:conceptsrelation}).
\fi
\begin{observation}[\appsymb]\label{obs:ostable->cstable}
\iflong  \begin{compactenum}[(i)]
\else  \begin{inparaenum}[(i)]\fi
    \item\label{ordinal->linear} Every \osm of a graph with cardinal preferences is a \lsm and a \csm.
    \item\label{linear-nocardinal} There exists a graph~$G$ with strict preferences such that
    $G$ admits a \lsm which is neither an \osm nor a \csm and
    admits a \csm which is neither an \lsm nor an \osm.
\iflong    \end{compactenum}
\else    \end{inparaenum}\fi
\end{observation}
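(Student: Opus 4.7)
The plan is to prove part~(i) by direct contrapositive arguments on the blocking-pair definitions in~\cref{def:util-bp-stability}, and to prove part~(ii) by invoking the three matchings already exhibited in~\cref{ex:cstable-not-always-ostable}. There is no substantial technical obstacle; the only point that deserves care is recognizing that $\bettere{u}{v}$ contains $v$ itself, which is what makes the key inequalities align.

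For \textbf{OSM implies LSM}, I take an edge $\{u,v\}$ that is not ordinally blocking, so without loss of generality $M(u,\wpref v) \geq 1$. Since $M(u,\wpref v) \leq \sum_{w \in N_G(u)} M(u,w) \leq 1$, equality holds and $M(u,\wpref v) = 1$. By reflexivity $u \in \bettere{v}{u}$, hence $M(v,\wpref u) \geq M(v,u) = M(u,v)$, and combining yields
\[ M(u,\wpref v) + M(v,\wpref u) - M(u,v) \geq 1 + M(u,v) - M(u,v) = 1,\]
so $\{u,v\}$ is not linearly blocking; the case $M(v,\wpref u) \geq 1$ is symmetric. For \textbf{OSM implies CSM}, I again argue by contraposition. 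Suppose $\{u,v\}$ is cardinally blocking, so $\util_M(u) < \sat(u,v)$. If $M(u,\wpref v) \geq 1$, every $w \in \bettere{u}{v}$ satisfies $\sat(u,w) \geq \sat(u,v)$, and so
\[ \util_M(u) = \sum_{w \in N_G(u)} \sat(u,w) \cdot M(u,w) \geq \sat(u,v) \cdot M(u,\wpref v) \geq \sat(u,v),\]
a contradiction. Hence $M(u,\wpref v) < 1$; the analogous bound on $v$ gives $M(v,\wpref u) < 1$, so $\{u,v\}$ is ordinally blocking.

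For part~(ii), the counterexamples are already laid out in~\cref{ex:cstable-not-always-ostable}. The matching $M_2 = (1/2+\varepsilon)N_1 + (1/2-\varepsilon)N_3$ is asserted there to be linearly stable (which follows from the general observation that convex combinations of integrally stable matchings preserve linear stability, since each integrally stable $N$ satisfies $N(u,\wpref v) + N(v,\wpref u) - N(u,v) \geq 1$ edgewise, an inequality linear in $N$), yet the example explicitly identifies $\{3,b\}$ as both an ordinal and a cardinal blocking pair of $M_2$. Symmetrically, $M_3$ from the example is cardinally stable by a direct utility check, while $\{3,d\}$ is both a linear and an ordinal blocking pair of $M_3$. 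Assembling these facts closes the proof.
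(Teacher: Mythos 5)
Your proof is correct and follows essentially the same route as the paper: part~(i) is the same inequality manipulation on the blocking-pair definitions (you phrase the \csm{} direction contrapositively, the paper argues it directly, but the chain $\util_M(u)\ge \sat(u,v)\cdot M(u,\wpref v)$ is identical), and part~(ii) invokes exactly the matchings $M_2$ and $M_3$ of \cref{ex:cstable-not-always-ostable}, just as the paper does. Your added justification that convex combinations of integrally stable matchings are \lstable{} (linearity of the \lblocking{} inequality) is a small correct bonus beyond the paper's bare reference to the example.
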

\iflong
\noindent
The implication from \ordinalstability\ to \linearstability\ in Statement~\eqref{ordinal->linear} has been proved by \citet{aziz_random_2019} in the marriage setting (see the statement that ex-ante weak stability implies robust ex-post weak stability in their Theorem~3.)
A counterexample for the statement that \linearstability\ implies \ordinalstability\ (which is part of Statement~\eqref{linear-nocardinal} above) has also been given by \citet{aziz_random_2019}, see their Theorem~3 as well.
\begin{proof}[Proof of \cref{obs:ostable->cstable}]
  The first of Statement~\eqref{ordinal->linear} regarding \osm{s} and \lsm{s} follows directly from the definition.
  Now, to show the second part of Statement~\eqref{ordinal->linear}, let $M$ be an \osm of an instance~$I=(G,\sat)$ with graph~$G$ and cardinal preferences~$\sat$. %
  Consider an arbitrary edge~$\{u,v\}\in E(G)$.
  Since $M$ is \ostable it follows that $\{u,v\}$ is not an \oblocking edge.
  That is,

 {\centering $M(u,\wpref v) \ge 1,~~~~~\refstepcounter{equation}(\theequation){\label{eq:oblocking1}} \text{~~~ or }
    M(v,\wpref u) \ge 1. ~~~~~\refstepcounter{equation}(\theequation){\label{eq:oblocking2}}$ \par}

 \noindent   If \eqref{eq:oblocking1} holds, then it follows that
  \begin{align*}
    \util_M(u) %
    & \ge \sum_{w \in \bettere{u}{v}}\sat(u,w) \cdot M(u,w) \stackrel{\eqref{eq:oblocking1}}{\ge}  \sat(u,v).
  \end{align*}
  If \eqref{eq:oblocking2} holds, then it follows that 
  \begin{align*}
    \util_M(v) %
               & \ge \sum_{w \in \bettere{v}{u}}\sat(v,w) \cdot M(v,w) \stackrel{\eqref{eq:oblocking2}}{\ge}  \sat(v,u).
  \end{align*}
  Hence, $\{u,v\}$ is not \cblocking~$M$, implying that $M$ is \cstable.

  In the instance given in \cref{ex:cstable-not-always-ostable} matchings $M_2$ and $M_3$ show Statement~\eqref{linear-nocardinal}.
\end{proof}
\fi
\begin{figure}[t!]
  \centering
\begin{tikzpicture}[>=stealth', shorten <= 1pt, shorten >= 1pt]
  \tikzstyle{concept} = [rounded corners, draw, rectangle, text width=12ex, inner sep=1pt, minimum height=5ex, align=center]
  \def \oy {6ex}
  \node[concept] (integralsm) {stable and integral};
  \node[concept, right = \oy of integralsm] (osm) {\ostable};
  \node[concept, above right = -10pt and \oy of osm] (lsm) {\lstable};
  \node[concept, below right = -10pt and \oy of osm] (csm) {\cstable};

  \foreach \s / \t in {integralsm/osm,osm/lsm,osm/csm} {
    \draw[->] (\s) -- (\t);
  }
\end{tikzpicture}
\caption{Relation between the three stability concepts, where ``$\alpha \rightarrow \beta$'' means that ``an $\alpha$ matching is also $\beta$''.}\label{fig:conceptsrelation}
\end{figure}
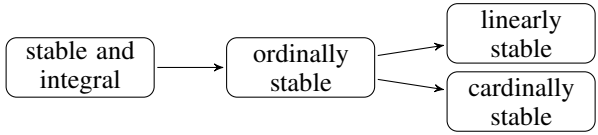

The following concept of \myemph{stable partitions}, introduced by \citet{Tan1991}, turns out to be very useful for showing the existence of \ostable matchings.

\iflong \begin{definition}[Stable partitions and cycles, their corresponding matchings] \label{def:stable-partitions}
  \else\begin{definition} \label{def:stable-partitions}
    \fi
  A \myemph{stable partition} of $(G=(V,E),\sat)$ with $\sat$ being strict is a permutation~$\pi\colon V$$\to$$V$ on the vertices,
  which satisfies the following two conditions for each vertex~$v_i\in V\colon$
\iflong  \begin{compactenum}[(1)]
    \item\label{stable-partition:1} if $\pi(v_i) \neq \pi^{-1}(v_i)$, then $\{v_i,\pi(v_i)\}, \{v_i,\pi^{-1}(v_i)\}\in E$ and $\sat(v_i,\pi(v_i)) > \sat(v_i,\pi^{-1}(v_i))$; 
    \item\label{stable-partition:2} for each vertex~$v_j$ adjacent to~$v_i$, if $\pi(v_i)=v_i$ or $\sat(v_i,v_j) > \sat(v_i,\pi^{-1}(v_i))$, then $\sat(v_j, \pi^{-1}(v_j)) > \sat(v_j, v_i)$.
  \end{compactenum}
  \else
  \begin{inparaenum}[(1)]
    \item\label{stable-partition:1} if $\pi(v_i) \neq \pi^{-1}(v_i)$, then $\{v_i,\pi(v_i)\}, \{v_i,\pi^{-1}(v_i)\}\in E$ and $\sat(v_i,\pi(v_i)) > \sat(v_i,\pi^{-1}(v_i))$; 
    \item\label{stable-partition:2} for each~$v_j$ adjacent to~$v_i$, if $\pi(v_i)=v_i$ or $\sat(v_i,v_j) > \sat(v_i,\pi^{-1}(v_i))$, then $\sat(v_j, \pi^{-1}(v_j)) > \sat(v_j, v_i)$.
  \end{inparaenum}\fi
  We call $v_i$ a \myemph{singleton} if $\pi(v_i)=v_i$.
  A stable partition~$\pi$ can be decomposed into \myemph{cycles}, \myemph{singletons}, and \myemph{transpositions} (i.e., disjoint edges).
  Here, a subpermutation~$\sigma$ on a subset~$V'\subseteq V$ \iflong of vertices \fi is called a \myemph{cycle} if the edge set~$\{\{v,\sigma(v)\} \mid v \in V'\}$ forms a cycle in~$G$; we define the \myemph{length} of a cycle~$\sigma$ to be \iflong the size of $V'$. \else $|V'|$.\fi
  
  \looseness=-1
  Let $\pi$ be a stable partition.
  Define a matching \myemph{$M^{\pi}$} for $G$ \myemph{corresponding to}~$\pi$ as follows.
\iflong   \begin{compactenum}[(a)]
    \item For each non-singleton~$v_i\in V$ (i.e., $\pi(v_i)\neq v_i$),
    if $\pi(v_i)=\pi^{-1}(v_i)$, meaning that $(v_i,\pi(v_i))$ forms a transposition in~$\pi$, then $M^{\pi}(v_i,\pi(v_i))\coloneqq 1$; otherwise
    $M^{\pi}(v_i,\pi(v_i)) = M(v_i,\pi^{-1}(v_i))\coloneqq 0.5$.
    \item For each remaining edge~$e$, let $M^{\pi}(e)\coloneqq 0$.\hfill \defqed
  \end{compactenum}
  \else
  \begin{inparaenum}[(a)]
    \item For each non-singleton~$v_i\in V$ (i.e., $\pi(v_i)\neq v_i$),
    if $\pi(v_i)$$=$$\pi^{-1}(v_i)$, then let $M^{\pi}(v_i,\pi(v_i))\coloneqq 1$; otherwise
    let $M^{\pi}(v_i,\pi(v_i))=M(v_i,\pi^{-1}(v_i))\coloneqq 0.5$.
    \item For each remaining edge~$e$, let $M^{\pi}(e)\coloneqq 0$.\hfill \defqed
  \end{inparaenum}\fi
\end{definition}

\begin{example}
 \iflong Consider the instance from the introduction. %
  There is only one stable partition~$\pi=(a,b,c)(d,e)(f)$.
  \else
  There is only one stable partition~$\pi=(a,b,c)(d,e)(f)$ in the instance from the introduction. \fi
  Since it consists of an odd cycle of length three, namely $(a,b,c)$, the instance does \emph{not} admit a stable and \emph{integral} matching.
  The matching marked in green is \ostable, and hence \cstable and \lstable. %
  Note that this matching is exactly $M^{\pi}$ that we define for $\pi$ in \cref{def:stable-partitions}.
\end{example}

\iflong %
Singleton agents of a graph with strict preferences are unique in the following sense. %

 \begin{proposition}[\cite{Tan1991}]\label{prop:singletons}
  Let $G$ be a graph with $n$ vertices and strict preferences~$\sat$.
  Then, $(G,\sat)$ admits a stable partition, which can be found in $O(n^2)$~time.
  Moreover, every stable partition of $(G,\sat)$ has the same set of singleton agents. %
\end{proposition}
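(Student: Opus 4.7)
The plan is to prove existence and the algorithmic bound together by adapting Irving's classical algorithm for the Stable Roommates problem, following Tan's two-phase construction, and then to establish uniqueness of singletons by a contradiction argument that chains through the defining conditions of a stable partition.

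For the existence and the $O(n^2)$ bound, I would describe Tan's algorithm as a two-phase procedure. In Phase~1, run a proposal-and-rejection process: every agent sequentially proposes to its most preferred remaining neighbor; each recipient keeps the best proposer and rejects the rest; rejected proposals are deleted from the preference lists. This phase reduces the instance to a ``Phase-1 table'' in which any pair $(u,v)$ that survives satisfies: if $v$ is first on $u$'s remaining list, then $u$ is last on $v$'s remaining list. This is the direct analogue of the Gale--Shapley phase of Irving's algorithm and runs in $O(n^2)$ time because each deletion takes constant amortized work and there are at most $O(n^2)$ pairs. In Phase~2, I would explain the notion of a \emph{rotation}: a sequence $(a_0,b_0),(a_1,b_1),\ldots,(a_{k-1},b_{k-1})$ such that $b_i$ is first and $b_{i+1 \bmod k}$ is second on $a_i$'s list. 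The key departure from Irving's algorithm is that odd-length rotations, which would cause failure in the classical setting, are instead incorporated into the output partition as odd cycles, while even-length rotations are eliminated by deleting all pairs $\{a_{i+1},b_i\}$ as usual. The algorithm terminates when no agent has more than one entry on its list; the resulting pairing together with the accumulated odd cycles defines a permutation $\pi$ on $V$. Verifying the two conditions of \cref{def:stable-partitions} for $\pi$ follows from the invariants of Phase~1 (no deleted pair can block) and the structure of rotations used in Phase~2. The total running time is $O(n^2)$ since each rotation step can be amortized across deletions.

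For uniqueness of singletons, let $\pi_1$ and $\pi_2$ be two stable partitions of $(G,\sat)$ and suppose for contradiction that some agent $v_0$ is a singleton under $\pi_1$ but not under $\pi_2$. Because $\pi_2(v_0)\neq v_0$, condition~(\ref{stable-partition:1}) of \cref{def:stable-partitions} gives $\sat(v_0,\pi_2(v_0)) > \sat(v_0,\pi_2^{-1}(v_0))$, so $\pi_2(v_0)$ is an agent that $v_0$ strictly prefers to $\pi_2^{-1}(v_0)$. Applying condition~(\ref{stable-partition:2}) to $\pi_1$ at the singleton $v_0$, for every neighbor $v_j$ of $v_0$ we have $\sat(v_j,\pi_1^{-1}(v_j)) > \sat(v_j,v_0)$. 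In particular, taking $v_j = \pi_2(v_0)$ yields $\sat(v_j,\pi_1^{-1}(v_j)) > \sat(v_j,v_0) = \sat(v_j,\pi_2^{-1}(v_j))$, so $v_j$ is also not a singleton under $\pi_1$, and the agent $\pi_1^{-1}(v_j)$ is strictly preferred by $v_j$ to $\pi_2^{-1}(v_j)$. I would then iterate this step: the new agent $\pi_1^{-1}(v_j)$ produced at $v_j$ plays, under $\pi_1$, the same ``better than under $\pi_2$'' role that $\pi_2(v_0)$ did at $v_0$, so applying condition~(\ref{stable-partition:2}) this time to $\pi_2$ at $\pi_1^{-1}(v_j)$ generates yet another agent that strictly improves over $\pi_1$, alternating the roles of $\pi_1$ and $\pi_2$. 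Because all preferences are strict, no agent can be revisited in the same role, yet the vertex set is finite, forcing a contradiction.

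The algorithmic part is essentially bookkeeping on top of Irving's algorithm; the conceptual obstacle there is giving a clean description of Phase~2 that makes odd cycles first-class outputs rather than failure signals. The main genuinely delicate step is the uniqueness of singletons: the chain argument must be set up carefully so that the ``preference improvement'' invariant alternates consistently between $\pi_1$ and $\pi_2$, and so that the strict-preference assumption is used to prevent repetition. I would therefore spend the bulk of the detailed proof formalizing this alternating chain and the resulting contradiction.
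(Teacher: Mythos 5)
First, note that the paper gives no proof of this proposition: it is imported wholesale from Tan~(1991), so the only comparison available is with Tan's original treatment. Your existence-and-$O(n^2)$ part is essentially a sketch of Tan's modification of Irving's algorithm, and the genuinely hard content there --- verifying that the permutation assembled from the surviving pairs and the retained odd rotations satisfies conditions~(1) and~(2) of \cref{def:stable-partitions} --- is exactly what you compress into ``follows from the invariants of Phase~1 and the structure of rotations''. As an appeal to the literature this matches what the paper itself does; as a standalone proof it is incomplete, since that verification is the bulk of Tan's argument.

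The genuine gap is the finish of your uniqueness argument. The alternating chain can be made to work, but your termination claim ``because all preferences are strict, no agent can be revisited in the same role'' is not justified and is not what strictness gives you. Strictness only rules out an agent appearing in \emph{both} roles (it cannot strictly prefer its $\pi_1$-predecessor to its $\pi_2$-predecessor and vice versa); a same-role revisit merely makes the chain eventually periodic, and periodicity alone is not absurd --- two stable partitions may genuinely differ on an even cycle, around which agents alternate roles forever. The contradiction must instead be forced back to the starting singleton: every agent reached after $v_0$ is a non-singleton in both $\pi_1$ and $\pi_2$ (its relevant satisfaction is strictly positive), and since $\pi_1^{-1},\pi_2^{-1}$ are injective, a same-role repetition $c_i=c_j$ propagates backwards step by step until it forces $v_0$ to reappear later in the chain, which is impossible because $v_0$ is a $\pi_1$-singleton. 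For this backward propagation to close, the chain must be oriented uniformly by \emph{inverse} maps, i.e.\ start with $c_1:=\pi_2^{-1}(v_0)$ (condition~(2) of $\pi_1$ at the singleton $v_0$, then condition~(1) of $\pi_2$ at $c_1$, yield that $c_1$ strictly prefers $\pi_1^{-1}(c_1)$ to $\pi_2^{-1}(c_1)$), whereas you start with $\pi_2(v_0)$: in your orientation the predecessor of $c_1$ is recovered via $\pi_2^{-1}$ while the predecessor of every later odd-index element is recovered via $\pi_2$, so the propagation stalls exactly at index~$1$ and the case $c_1=c_j$ is left open. Two smaller points: your invocation of condition~(1) at $v_0$ fails when $v_0$ lies in a $\pi_2$-transposition (then $\pi_2(v_0)=\pi_2^{-1}(v_0)$ and no strict inequality is asserted), though that inequality is never needed; and along the chain you must repeatedly use condition~(1) to convert ``prefers $x$ to $\pi(c)$'' into ``prefers $x$ to $\pi^{-1}(c)$'', a step your sketch glosses over. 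With the re-oriented chain, the alternating invariant, and the two-case repetition analysis (opposite roles: contradicts strictness; same role: propagate back to $v_0$), your approach does yield the singleton invariance; note also that Tan proves the stronger statement that all stable partitions have the same odd cycles, of which this is the special case of cycles of length one.
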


 \else \refstepcounter{theorem} \fi 
We will see that the matchings corresponding to stable partitions are \ostable\ifshort. \else, even when ties are present. \fi
Before we show this, we note that the case without ties is already observed by \citet{AF03Scarfs,BiroCechFleiner08stablehalfmatchings}. %
\begin{algorithm}[t!]
  \DontPrintSemicolon
  \caption{Compute an \osm for a graph~$G$ with cardinal preferences~$\sat$.}\label[algorithm]{alg:arbitrary-osm}
  \small
  Compute the preference lists~$\Pot$ from $\sat$\label{alg:osm-ties}

  $\hat{\sat} \leftarrow$ Break ties in $\sat$ arbitrarily\label{alg:osm-noties}
  
  Compute a stable partition $\pi$  for $(G,\hat{\sat})$ and the corresponding matching~$M^{\pi}$ according to  \cref{def:stable-partitions} \label{alg:osm-partition}
  
  \Return $M^{\pi}$
\end{algorithm}

\iflong
\begin{proposition}[\cite{AF03Scarfs,BiroCechFleiner08stablehalfmatchings}]\label{lem:sp-matching->osm}
  Let $G=(V,E)$ be a graph with strict preferences~$\sat$.
  The matching~$M^{\pi}$ as defined in \cref{def:stable-partitions} is an \osm and a \csm for $(G,\sat)$. %
\end{proposition}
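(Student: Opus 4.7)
The plan is to show that $M^\pi$ is \ostable; \cardinalstability then follows for free from \cref{obs:ostable->cstable}\eqref{ordinal->linear}. So the whole task reduces to proving that no edge of~$G$ is \oblocking~$M^\pi$.

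Fix an arbitrary edge~$\{u,v\}\in E$. The structure of~$M^\pi$ is that on a transposition $\{x,\pi(x)=\pi^{-1}(x)\}$ the matching value is~$1$, on a cycle edge $\{x,\pi(x)\}$ it is~$0.5$, and all other edges carry value~$0$. Thus for any agent~$x$ that is not a singleton, the only edges contributing to sums of the form $M^\pi(x,\wpref\cdot)$ are $\{x,\pi(x)\}$ and $\{x,\pi^{-1}(x)\}$, and together they contribute exactly~$1$ at~$x$. For a singleton~$x$, all sums $M^\pi(x,\wpref\cdot)$ equal~$0$.

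The argument then proceeds by a short case distinction driven by conditions~\eqref{stable-partition:1}--\eqref{stable-partition:2} of \cref{def:stable-partitions} (using strictness of~$\sat$ to avoid ties):
\begin{itemize}
\item \textbf{Case 1:} $u$ is not a singleton and $\pi^{-1}(u)\wpref_u v$. Then $\pi^{-1}(u)\in\bettere{u}{v}$, and by condition~\eqref{stable-partition:1} (applied along the cycle/transposition containing~$u$), also $\pi(u)\in\bettere{u}{v}$. Whether $u$ sits on a cycle or a transposition, the two contributions of~$M^\pi$ at~$u$ to vertices in~$\bettere{u}{v}$ sum to~$1$, so $M^\pi(u,\wpref v)\ge 1$.
\item \textbf{Case 2:} $u$ is a singleton, or $u$ is non-singleton with $v\pref_u\pi^{-1}(u)$. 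In either sub-case condition~\eqref{stable-partition:2} applies with $v_i=u$, $v_j=v$, yielding $\pi^{-1}(v)\pref_v u$. In particular $\pi^{-1}(v)\ne v$, so $v$ is not a singleton. By condition~\eqref{stable-partition:1} we also get $\pi(v)\wpref_v\pi^{-1}(v)\pref_v u$. As in Case~1, summing $M^\pi$'s contributions at~$v$ towards $\bettere{v}{u}$ yields~$1$, so $M^\pi(v,\wpref u)\ge 1$.
\end{itemize}

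Because strict preferences together with non-singletonness force the partner comparison to land in exactly one of these two cases, every edge is non-\oblocking and $M^\pi$ is \ostable. The main thing to be careful about is the bookkeeping that the two partners $\pi(u)$ and $\pi^{-1}(u)$ are both well-defined neighbors and that edges they induce really lie in~$E$ (guaranteed by condition~\eqref{stable-partition:1}), and that the singleton sub-case of condition~\eqref{stable-partition:2} does not require~$\pi^{-1}(v)$ to exist in a problematic way (handled by noting $\sat(v,v)=0<\sat(v,u)$, so $v$ itself cannot play the role of $\pi^{-1}(v)$). I do not anticipate any real obstacle beyond this routine case analysis.
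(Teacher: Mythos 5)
Your proposal is correct and follows essentially the same route as the paper's argument (given there in the proof of \cref{osroommates:always}): the key step that a non-singleton $x$ with $\pi^{-1}(x)\wpref_x y$ satisfies $M^\pi(x,\wpref y)=1$ is exactly the paper's \cref{claim:all1}, and your two cases, driven by conditions~\eqref{stable-partition:1}--\eqref{stable-partition:2} of \cref{def:stable-partitions} and closed off via \cref{obs:ostable->cstable}\eqref{ordinal->linear}, merely repackage the paper's three-case contradiction argument as a direct proof. The only cosmetic imprecision is your remark that $\sat(v,v)=0<\sat(v,u)$; one may have $\sat(v,u)=0$, but the exclusion of a singleton $v$ still follows since $\sat(v,\pi^{-1}(v))>\sat(v,u)\ge 0$ is impossible with $\pi^{-1}(v)=v$.
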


In the following, we derive the same result for cardinal preferences using the notion of stable partitions.
\else \refstepcounter{theorem}  \fi
\begin{lemma}[\appsymb]\label{osroommates:always}
  Each graph on $n$~vertices and with cardinal preferences (and possibly ties) admits an \osm, and hence a \csm, that is half-integral and matches each matched agent fully.
  \cref{alg:arbitrary-osm} finds such a matching in $O(n^2)$ time.
\end{lemma}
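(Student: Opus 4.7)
The plan is to argue that \cref{alg:arbitrary-osm} returns a matching with all the stated properties. Correctness of the algorithm reduces to two things: the matching $M^\pi$ produced from the stable partition of the tie-broken instance is \ostable with respect to the \emph{original} preferences, and it has the claimed half-integrality and full-matching structure by construction.

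First I would handle the structural claim. By \cref{def:stable-partitions}, $M^\pi$ assigns value $1$ to each transposition edge, value $0.5$ to each of the two edges incident to a vertex on a proper cycle of $\pi$, and $0$ elsewhere. Hence $M^\pi$ is half-integral; furthermore, each non-singleton vertex~$v$ has $\sum_{u \in N_G(v)} M^\pi(v,u) = 1$, so every matched agent is fully matched. Since $G$ has no isolated vertices, \Cref{prop:singletons} applied to $(G,\hat\sat)$ produces the partition in $O(n^2)$ time, and the other steps of the algorithm are clearly $O(n^2)$, giving the claimed running time.

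The main step is stability transfer from $\hat\sat$ to $\sat$. Applying the cited result (\cref{lem:sp-matching->osm}) to the strict instance $(G,\hat\sat)$ shows that $M^\pi$ is \ostable with respect to $\hat\sat$. The key observation I would use is that for every pair of agents $u,v$ and every third agent $w$, if $w$ weakly prefers (in $\hat\sat$) $v$ to some agent, it also weakly prefers it under $\sat$, because breaking ties only refines the weak order. Formally, writing $\hat{\bettere{u}{v}}$ and $\bettere{u}{v}$ for the weak upper-contour sets under $\hat\sat$ and $\sat$, one has $\hat{\bettere{u}{v}} \subseteq \bettere{u}{v}$: any $w$ strictly above $v$ in $\sat$ stays strictly above in $\hat\sat$, any $w$ strictly below stays strictly below, and ties can only be resolved one way or the other. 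Consequently $M^\pi(u, \wpref v) \geq M^\pi(u, \hat\wpref v)$ and analogously for $v$. Therefore any pair \oblocking $M^\pi$ under $\sat$ would also \oblock it under $\hat\sat$, contradicting \ordinalstability under $\hat\sat$. Hence $M^\pi$ is \ostable under $\sat$, and by \cref{obs:ostable->cstable}\eqref{ordinal->linear} it is also \cstable.

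The only subtlety I anticipate is ensuring the direction of the inclusion $\hat{\bettere{u}{v}} \subseteq \bettere{u}{v}$ is set up correctly; intuitively, making preferences stricter can only \emph{decrease} the set of agents that $u$ considers ``at least as good as~$v$,'' which is exactly what makes blocking harder to avoid under $\hat\sat$ than under $\sat$, so stability under the harder condition implies stability under the easier one. Everything else is routine: the algorithm's correctness, the $O(n^2)$ bound, half-integrality, and the full-matching property all follow directly from the construction in \cref{def:stable-partitions} and \cref{prop:singletons}.
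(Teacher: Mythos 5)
Your proposal is correct. The structural part (half-integrality, fully matching every matched agent, the $O(n^2)$ bound via \cref{prop:singletons} and \cref{def:stable-partitions}) matches the paper, and your transfer argument is sound: for any refinement $\hat\sat$ of $\sat$ obtained by tie-breaking one indeed has $\bettere{u}{v}$ under $\hat\sat$ contained in $\bettere{u}{v}$ under $\sat$, hence $M^{\pi}(u,\wpref v)\ge M^{\pi}(u,\hat\wpref v)$, so any pair that ordinally blocks $M^{\pi}$ with respect to $\sat$ would already block it with respect to $\hat\sat$, contradicting the known strict-preferences result.

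Where you differ from the paper is in how ordinal stability with respect to the original (possibly tied) preferences is established. You black-box the strict case by citing \cref{lem:sp-matching->osm} for $(G,\hat\sat)$ and then add a short monotonicity lemma about upper contour sets under tie-breaking, which cleanly transfers stability from the refined to the coarser preferences. The paper instead proves stability with respect to $\sat$ directly from the two defining conditions of a stable partition: it first shows an auxiliary claim (if $x$ is a non-singleton and $\sat(x,\pi^{-1}(x))\ge\sat(x,y)$ then $M^{\pi}(x,\wpref y)=1$) and then runs a case analysis on whether a hypothetical blocking pair involves a singleton and on how $\sat(u,v)$ compares to $\sat(u,\pi^{-1}(u))$. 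Your route is more modular and arguably more transparent, and it isolates a reusable fact (ordinal stability is preserved when preferences are coarsened back after tie-breaking); the paper's route is self-contained in that it does not rely on the strict-preferences proposition beyond citing it, at the cost of redoing the stable-partition case analysis. Both arguments yield exactly the same conclusion, including \cstable{}ness via \cref{obs:ostable->cstable}.
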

\iflong
\begin{proof}
  To show the statement we first show that $(G,\sat)$ admits an \osm.
  Let $G=(V,E)$ be a graph with cardinal preferences~$\sat$, and let $\Pot$ be the preferences lists derived from~$\sat$.
  We aim to show that $M^{\pi}$ as returned by \cref{alg:arbitrary-osm} on input~$(G,\sat)$ is a half-integral \osm where each \emph{matched} agent is fully matched.
  Let \myemph{$\hat{\sat}$} be a satisfaction function with strict preferences, which is derived from $\sat$ by breaking ties arbitrarily (except the values~$\sat(x,y)$
  with $x=y$ or with $\{x,y\}\notin E$).
  That is, $\hat{\sat}$ is a satisfaction function without ties such that
  for each agent~$x$ and each two neighbors~$y,z\in N_G(x)$ it holds that if $\sat(x,y) > \sat(x,z)$ then $\hat{\sat}(x,y) > \hat{\sat}(x,z)$.
  By \cref{prop:singletons}, let $\pi$ be the stable partition of $(G,\hat{\sat})$ and $M^{\pi}$ be the corresponding matching computed in line~\ref{alg:osm-partition} of \cref{alg:arbitrary-osm}. %

  Clearly,~$M^{\pi}$ is half-integral such that every matched agent is fully matched. 
  It remains to show that $M^{\pi}$ is an \osm of $(G,\sat)$.
  First of all, we show the following claim.
  \begin{claim}\label{claim:all1}
    For each edge~$\{x,y\}\in E$ it holds that
    if  $x$ is not a singleton and ${\sat}(x,\pi^{-1}(x)) \ge {\sat}(x,y)$,
    then  $M(x,\wpref y)=1$. %
  \end{claim}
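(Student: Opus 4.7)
The plan is to argue by case distinction on the structure of~$\pi$ around~$x$. Since $x$ is not a singleton, the stable partition either puts $x$ in a transposition, i.e.\ $\pi(x)=\pi^{-1}(x)$, or in a cycle of length at least three, i.e.\ $\pi(x)\neq \pi^{-1}(x)$. In either case, the goal is to identify enough neighbors of $x$ lying in~$\bettere{x}{y}$ so that their combined $M^{\pi}$-weights add up to at least one; since the matching constraint caps the sum at one, the two inequalities together force the equality.

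First I would dispatch the transposition case. Here the construction in \cref{def:stable-partitions} sets $M^{\pi}(x,\pi(x))=1$, and since $\pi(x)=\pi^{-1}(x)$, the hypothesis $\sat(x,\pi^{-1}(x))\ge \sat(x,y)$ places $\pi(x)$ directly in~$\bettere{x}{y}$. Thus $M^{\pi}(x,\wpref y)\ge M^{\pi}(x,\pi(x))=1$, and we are done.

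Next I would handle the cycle case, where $M^{\pi}$ assigns weight~$1/2$ to each of the edges~$\{x,\pi(x)\}$ and~$\{x,\pi^{-1}(x)\}$. The hypothesis already places $\pi^{-1}(x)$ inside~$\bettere{x}{y}$. To place $\pi(x)$ there as well, I would invoke Condition~(1) of \cref{def:stable-partitions}, which gives $\hat{\sat}(x,\pi(x))>\hat{\sat}(x,\pi^{-1}(x))$, and then transfer this inequality back to~$\sat$. Adding the two contributions yields $M^{\pi}(x,\wpref y)\ge 1/2+1/2=1$.

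The only delicate point, and in my view the main obstacle, is precisely this translation between the tie-broken function~$\hat{\sat}$ that drives the stable partition and the original~$\sat$ that governs~$\bettere{x}{y}$. By the tie-breaking rule stated just before the claim, $\sat(x,a)>\sat(x,b)$ implies $\hat{\sat}(x,a)>\hat{\sat}(x,b)$; contrapositively, $\hat{\sat}(x,a)>\hat{\sat}(x,b)$ implies $\sat(x,a)\ge \sat(x,b)$. Applied with $a=\pi(x)$ and $b=\pi^{-1}(x)$ this chains with the hypothesis to give $\sat(x,\pi(x))\ge \sat(x,\pi^{-1}(x))\ge \sat(x,y)$, closing the cycle case. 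No other ingredient is required.
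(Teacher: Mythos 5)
Your proposal is correct and follows essentially the same route as the paper: the same case split on $\pi(x)=\pi^{-1}(x)$ versus $\pi(x)\neq\pi^{-1}(x)$, using the construction of $M^{\pi}$, Condition~(1) of \cref{def:stable-partitions} for the cycle case, and the transfer from $\hat{\sat}$ back to $\sat$ (the contrapositive of the tie-breaking property). The only cosmetic difference is that you spell out the ``$\ge 1$ plus matching constraint gives $=1$'' step, which the paper leaves implicit.
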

  \begin{proof}[Proof of \cref{claim:all1}]
    \renewcommand{\qedsymbol}{$\diamond$} %
    We distinguish between two cases.
  \begin{itemize}
    \item If $\pi(x)=\pi^{-1}(x)$, then since $x$ is not a singleton, by the definition of $M^{\pi}$, it follows that $M^{\pi}(x,\pi^{-1}(x))=M(x,\pi(x))=1$, implying our claim.
    \item If $\pi(x)\neq\pi^{-1}(x)$, then by the definition of $M^{\pi}$, it follows that $M^{\pi}(x,\pi^{-1}(x))+M(x,\pi(x))=1$.
     Since $x$ is not a singleton, by \cref{def:stable-partitions}\eqref{stable-partition:1}, it follows that $\hat{\sat}(x,\pi(x)) > \hat{\sat}(x,\pi^{-1}(x))$.
    By our definition of $\hat{\sat}$ it follows that $\sat(x,\pi(x))\ge \sat(x,\pi^{-1}(x))$.
    The claim follows immediately.\qedhere%
\end{itemize}%
\end{proof}

  \noindent Now, we are ready to show that $M^{\pi}$ is an \osm of $(G,\sat)$.
  Suppose, for the sake of contradiction, that $M$ admits an \oblocking pair, say~$e=\{u,v\}$. We distinguish between three cases, in each case obtaining a contradiction.
  \begin{itemize}
    \item If one of $u$ and $v$ is a singleton, say $u$, then by \cref{def:stable-partitions}\eqref{stable-partition:2},~$v$ cannot be a singleton as otherwise, $\hat{\sat}(v,u) < \hat{\sat}(v,v)=0$, which is not possible by the definition.
    Moreover, again by \cref{def:stable-partitions}\eqref{stable-partition:2}, we have $\hat{\sat}(v, \pi^{-1}(v)) > \hat{\sat}(v,u)$ since $u$ is a singleton.
    This implies that $\sat(v, \pi^{-1}(v)) \ge \sat(v,u)$.
    By \cref{claim:all1}, we have that $M(v,\wpref u)=1$, a
    contradiction to $\{u,v\}$ be an \oblocking pair of $M$.
    \item If neither $u$ nor $v$ is a singleton, but ${\sat}(u,v) > {\sat}(u, \pi^{-1}(u))$, %
    then by the definition of $\hat{\sat}$, %
    it follows that $\hat{\sat}(u,v) > \hat{\sat}(u, \pi^{-1}(u))$.
    By \cref{def:stable-partitions}\eqref{stable-partition:2} of stable partitions it follows that $\hat{\sat}(v,\pi^{-1}(v)) \ge \hat{\sat}(v,u)$.
    By \cref{claim:all1}, we again obtain that $\{u,v\}$ is not an \oblocking pair, a contradiction.
    \item If neither $u$ nor $v$ is a singleton, but ${\sat}(u,v) \le {\sat}(u, \pi^{-1}(u))$, then
   \cref{claim:all1} immediately implies that $\{u,v\}$ is not an \oblocking pair, a contradiction.
  \end{itemize}

Together with \cref{obs:ostable->cstable}, we know that $M^{\pi}$ is also \cstable. Hence, every graph with cardinal preferences admits an~\csm.

It remains to consider the running time of \cref{alg:arbitrary-osm}.
Recall that $n$ denotes the number of vertices in $G$.
Computing $\Pot$ and $\hat{\sat}$ in lines~\ref{alg:osm-ties}--\ref{alg:osm-noties} can clearly be done in $O(n^2)$~time.
By \cref{prop:singletons}, $\pi$ can be computed from $(G,\Pot')$ in $O(n^2)$. %
By \cref{def:stable-partitions}, the matching~$M^{\pi}$ corresponding to~$\pi$ can be computed $O(n)$~time.
In total, the running time of \cref{alg:arbitrary-osm} is $O(n^2)$.
\end{proof}
\fi
We close this section by considering the lattice property of \osm{s}.
It is well-known by \citet{RothRothblumVate1993lsm-lattice} that for bipartite graphs with strict cardinal preferences,
the set of \lsm{s} displays a certain lattice structure.
Following their result, we show that the same holds for \osm{s}.
To this end, given two fractional matchings~$M_1$ and $M_2$ of a bipartite graph~$G=(U\cup W,E)$ with preference lists~$(\succeq_x)_{x\in U\cup W}$, the \myemph{join~$\join$} and \myemph{meet~$\meet$} of $M_1$ and $M_2$ by \citet{RothRothblumVate1993lsm-lattice} are defined as follows:
\iflong
\begin{align*}
  \forall x \in U, \forall y \in W \colon &\\
  M_1\join M_2(\{x,y\})\coloneqq & \max\big(M_1(x,\wpref y), M_2(x,\wpref y)\big) - \\
                             &   \max\big(M_1(x,\pref y), M_2(x,\pref y)\big)\\
  M_1\meet M_2(\{x,y\})\coloneqq & \min\big(M_1(x,\wpref y), M_2(x,\wpref y)\big) - \\
                             &   \min\big(M_1(x,\pref y), M_2(x,\pref y)\big)
\end{align*}
\else
$  \forall x \in U, \forall y \in W \colon$ $M_1\join M_2(\{x,y\})$ $\coloneqq $ $\max\big(M_1(x,\wpref y), M_2(x,\wpref y)\big) - \max\big(M_1(x,\pref y), M_2(x,\pref y)\big)$; and

  $M_1\meet M_2(\{x,y\})$ $\coloneqq$ $\min\big(M_1(x,\wpref y), M_2(x,\wpref y)\big) - \min\big(M_1(x,\pref y), M_2(x,\pref y)\big)$.
\fi

\ifshort \noindent Matching $M_1$
\else \noindent  We say that matching~$M_1$
\fi \myemph{weakly $U$-dominates} (resp.\ \myemph{$U$-dominates})
\ifshort $M_2$,
\else
 matching~$M_2$,
\fi written \myemph{$M_1\wpref_U M_2$} (resp.\ \myemph{$M_1 \pref_U M_2$}) if 
\iflong \begin{align*}
  \forall (x,y)\in U\times W \colon M_1(x,\wpref y) \ge M_2(x,\wpref y)\\
  (\text{resp. }   M_1(x,\wpref y) > M_2(x,\wpref y)).
\end{align*}
\else
$\forall (x,y)\in U\times W \colon$\\ $M_1(x,\wpref y) \ge M_2(x,\wpref y)
  (\text{resp. }   M_1(x,\wpref y) > M_2(x,\wpref y))$.
\fi

For an illustration, consider the following example.

\begin{example}
\iflong  Consider the following instance with strict preference lists; the underlying acceptability graph is a complete bipartite graph on~$U\times W$ with $U=[3]$ and $W=\{a,b,c\}$.
\else
  Consider a bipartite graph on~$U\cup W$, $U=[3]$, and $W=\{a,b,c\}$ with strict preferences.\fi 
  \iflong
  \begin{alignat*}{4}%
      \agent~1\colon & a\succ b \succ c, ~~~
    & \agent~2\colon & b\succ c \succ a, ~~~
    & \agent~3\colon & c\succ a \succ b, \\
    \agent~a\colon & 2\succ 3 \succ 1, &
    \agent~b\colon & 3\succ 1 \succ 2,   & \agent~c\colon & 1\succ 2 \succ 3.
  \end{alignat*}%
  \else
  $\agent~1\colon a\pref b \pref c$, $\agent~2\colon  b\pref c \pref a$, 
  $\agent~3\colon c\pref a \pref b$,  $\agent~a\colon 2\pref 3 \pref 1$, 
  $\agent~b\colon 3\pref 1 \pref 2$,   $\agent~c\colon  1\pref 2 \pref 3$.
  \fi
  \iflong
  Note that for \osm{s}, the exact values of the cardinal preferences in~$\sat$ are not important.
  For the sake of completeness, we define~$\sat$ as follows:
  $\sat(1,a)=\sat(2,b)=\sat(3,c)=\sat(a,2)=\sat(b,3)=\sat(c,1)=2$, 
  $\sat(1,b)=\sat(2,c)=\sat(3,a)=\sat(a,3)=\sat(b,1)=\sat(c,2)=1$, and
  $\sat(1,c)=\sat(2,a)=\sat(3,b)=\sat(a,1)=\sat(b,2)=\sat(c,3)=0$.
  
  \noindent This instance admits $3$ stable integral matchings~$N_1, N_2$, and $N_3$ with $N_1(1,a)=N_1(2,b)=N_1(3,c)=1$, $N_2(1,b)=N_2(2,c)=N_2(3,a)=1$, $N_3(1,c)=N_3(2,a)=N_3(3,b)=1$ (unmentioned edges are set to zero).
  It also has two \osm{s}: $M_1$ and $M_2$ where each agent is fully matched:
  $M_1=\frac{3}{4}\cdot N_2 + \frac{1}{4}\cdot N_3$ and $M_2=\frac{1}{3}\cdot N_1 + \frac{1}{3}\cdot N_2 + \frac{1}{3}\cdot N_3$.
\else \noindent It admits $3$ stable integral matchings~$N_1, N_2$, and $N_3$ with $N_1(1,a)$$=$$N_1(2,b)$$=$$N_1(3,c)$$=$$1$; $N_2(1,b)$$=$$N_2(2,c)$$=$
  $N_2(3,a)$${}={}$$1$; $N_3(1,c)$${}={}$$N_3(2,a)$${}={}$$N_3(3,b)$${}={}$$1$ (unmentioned edges are set to zero).
  It also has two perfect \osm{s}: $M_1$ and $M_2$:
  $M_1=\frac{3}{4}\cdot N_2 + \frac{1}{4}\cdot N_3$ and $M_2=\frac{1}{3}\cdot N_1 + \frac{1}{3}\cdot N_2 + \frac{1}{3}\cdot N_3$.  
  \fi
  We obtain the join and the meet of $M_1$ and $M_2$ as follows:
  $M_1 \join M_2 = \frac{1}{3}\cdot N_1 + \frac{1}{12}\cdot N_2 + \frac{1}{4}\cdot N_3$, and
  $M_1 \meet M_2 = \frac{1}{3}\cdot N_2 + \frac{2}{3}\cdot N_3$.
  Note that any convex combination of $N_1, N_2$, and $N_3$ forms an \osm, which is not always the case \iflong for all bipartite graphs with strict preferences.
  \else for bipartite graphs with strict preferences. \fi
\end{example}

\iflong \begin{proposition}[\cite{RothRothblumVate1993lsm-lattice}]\label{prop:lsm-lattice}
  Let $G=(U\cup W, E)$ be a bipartite graph with strict preferences and let $\Pot=(\succ_x)_{x\in U\cup W}$ be the associated preference lists.
  Let~$M_1$ and $M_2$ be two \lsm{s} for \((G, \Pot)\).
  Then $M_1\join M_2$ and $M_1\meet M_2$ are two \lsm{s} for~$(G, \Pot)$.
  Moreover, for each $(x,y)\in U\times W$ it holds that
  \begin{compactenum}[(1)]
    \item\label{lsm-lattice:joinU} $M_1\join M_2(x, \wpref y) = \max(M_1(x,\wpref y), M_2(x,\wpref y))$,
    \item\label{lsm-lattice:meetU} $M_1\meet M_2(x, \wpref y) = \min(M_1(x,\wpref y), M_2(x,\wpref y))$,
    \item\label{lsm-lattice:joinW} $M_1\join M_2(y, \wpref x) = \min(M_1(y,\wpref x), M_2(y,\wpref x))$, and
    \item\label{lsm-lattice:meetW} $M_1\meet M_2(y, \wpref x) = \max(M_1(y,\wpref x), M_2(y,\wpref x))$.
  \end{compactenum}
  The set of all \lsm{s} of $(G,\Pot)$ and the partial order~$\wpref_U$ forms a distributive lattice, with $\join$ and $\meet$ representing the join and meet of any two \lsm{s}.
\end{proposition}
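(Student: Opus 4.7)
I would follow the approach of Roth, Rothblum, and Vande Vate, proceeding in three stages, with the bulk of the work going into an ``opposite-interests'' lemma. \emph{Stage 1: telescoping at the $U$-side.} Fix $x \in U$ and enumerate its neighbors in decreasing preference $y_1 \pref_x y_2 \pref_x \cdots \pref_x y_k$. By the definition of $\join$,
$(M_1\join M_2)(\{x,y_j\}) = \max(M_1(x,\wpref y_j), M_2(x,\wpref y_j)) - \max(M_1(x,\wpref y_{j-1}), M_2(x,\wpref y_{j-1}))$
(with the convention that the second $\max$ vanishes for $j=1$). Summing over $j \le i$ telescopes, giving~(1); statement~(2) is symmetric. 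As a by-product, every edge weight is non-negative and $\sum_j (M_1\join M_2)(\{x,y_j\}) = \max(M_1(x,\wpref y_k), M_2(x,\wpref y_k)) \le 1$, so the degree constraint at each $x \in U$ is met; the argument for $\meet$ is identical.

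\emph{Stage 2 (main obstacle): opposite-interests lemma.} The crux is to show that for every edge $\{x,y\} \in E$, $M_1(x,\wpref y) \ge M_2(x,\wpref y)$ holds if and only if $M_1(y,\wpref x) \le M_2(y,\wpref x)$. I plan two possible routes. The first uses \cref{prop:marriage:support}: decompose both $M_1$ and $M_2$ as convex combinations of integral matchings, invoke Vande Vate's observation that every integral matching in the support of an \lsm must be classically stable, and then lift the Conway lattice structure on integral stable matchings to the fractional setting edge by edge. The direct combinatorial route assumes for contradiction that both $M_1$ and $M_2$ place more weight on the ``above'' side of some edge $\{x,y\}$ from both agents' perspectives; summing the \lsm inequalities $M_k(u,\wpref v) + M_k(v,\pref u) \ge 1$ over a suitably chosen set of edges and comparing with the degree constraints for $M_1$ and $M_2$ yields a contradiction.

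\emph{Stage 3: finishing.} The opposite-interests lemma converts the $U$-side $\max$ in~(1) into a $W$-side $\min$, proving~(3) and~(4) and yielding the degree constraints at every $y \in W$. To verify that $M_1\join M_2$ is an \lsm at an edge $\{x,y\}$, combine~(1) and~(3) to rewrite the stability condition as $\max(M_1(x,\wpref y), M_2(x,\wpref y)) + \min(M_1(y,\pref x), M_2(y,\pref x)) \ge 1$; a two-case analysis on which index achieves the inner $\min$ reduces this to the \lsm inequality for $M_1$ or for $M_2$. The argument for $M_1\meet M_2$ is symmetric, using~(2) and~(4). Finally, the distributive-lattice axioms (idempotence, commutativity, associativity, absorption, distributivity) reduce to the pointwise $\max/\min$ identities on the scalars $M(x,\wpref y)$, and antisymmetry of $\wpref_U$ holds because these cumulative values determine $M$ via telescoping.
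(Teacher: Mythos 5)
You should first note that the paper does not prove this proposition at all: it is imported verbatim from \citet{RothRothblumVate1993lsm-lattice} and used as a black box in the proof of \cref{prop:osm-lattice}, so there is no in-paper argument to compare against and your sketch has to stand on its own. Its skeleton is right in places --- the telescoping in Stage~1 correctly gives statements~(1) and~(2) and the degree constraints on the $U$-side, and the final stability check is sound once (1) and (3) are available (pick the index $i^*$ attaining $\min_i M_i(y,\pref x)$ and invoke stability of $M_{i^*}$) --- but the pivotal Stage~2 lemma is false as stated. The biconditional ``$M_1(x,\wpref y)\ge M_2(x,\wpref y)$ iff $M_1(y,\wpref x)\le M_2(y,\wpref x)$'' fails already for two \emph{integral} stable matchings: take $U=\{x_1,x_2\}$, $W=\{y_1,y_2\}$ with $x_1\colon y_1\pref y_2$, $x_2\colon y_2\pref y_1$, $y_1\colon x_2\pref x_1$, $y_2\colon x_1\pref x_2$, let $M_1$ be the woman-optimal and $M_2$ the man-optimal stable matching, and look at the edge $\{x_1,y_2\}$: both $x_1$-side cumulatives equal $1$, so the left-hand side holds, yet $M_1(y_2,\wpref x_1)=1>0=M_2(y_2,\wpref x_1)$. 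Only the one-directional ``opposite interests'' statement is true (one cannot have strict inequality in the same direction on both endpoints of an edge), and that is the form Roth, Rothblum, and Vande Vate actually establish.

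Even with the corrected lemma, the sentence ``the opposite-interests lemma converts the $U$-side $\max$ in (1) into a $W$-side $\min$'' hides the actual heart of the theorem: what has to be proved is the edgewise identity $\max_i M_i(x,\wpref y)-\max_i M_i(x,\pref y)=\min_i M_i(y,\wpref x)-\min_i M_i(y,\pref x)$, and the strict cumulatives $M_i(x,\pref y)$ are weak cumulatives taken at a \emph{different} woman, so polarization applied at the single edge $\{x,y\}$ does not determine which index attains the two maxima and minima; a genuinely global argument is needed here, and neither of your two routes supplies it. Your decomposition route moreover rests on a false premise: ``every integral matching in the support of an \lsm{} must be classically stable'' is the hallmark of \emph{ordinal} stability (this is exactly \cref{lem:support-stable}), not of linear stability --- for \lsm{s} one only knows that \emph{some} decomposition into stable matchings exists, via the integrality of the fractional stable matching polytope (itself a nontrivial theorem), and the proposed ``edge by edge'' lifting of Conway's lattice to convex combinations is not worked out. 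The closing observations (antisymmetry of $\wpref_U$ via telescoping, distributivity from pointwise $\max/\min$) are fine, but as it stands the proof of statements~(3) and~(4), and hence of closure of the set of \lsm{s} under $\join$ and $\meet$, is missing.
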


Using the above fundamental property and by \cref{obs:ostable->cstable}\eqref{ordinal->linear}, we show that \osm{s} form a lattice substructure of \lsm{s} regarding the partial order~\myemph{$\wpref_U$} on the set of fractional matchings.
\else \refstepcounter{theorem} 
Using the lattice structure properties for \lsm{s} \cite{RothRothblumVate1993lsm-lattice} and by \cref{obs:ostable->cstable}\eqref{ordinal->linear}, we show that \osm{s} form a lattice substructure of \lsm{s} regarding the partial order~\myemph{$\wpref_U$} on the set of fractional matchings.
\fi

\begin{proposition}[\appsymb]\label{prop:osm-lattice}
  \iflong
  For each bipartite graph with strict preferences, the set of \osm{s} forms a distributive lattice under the partial order~$\wpref_U$.
  \else
  For bipartite graphs with strict preferences, the set of \osm{s} forms a distributive lattice under~$\wpref_U$.
  \fi
\end{proposition}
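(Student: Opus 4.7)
The plan is to deduce the result as a sublattice assertion inside the distributive lattice of \lsm{s} supplied by Proposition~\ref{prop:lsm-lattice}. By Observation~\ref{obs:ostable->cstable}\eqref{ordinal->linear}, every \osm of $(G,\Pot)$ is an \lsm, so the set of \osm{s} sits inside the lattice of \lsm{s}, inherits the partial order~$\wpref_U$, and inherits the candidate join and meet operations~$\join$ and~$\meet$. If I can show that $M_1\join M_2$ and $M_1\meet M_2$ are \ostable whenever $M_1$ and $M_2$ are, then the \osm{s} form a sublattice, and the distributive law is inherited pointwise.

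The key step is closure under~$\join$. I would argue by contradiction: suppose some edge $\{x,y\}$ with $x\in U$ and $y\in W$ is \oblocking under~$M_1\join M_2$, that is, $(M_1\join M_2)(x,\wpref y)<1$ and $(M_1\join M_2)(y,\wpref x)<1$. By Proposition~\ref{prop:lsm-lattice}\eqref{lsm-lattice:joinU}, the first inequality unfolds to $\max(M_1(x,\wpref y),M_2(x,\wpref y))<1$, which forces $M_i(x,\wpref y)<1$ for \emph{both} $i\in\{1,2\}$. By Proposition~\ref{prop:lsm-lattice}\eqref{lsm-lattice:joinW}, the second inequality unfolds to $\min(M_1(y,\wpref x),M_2(y,\wpref x))<1$, which gives $M_j(y,\wpref x)<1$ for \emph{some} $j\in\{1,2\}$. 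Then $\{x,y\}$ is an \oblocking pair of~$M_j$, contradicting that $M_j$ is an \osm.

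Closure under $\meet$ is handled by a symmetric argument, using the dual formulas \eqref{lsm-lattice:meetU} and \eqref{lsm-lattice:meetW}: here the $U$-side value is a minimum (so already one $M_i$ has $M_i(x,\wpref y)<1$), while the $W$-side value is a maximum (so \emph{both} $M_i$ satisfy $M_i(y,\wpref x)<1$); the same one-line pigeonhole step again exhibits an \oblocking pair of some~$M_i$. Having established that the \osm{s} form a sublattice of the \lsm-lattice, they automatically inherit the distributive law.

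I do not anticipate any serious obstacle: the proof is essentially a bookkeeping exercise on top of Proposition~\ref{prop:lsm-lattice}. The only place where one has to be careful is to keep straight which side of the bipartition gives a $\max$ and which gives a $\min$ in each of $\join$ and $\meet$, because this is exactly what couples ``both'' on one side with ``at least one'' on the other and thereby produces the \ostability-violation in the right matching~$M_i$.
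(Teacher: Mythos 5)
Your proposal is correct and follows essentially the same route as the paper: embed the \osm{s} into the distributive lattice of \lsm{s} via \cref{obs:ostable->cstable}\eqref{ordinal->linear}, then use \cref{prop:lsm-lattice}\eqref{lsm-lattice:joinU}, \eqref{lsm-lattice:joinW} (and their meet counterparts) to show closure under $\join$ and $\meet$. Your contradiction is phrased as exhibiting an \oblocking pair in one of $M_1,M_2$ rather than contradicting the assumed inequality directly, but this is the same max/min bookkeeping as in the paper's proof.
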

\iflong
\begin{proof}
  Let $G=(U\cup W, E)$ be a bipartite graph with strict preferences~$\sat$, let $\Pot=(\succ_z)_{z\in U\cup W}$ denote the induced strict preference lists,
  and let $\mathcal{M}^{\osms}$ denote the set of all \osm{s} of $(G,\sat)$.
  Since $\mathcal{M}^{\osms}$ is a subset of the set of all \lsm{s} of $(G,\sat)$,
  to show that $(\mathcal{M}^{\osms}, \wpref_U)$ is a distributive lattice, it suffices to show that
  for each two \osm{s}~$M_1, M_2\in \mathcal{M}^{\osms}$ both $M_1\join M_2$ and $M_1 \meet M_2$ are \ostable. 

  Consider an arbitrary ordered pair~$(x,y)\in U\times W$.
  We need to show that
  \begin{inparaenum}[(a)]
    \item \label{lattice-osm-join}
    $M_1 \join M_2 (x,\wpref y) \ge 1$ or  $M_1 \join M_2 (y,\wpref x) \ge 1$, and
    \item\label{lattice-osm-meet}
    $M_1 \meet M_2 (x,\wpref y) \ge 1$ or  $M_1 \meet M_2 (y,\wpref x) \ge 1$.
  \end{inparaenum}

  To show \eqref{lattice-osm-join} 
  suppose, for the sake of contradiction, that $M_1 \join M_2 (x,\wpref y) < 1$ and $M_1 \join M_2 (y,\wpref x) < 1$. %
  By \cref{prop:lsm-lattice}\eqref{lsm-lattice:joinU}, %
  it follows that
  $M_1(x,\wpref y)<1$ and $M_2(x,\wpref y))<1$.
  Since $M_1$ and $M_2$ are \ostable, by definition, it must hold that
  $M_1(y,\wpref x)\ge 1$ and $M_2(y,\wpref x)) \ge 1$. %
  By \cref{prop:lsm-lattice}\eqref{lsm-lattice:joinW}, this means that $M_1\join M_2(y, \wpref x)=\min(M_1(y,\wpref x), M_2(y,\wpref x)) \ge 1$, a contradiction to our assumption.

  The reasoning for \eqref{lattice-osm-meet} is omitted because it is analogous using \cref{prop:lsm-lattice}\eqref{lsm-lattice:meetW} and \cref{prop:lsm-lattice}\eqref{lsm-lattice:meetU} instead of \cref{prop:lsm-lattice}\eqref{lsm-lattice:joinW} and \cref{prop:lsm-lattice}\eqref{lsm-lattice:joinU}.
\end{proof}
\fi
\looseness=-1
\noindent Similar to \linearstability, the lattice structure of \ordinalstability does not hold in the roommates setting, but \iflong the set of \ostable\ matchings \fi is closed under a median operation\ifshort: \else. \fi
\iflong For this, given three
\else Given $3$
\fi real values~$x,y,z$, let \myemph{$\med(x,y,z)$} denote the second largest \iflong (or smallest)\fi number among \iflong $\{x,y,z\}$. \else them. \fi
To show the median property for \linearstability,
\citet{abeledo_stable_1994} extended the median notion to fractional matchings.
Given a graph~$G$ with preferences~$(\wpref_x)_{x\in V(G)}$ and
\ifshort $3$
\else three
\fi fractional matchings~$M_1,M_2,M_3$\iflong\ of $G$\fi, 
\iflong
let the \myemph{median} of~$M_1,M_2,M_3$, denoted as~\myemph{$\med(M_1,M_2,M_3)$},
be defined as follows:
 \begin{align*}
  \forall x,y\in V &\colon \med(M_1,M_2,M_3)(x,y) \coloneqq\\  
                     &     \med(M_1(x,\wpref y), M_2(x,\wpref y), M_3(x,\wpref y)) -\\
  &\med(M_1(x,\pref y), M_2(x,\pref y), M_3(x,\pref y)).
\end{align*}
\else
define \myemph{$\med(M_1,M_2,M_2)$}~as:~$\forall$$x,y$$\in$$V\colon$ $\med(M_1,M_2,M_3)(\{x,y\})$$\coloneqq$$\med(M_1(x,\wpref y),M_2(x,\wpref y)$, $M_3(x,\wpref y))-\med(M_1(x,\pref y), M_2(x,\pref y)$,$M_3(x,\pref y))$.
\fi

\iflong \begin{proposition}[\cite{abeledo_stable_1994}]\label{lsm:median}
  Let $\mathcal{M}^{\lsms}$ denote the set of all \lsm{s} of a graph with strict preferences.
  Then, for each~$M_1,M_2,M_3\in \mathcal{M}^{\lsms}$ it holds that
  \begin{compactenum}[(1)]
    \item\label{lsm:median:eqs} $\med(M_1,M_2,M_3)(x,\wpref y)$$=$$\med(M_1(x,\wpref y), M_2(x, \wpref y)$, $M_3(x,\wpref))$ for all $x,y\in V$,
    \item $\med(M_1,M_2,M_3)(x,\pref y)$$=$$\med(M_1(x,\pref y), M_2(x,\pref y)$, $M_3(x,\pref))$ for all $x,y\in V$,  and 
    \item\label{lsm:median:closedness} $\med(M_1,M_2,M_3)\in \mathcal{M}^{\lsms}$.
  \end{compactenum}
\end{proposition}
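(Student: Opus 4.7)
The plan is to prove the three claims in order, reducing (3) to (1) and (2) via a pointwise linear-stability inequality. Throughout, fix $M_1, M_2, M_3 \in \mathcal{M}^{\lsms}$ and, for an edge $\{x,y\}$, abbreviate $a_i \coloneqq M_i(x, \wpref y)$, $b_i \coloneqq M_i(y, \wpref x)$, and $c_i \coloneqq M_i(x,y)$.

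\textbf{Step 1 (Claims (1) and (2)).} I would enumerate $\bettere{x}{y}$ in strict decreasing preference order as $y_1 \pref_x y_2 \pref_x \cdots \pref_x y_k = y$ and set $a_i^{(j)} \coloneqq M_i(x, \wpref y_j)$, with $a_i^{(0)} \coloneqq 0$. Strict preferences yield $M_i(x, \pref y_j) = a_i^{(j-1)}$ for all $j \ge 1$, so the definition of $\med(M_1,M_2,M_3)$ gives
\[
  \med(M_1,M_2,M_3)(\{x, y_j\}) = \med(a_1^{(j)}, a_2^{(j)}, a_3^{(j)}) - \med(a_1^{(j-1)}, a_2^{(j-1)}, a_3^{(j-1)}).
\]
Summing over $j = 1, \ldots, k$ telescopes to $\med(a_1, a_2, a_3)$, which is Claim~(1); Claim~(2) is the same sum truncated at $j = k-1$.

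\textbf{Step 2 (Validity and linear stability of the median).} Edge values of $\med(M_1,M_2,M_3)$ are non-negative because $\med$ is monotone and $M_i(x, \wpref y) \ge M_i(x, \pref y)$. The total weight at a vertex $x$ equals $\med(M_1(x, \wpref y^\ast), M_2(x, \wpref y^\ast), M_3(x, \wpref y^\ast)) \le 1$, where $y^\ast$ is the least-preferred neighbor of $x$, by Claim~(1). For linear stability at an edge $\{x,y\}$, Claims~(1) and~(2) together with the identity $\med(M_1,M_2,M_3)(\{x,y\}) = \med(a_1, a_2, a_3) - \med(a_1 - c_1, a_2 - c_2, a_3 - c_3)$ reduce the condition that $\{x,y\}$ is not \lblocking to
\[
  \med(b_1, b_2, b_3) + \med(a_1 - c_1, a_2 - c_2, a_3 - c_3) \ge 1.
\]
Linear stability of each $M_i$ gives $b_i + (a_i - c_i) = a_i + b_i - c_i \ge 1$, so it suffices to invoke the lemma below.

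\textbf{The main obstacle} is the elementary but non-trivial lemma that, for reals $u_1, u_2, u_3, v_1, v_2, v_3$ with $u_i + v_i \ge 1$ for every $i$, one has $\med(u_1, u_2, u_3) + \med(v_1, v_2, v_3) \ge 1$. I would prove it by a short case analysis: assume $u_1 \le u_2 \le u_3$ and let $v_\ell = \med(v_1, v_2, v_3)$; in each of the three cases $\ell \in \{1,2,3\}$, combining the pointwise constraints $u_i + v_i \ge 1$ with the orderings of the $u_i$ and the $v_i$ yields $u_2 + v_\ell \ge 1$, which is exactly $\med(u) + \med(v) \ge 1$ and hence closes the proof of Claim~(3).
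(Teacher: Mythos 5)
Your telescoping argument for claims (1) and (2) is fine, and your auxiliary lemma is true (with $u_1\le u_2\le u_3$, both $v_1$ and $v_2$ are at least $1-u_2$, and any two coordinates being $\ge t$ forces the median to be $\ge t$). The genuine gap is the well-definedness of the median. The defining formula is stated for \emph{ordered} pairs $(x,y)$, and for $\med(M_1,M_2,M_3)$ to be a fractional matching at all --- which is part of what claim (3) asserts --- you must show that the value computed from $x$'s side, $\med\big(M_1(x,\wpref y),M_2(x,\wpref y),M_3(x,\wpref y)\big)-\med\big(M_1(x,\pref y),M_2(x,\pref y),M_3(x,\pref y)\big)$, coincides with the value computed from $y$'s side on every edge $\{x,y\}$. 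This symmetry is false for arbitrary fractional matchings (three integral matchings on a triangle with cyclic preferences already break it), and it does not follow from the per-edge facts your proof invokes. Concretely, with $a_i=M_i(x,\wpref y)$, $a_i'=M_i(x,\pref y)$, $b_i=M_i(y,\wpref x)$, $b_i'=M_i(y,\pref x)$, $c_i=M_i(x,y)$, the local profile $(a_1',b_1',c_1)=(1,0.5,0)$, $(a_2',b_2',c_2)=(0.3,0.7,0)$, $(a_3',b_3',c_3)=(0,0,1)$ satisfies everything you use --- non-negativity, $a_i,b_i\le 1$, $a_i=a_i'+c_i$, $b_i=b_i'+c_i$, and the stability inequalities $a_i+b_i-c_i\ge 1$ --- yet yields $\med(1,0.3,1)-\med(1,0.3,0)=0.7$ from $x$'s side and $\med(0.5,0.7,1)-\med(0.5,0.7,0)=0.2$ from $y$'s side. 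So the consistency can only come from global structure of linearly stable matchings (cross-matching relations analogous to statements (3) and (4) of the bipartite lattice proposition), and that is precisely the non-trivial core of the cited result of \citet{abeledo_stable_1994}, which your proposal never supplies; note the paper itself gives no proof of this proposition, so there is nothing in it you could lean on for this step.

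Moreover, your stability verification silently presupposes this symmetry: you combine the $x$-side edge value $\med(a_1,a_2,a_3)-\med(a_1',a_2',a_3')$ with $\med(M_1,M_2,M_3)(y,\wpref x)=\med(b_1,b_2,b_3)$, but the latter is a sum of $y$-side edge values, so without the consistency statement the three quantities appearing in the non-blocking condition do not refer to a single object. The same caveat applies to the vertex-capacity check. Everything else (non-negativity via monotonicity of $\med$, the reduction of non-blocking to $\med(b_1,b_2,b_3)+\med(a_1',a_2',a_3')\ge 1$, and the median inequality) is correct, but without the well-definedness lemma claim (3) is not established.
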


\cref{lsm:median} immediately implies an analogous median property for \osm{s}.
\else \refstepcounter{theorem} 
\noindent\looseness=-1
With the median property for \lsm{s} \cite{abeledo_stable_1994} we show an analogous median property for \osm{s}.
\fi

\begin{proposition}[\appsymb]\label{prop:osm:median}
  Let $\mathcal{M}^{\osms}$ be the set of all \osm{s} of a graph with strict preferences.
  Then, for each~$M_1,M_2,M_3\in \mathcal{M}^{\osms}$ we have $\med(M_1,M_2,M_3)\in \mathcal{M}^{\osms}$.
\end{proposition}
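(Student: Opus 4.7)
My plan is to reduce the claim to the analogous median property for \lsm{s} given in \cref{lsm:median}, bridged by \cref{obs:ostable->cstable}\eqref{ordinal->linear} which says that every \osm\ is also a \lsm. First I would observe that each~$M_i$ is therefore a \lsm, so \cref{lsm:median}\eqref{lsm:median:closedness} gives that $M \coloneqq \med(M_1, M_2, M_3)$ is a well-defined fractional matching (in fact, a \lsm). The remaining task is to verify that $M$ admits no \oblocking\ pair.

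To that end I would fix an arbitrary edge~$\{x,y\}$ and argue as follows. Since each~$M_i$ is \ostable, for every $i \in \{1,2,3\}$ we have $M_i(x, \wpref y) \ge 1$ or $M_i(y, \wpref x) \ge 1$. The key combinatorial step is pigeonhole on three indices into two disjuncts: at least two indices must share the same disjunct. By swapping the roles of $x$ and $y$ if necessary, I may assume there exist two distinct $i_1, i_2 \in \{1,2,3\}$ with $M_{i_1}(x, \wpref y) \ge 1$ and $M_{i_2}(x, \wpref y) \ge 1$. Hence the second-largest of $M_1(x, \wpref y), M_2(x, \wpref y), M_3(x, \wpref y)$ is at least~$1$, i.e.\ $\med(M_1(x, \wpref y), M_2(x, \wpref y), M_3(x, \wpref y)) \ge 1$. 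Invoking \cref{lsm:median}\eqref{lsm:median:eqs} then identifies this quantity with $M(x, \wpref y)$, yielding $M(x, \wpref y) \ge 1$, so that $\{x,y\}$ is not \oblocking~$M$.

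The heavy lifting is already done in \cref{lsm:median}; in particular, the pointwise identity \cref{lsm:median}\eqref{lsm:median:eqs} that expresses the cumulative weight of the median matching as the median of cumulative weights is exactly what allows the pigeonhole argument to go through. I therefore do not expect a serious obstacle: the interaction of the \emph{two}-disjunct definition of \oblocking\ with the \emph{three}-operand median is ideal (two out of three values at least~$1$ forces the second-largest, hence the median, to be at least~$1$), and no new structural insight beyond what \cref{lsm:median} supplies is needed.
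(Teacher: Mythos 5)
Your proposal is correct and follows essentially the same route as the paper's proof: reduce to the \lsm\ median property via \cref{obs:ostable->cstable}\eqref{ordinal->linear}, use \cref{lsm:median}\eqref{lsm:median:closedness} for well-definedness, and use \cref{lsm:median}\eqref{lsm:median:eqs} to identify the cumulative values of the median matching with the median of cumulative values. The only difference is presentational: the paper argues contrapositively (assuming $\med(M_1,M_2,M_3)(x,\wpref y)<1$ and deducing the other disjunct), whereas you argue directly by pigeonhole, which is an equivalent and slightly cleaner phrasing of the same two-out-of-three observation.
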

\iflong
\begin{proof}
  Let $G=(V,E)$ be a graph with strict preferences~$\sat$, and let~$\Pot=(\succ_x)_{x\in V}$ denote the induced preference lists.
  Let $\mathcal{M}^{\osms}$ be the set of \osm{s} of $(G,\sat)$, and let $M_1,M_2,M_3\in \mathcal{M}^{\osms}$ be as defined in the statement.
  By \cref{obs:ostable->cstable}\eqref{ordinal->linear},
  $M_1,M_2,M_3$ are \lsm{s}.
  Hence, by \cref{lsm:median}\eqref{lsm:median:closedness}, $\med(M_1,M_2,M_3)$ is a fractional matching of $G$, since it is an \lsm of $(G,\sat)$.
  To show the membership in $\mathcal{M}^{\osms}$, we need to show that for each pair~$\{x,y\}\subseteq V$ of agents, 
  $\med(M_1,M_2,M_3)(x,\wpref y) \ge 1$ or $\med(M_1,M_2,M_3)(y,\wpref x) \ge 1$.
  Consider an arbitrary pair~$\{x,y\}\subseteq V$, if  $\med(M_1,M_2,M_3)(x,\wpref y) \ge 1$, then we are done.
  Hence, let us assume that $\med(M_1,M_2,M_3)(x,\wpref y) < 1$.
  By \cref{lsm:median}\eqref{lsm:median:eqs},
  we have that $\med(M_1(x,\wpref y), M_2(x,\wpref y), M_3(x,\wpref y)) < 1$.
  This means that at least two of the three real values~$M_1(x,\wpref y)$, $M_2(x,\wpref y)$, $M_3(x,\wpref y)$ are strictly smaller than one. Without loss of generality by symmetry, assume that $M_1(x,\wpref y) < 1$, $M_2(x,\wpref y) < 1$.
  Then, since $M_1,M_2\in \mathcal{M}^{\osms}$, it follows that
  $M_1(y,\wpref x)\ge 1$, $M_2(y,\wpref x)\ge 1$.
  We distinguish between two cases for~$M_3(y,\wpref x)$.

  If $M_3(y,\wpref x) \ge 1$, then $\med(M_1,M_2,M_3)(y,\wpref x) \ge 1$.
  Otherwise, $\med(M_1(y,\wpref x),M_2(y,\wpref x), M_3(y,\wpref x)) = \min\big(M_1(y,\wpref x),M_2(y,\wpref x)\big) \ge 1$.
  In both cases, we obtain that $\med(M_1,M_2,M_3)(x,\wpref y) \ge 1$ or  $\med(M_1,M_2,M_3)(y,\wpref x) \ge 1$.
  Hence, $\med(M_1,M_2,M_3)\in \mathcal{M}^{\osms}$.
\end{proof}
\fi
\section{Algorithmic results}
The structural properties from \cref{sec:structural} give rise to efficient algorithms for finding optimal stable matchings.
\iflong
\paragraph{Bipartite graphs with strict preferences.}
To describe efficient algorithms for this case, we first observe that for the case of bipartite graphs, each support of each \osm consists of integral \emph{stable} matchings.
We remark that this has also been proved by \citeauthor{aziz_random_2019}~(\citeyear{aziz_random_2019}, Theorem~3) in their study of random matchings.\footnote{%
  The relevant statement is that an \emph{ex-ante weakly stable} random matching~$p$ is also \emph{robust ex-post weakly stable}.}
For self-containedness and since our proof is short, instructive, and more direct than theirs, %
we include it here:
\else
For the case of bipartite graphs, we utilize the fact that each support of each \osm consists of integral \emph{stable} matchings.
\fi

\begin{lemma}[{\cite[Theorem]{aziz_random_2019}}%
  ]\label{lem:support-stable}
  Let $G$ be a bipartite graph with satisfaction~$\sat$, %
  $M$ be an \osm for $(G, \sat)$, and $(M_j)_{j \in [k]}$ be a support for~$M$.
  Then, each $M_j$, $j\in [k]$, is \mbox{(integrally) stable}. %
\end{lemma}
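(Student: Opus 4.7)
The plan is to argue by contradiction: assume some $M_j$ in the support is not (integrally) stable, exhibit an integral blocking pair, and show it must also be an ordinally blocking pair for $M$, contradicting the assumed \ordinalstability of $M$.

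First I would fix an index $j \in [k]$ and suppose, towards a contradiction, that $M_j$ admits a blocking pair $\{u,v\} \in E$ in the classical (integral) sense, i.e., both $u$ and $v$ strictly prefer each other to their respective $M_j$-partners (treating being unmatched as strictly worse than being matched to an acceptable agent). The key observation is then the following: because $M_j$ is integral and $\{u,v\}$ blocks it, $u$ is matched under $M_j$ to at most one agent, and that agent (if any) lies strictly below $v$ in $\pref_u$. Consequently, every $w \in \bettere{u}{v}$ satisfies $M_j(u,w) = 0$, and therefore $M_j(u, \wpref v) = 0$. By the same reasoning, $M_j(v, \wpref u) = 0$.

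Next I would use the support decomposition $M = \sum_{i \in [k]} x_i M_i$ with all $x_i > 0$ and $\sum_i x_i = 1$. Linearity yields
\[
M(u, \wpref v) \;=\; \sum_{i \in [k]} x_i \cdot M_i(u, \wpref v) \;\le\; \sum_{i \neq j} x_i \;=\; 1 - x_j \;<\; 1,
\]
and analogously $M(v, \wpref u) < 1$. By \cref{def:util-bp-stability}, this means that $\{u,v\}$ is an \oblocking pair for $M$, contradicting the \ordinalstability of $M$. Hence no such $M_j$ can contain a blocking pair, i.e., every $M_j$ is (integrally) stable.

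The only subtle point, and thus the main thing to be careful about, is the treatment of ties in the definition of a blocking pair for an integral matching: one must ensure that whatever convention is used (weak stability under ties, or strict preferences), the blocking condition for $M_j$ forces $u$'s $M_j$-partner to lie strictly below $v$ in $\pref_u$, so that the sum $M_j(u, \wpref v)$ is truly zero rather than merely strictly less than one. Once this is pinned down, the rest of the argument is a one-line linearity computation, which is why the statement follows so cleanly from the support representation.
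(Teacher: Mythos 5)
Your proof is correct and is essentially the paper's argument in contrapositive form: both hinge on the identity $M(x,\wpref y)=\sum_{i\in[k]}x_i\,M_i(x,\wpref y)$ with strictly positive coefficients summing to one, combined with the \ordinalstability{} of $M$ (the paper argues directly that $M(v,\wpref u)=1$ forces $M_j(v,\wpref u)=1$ for every $j$, while you assume an integral blocking pair and derive $M(u,\wpref v)<1$ and $M(v,\wpref u)<1$). Your remark on ties is also consistent with the paper's convention, since its integral blocking condition is exactly $M_j(u,\wpref v)+M_j(v,\wpref u)=0$, i.e.\ weak stability.
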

\iflong
\begin{proof}
  Let $x_1, x_2, \ldots, x_k \in \R_{>0}$ such that $\sum_{j \in [k]}x_j = 1$ and for each edge $e \in E(G)$ we have $M(e) = \sum_{j \in [k]}x_j M_j(e)$.
  Fix an arbitrary edge $\{u, v\} \in E(G)$.
  We show that for each $j \in [k]$ edge $\{u, v\}$ does not (integrally) block~$M_j$.
  It suffices to show that $M_j(u,\wpref v) = 1$ or $M_j(v,\wpref u) = 1$ since $M_j$ is an integral matching.

  By the \ordinalstability\ of $M$ we have
    $M(u,\wpref v) = 1, %
    $ or $M(v,\wpref u) = 1.$ %
  
    Say the latter holds:
    \begin{align}
      M(v,\wpref u) = 1;\label{eq:support2}
    \end{align}
    \noindent the proof if the former holds is analogous.
  If we can show that $M_j(v, \wpref u)=1$ for each $j\in [k]$, then we achieve what we wanted to show, namely that each $M_j$, $j\in [k]$, is stable.

  Thus, it remains to show that $M_j(v,\wpref u)=1$ for each $j\in [k]$.
  By the definition of supports and by \eqref{eq:support2}, we have
  \begin{align}
    1=\sum_{\mathclap{{w \in \bettere{v}{u}}}}~~~~\sum_{j \in [k]} x_j \cdot M_j(v,w) & = \sum_{\mathclap{j \in [k]}} x_j\!\cdot\! \left(\sum_{\mathclap{~~~~~~~w \in \bettere{v}{u}}} M_j(v,w)\right)\nonumber \\
     & = \sum_{\mathclap{j \in [k]}} x_j\!\cdot\! M_j(v,\wpref u).  \label{eq:support-stable1} 
  \end{align}
  Since each~$M_j$ is an integral matching, it holds that $M_j(v,\wpref u)\in \{0,1\}$.
  Since each $x_j$ is a positive real value with $\sum_{j\in [k]}x_j=1$, it must hold that
   $M_j(v,\wpref u)=1$ as otherwise inequality~\eqref{eq:support-stable1} could not hold.
   Thus, for each $j \in [k]$ we have $M_j(v,\wpref u) = 1$, which implies that $\{u, v\}$ does not (integrally) block~$M_j$, as required.
 \end{proof}
 \fi

\noindent\looseness=-1
The reverse of \cref{lem:support-stable} does not hold, i.e.,
a convex combination of stable integral matchings is not necessarily \ostable or \cstable.
This is shown in \cref{ex:cstable-not-always-ostable}, where $M_1$ is a convex combination of stable integral matchings, but it is neither cardinally nor \ostable.
This has also been observed for \ordinalstability\ by \citet{aziz_random_2019}; see their Theorem~3.

Now a replacement argument shows that the maximum achievable welfare and the maximum number of fully matched agents for integral stable matchings are also the maximum for \osm{s}, due to the following.

\begin{lemma}[\appsymb]\label{lem:wel-linear}
  Let $G$ be a bipartite graph with satisfactions~$\sat$,
  $M$ be a matching for $(G, \sat)$, and $(M_j)_{j \in [k]}$ be a support of~$M$ with the coefficients~$x_1, x_2, \ldots, x_k \in \R_{> 0}$.
  Then, the following hold:
    $\wel(M) = \sum_{j\in [k]} \big(x_j\cdot \wel(M_j)\big)$ and 
    $\fullymatched(M) \le \max_{j\in [k]} \fullymatched(M_j)$.
  \end{lemma}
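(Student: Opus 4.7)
The plan is to treat the two claims separately, since both are essentially consequences of the linearity of the relevant quantities together with the fact that the convex coefficients $x_1,\ldots,x_k$ are strictly positive and sum to one.

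For the welfare equality, I would start from the definition
\[
\wel(M) = \sum_{v\in V}\util_M(v) = \sum_{\{u,v\}\in E}\bigl(\sat(u,v)+\sat(v,u)\bigr)\cdot M(\{u,v\}),
\]
which is a linear function of the edge values of $M$. Substituting $M(\{u,v\}) = \sum_{j\in [k]} x_j\cdot M_j(\{u,v\})$ and swapping the two sums yields $\wel(M)=\sum_{j\in [k]} x_j\cdot \wel(M_j)$. This step is purely mechanical and poses no real difficulty.

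For the bound on $\fullymatched$, the key observation I would prove first is that the set of agents fully matched under $M$ is a subset of the set of agents fully matched under $M_j$ for every single $j\in[k]$. Fix any agent $v$ with $\sum_{u\in N_G(v)} M(v,u)=1$. Expanding with the support gives
\[
1 = \sum_{u\in N_G(v)}\sum_{j\in [k]} x_j\cdot M_j(v,u) = \sum_{j\in [k]} x_j\cdot\Bigl(\sum_{u\in N_G(v)} M_j(v,u)\Bigr).
\]
Since each $M_j$ is itself a fractional matching, $\sum_{u\in N_G(v)} M_j(v,u)\le 1$, and since $x_j>0$ with $\sum_{j} x_j=1$, all those inner sums must equal one. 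Hence $v$ is fully matched under every $M_j$.

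From this containment the bound follows immediately: $\fullymatched(M)\le \fullymatched(M_j)$ for each $j\in [k]$, which is stronger than (and thus implies) $\fullymatched(M)\le \max_{j\in [k]} \fullymatched(M_j)$. I do not anticipate a genuine obstacle here; the only point that deserves care is that the coefficients $x_j$ are strictly positive (which is part of the definition of a support in \cref{prop:marriage:support}), as this is exactly what forces each inner sum to attain the upper bound of one.
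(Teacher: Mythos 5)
Your proof is correct, and while the welfare identity is handled exactly as in the paper (linearity of $\wel$ in the edge values plus an exchange of sums), your argument for the second claim takes a genuinely different route. The paper bounds $\fullymatched(M)$ by the total matching weight $\sum_{v}\sum_{u\in N_G(v)}M(u,v)$, rewrites this by linearity as $\sum_{j\in[k]}x_j\cdot\fullymatched(M_j)$ (using that each $M_j$ is \emph{integral}, so its total weight per vertex counts its fully matched agents), and then applies a replacement/averaging step to conclude $\fullymatched(M)\le\fullymatched(M_r)$ for the best $M_r$. You instead prove a pointwise containment: every agent fully matched under $M$ is fully matched under \emph{every} $M_j$, because a convex combination with strictly positive weights of quantities bounded by one can equal one only if all of them equal one. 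This buys you a strictly stronger conclusion, namely $\fullymatched(M)\le\min_{j\in[k]}\fullymatched(M_j)$ together with containment of the fully matched sets, and it does not use integrality of the $M_j$ at all (only that each is a fractional matching); the paper's counting argument, by contrast, is tied to the integrality of the support and only delivers the $\max$ bound. Your identification of the strict positivity of the $x_j$ as the crucial hypothesis is exactly right, and both routes are sound.
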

  \iflong
\begin{proof}
  Let $G,\sat,M, (M_j)_{j\in [k]}, x_1,\ldots,x_k$ be as defined in the statement.
  By the definition of supports, we have
  \allowdisplaybreaks
  \begin{align*}
     \wel(M) & = \sum_{v \in V}\util_{M}(v) \nonumber  \\ 
    & = \sum_{v \in V} \sum_{u \in N_G(v)} \sat(v, u)\cdot M(u, v) \nonumber \\
    & = \sum_{v \in V} \sum_{u \in N_G(v)} \sat(v, u)\cdot \left( \sum_{j \in [k]} x_j\cdot M_j(u, v) \right) \nonumber  \\ 
    & = \sum_{v \in V} \sum_{u \in N_G(v)} \sum_{j \in [k]} x_j\cdot \sat(v, u) \cdot M_j(u, v) \nonumber \\
    & = \sum_{j \in [k]} x_j \cdot \sum_{v \in V} \sum_{u \in N_G(v)} \sat(v, u)\cdot M_j(u, v) \nonumber \\
    & = \sum_{j \in [k]} x_j \cdot \wel(M_j) \text{.}
  \end{align*}
  This shows the first statement.

  As for the the number of fully matched agents, again, by the definition of supports, we have
  \begin{align*}
        \fullymatched(M) & \le  \sum_{v \in V}\sum_{u\in N_G(v)} M(u,v)\\
      & =  \sum_{v \in V} \sum_{u \in N_G(v)} \sum_{j \in [k]} x_j\cdot M_j(u, v)  \\ 
      & = \sum_{j \in [k]} x_j \cdot \sum_{v \in V} \sum_{u \in N_G(v)} M(u,v)\\
      & = \sum_{j\in [k]} x_j \cdot \fullymatched(M_j)\text{.}
  \end{align*}
  The last equality holds since every~$M_j$ in the support is an integral matching.
  Now, let $M_r$ with $r\in [k]$ denote the matching with maximum number of fully matched agents among all matchings in the support~$(M_j)_{j\in [k]}$.
  Then, we can continue as follows:
  \begin{align*}
    \fullymatched(M) & \le \sum_{j\in [k]} x_j \cdot \fullymatched(M_j)\\
                     & =   x_r\cdot \fullymatched(M_r) +\left(\sum_{j\in [k]\setminus \{r\}} x_j\cdot \fullymatched(M_j)\right)\\
    & \le  x_r\cdot \fullymatched(M_r) +\left(\sum_{j\in [k]\setminus \{r\}} x_j\cdot \fullymatched(M_r)\right)\\
    & = \fullymatched(M_r).\qedhere
  \end{align*}
\end{proof}
\fi
Using \cref{lem:wel-linear} we may assume that any optimal \ostable matching is an optimal
\emph{integrally} stable matching since using a simple exchange argument, we may swap out matchings in the support of a fractional matching for integral matchings with maximum welfare or with maximum number of fully matched agents in order to decrease the number of matchings in the support until only one remains.
Since finding an optimal stable integrally matching for bipartite graphs with strict preferences is polynomial-time solvable~\cite{irving_efficient_1987}, we immediately obtain the same for \ordinalstability.
\begin{theorem}[\appsymb]\label{thm:egal-bipartite-poly}
  For bipartite graphs with strict preferences, \EOSFM\ and \POSFM are polynomial-time solvable.
\end{theorem}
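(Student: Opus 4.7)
The plan is to reduce both problems to their integral counterparts and invoke classical polynomial-time algorithms for stable marriage with strict preferences.

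First I would combine \cref{prop:marriage:support} with \cref{lem:support-stable}: any \osm~$M$ of a bipartite instance $(G, \sat)$ admits a support $(M_j)_{j \in [k]}$ consisting of integral stable matchings, with coefficients $x_1, \dots, x_k > 0$ summing to~$1$. By \cref{lem:wel-linear} we have $\wel(M) = \sum_{j \in [k]} x_j \cdot \wel(M_j)$ and $\fullymatched(M) \le \max_{j \in [k]} \fullymatched(M_j)$. Standard averaging then produces an index~$j^{*}$ with $\wel(M_{j^{*}}) \ge \wel(M)$, and some index~$j'$ with $\fullymatched(M_{j'}) \ge \fullymatched(M)$. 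Conversely, every integral stable matching is an \osm since the three stability concepts coincide for integral matchings (the remark after \cref{def:util-bp-stability}). Hence the optima over \osm{s} and over integral stable matchings agree for both objectives.

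It therefore suffices to solve the two optimization problems over integral stable matchings. For \EOSFM\ I would invoke the classical maximum-weight stable marriage algorithm of Irving, Leather, and Gusfield~\cite{irving_efficient_1987}, applied to $G$ with edge weights $\sat(u,v) + \sat(v,u)$; this runs in polynomial time and returns an integral stable matching of maximum welfare. For \POSFM\ I would rely on the Rural Hospitals Theorem: in a bipartite instance with strict preferences, every integral stable matching matches exactly the same set of agents. Consequently $\fullymatched(M_j)$ is identical across all integral stable matchings, so computing any one of them by the Gale--Shapley algorithm and counting its fully matched agents already yields the optimum.

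The conceptual heavy lifting is already carried out by \cref{lem:support-stable} and \cref{lem:wel-linear}; the remainder is bookkeeping and citation of well-known algorithms, so I do not expect any genuine obstacle. The only subtle point is ensuring that the welfare-maximizing/matched-set-invariance statements about integral stable matchings really do transfer back to \osm{s}, which is precisely where the two directions of the reduction (support decomposition, and integral stable $\Rightarrow$ \ostable) are needed.
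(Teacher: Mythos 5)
Your proposal is correct and follows essentially the same route as the paper: decompose an \osm\ into a support of integral stable matchings (\cref{prop:marriage:support}, \cref{lem:support-stable}), transfer the objectives via \cref{lem:wel-linear}, note that integral stable matchings are \osm{s}, and then solve the integral problems using the Rural Hospitals invariance for \POSFM\ and the algorithm of \citet{irving_efficient_1987} for \EOSFM. The only cosmetic difference is that the paper phrases the welfare step through the duality with minimum egalitarian cost before citing the same reference, which does not change the argument.
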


\iflong \begin{proof}
  \newcommand{\optfrac}{\ensuremath{\textsf{optfrac}}}
  \newcommand{\Mfrac}{\ensuremath{M_\textsf{frac}}}
  \newcommand{\optint}{\ensuremath{\textsf{optint}}}
  To show the statement, we show that for bipartite graphs with strict preferences, the maximization variants of \POSFM and \EOSFM can be solved in polynomial time.
  To this end, let $G=(U \cup W, E)$ be a bipartite graph with strict preferences~$\sat$ and let $\Pot=(\succ_x)_{x\in U\cup W}$.
 
  We first consider the maximization variant of \POSFM. %
  By \cref{lem:wel-linear}, it suffices to find an integral stable matching of $(G,\Pot)$ which has the maximum number of matched agents among all stable integral matchings.
  Now, observe that for strict preferences, every stable integral matching matches the same set of agents~\cite{GusfieldIrving1989}.
  This means that every stable integral matching of $(G,\Pot)$ fulfills our requirements.
  Hence, we can simply use Gale and Shapley's extended algorithm to find a stable integral matching.

  Next, we consider the maximization variant of \EOSFM. %
  Let \optfrac\ be the maximum welfare of an \osm for $(G, \sat)$ and \optint\ be the maximum welfare of a stable integral matching for $(G, \Pot)$.
  We first claim that $\optfrac = \optint$.
  It is clear that $\optfrac \geq \optint$.
  For the other direction, let \Mfrac\ be a \osm for $(G, \sat)$ that has welfare~$\optfrac$.
  By \cref{prop:marriage:support}, let $(M_j)_{j \in [k]}$, together with $x_1, x_2, \ldots, x_k \in \R_{> 0}$ such that $\sum_{j \in [k]}x_j = 1$, be a support of~$\Mfrac$.
  By \cref{lem:support-stable} for each $j \in [k]$ we have that $M_j$ is stable.
  Let $r \in [k]$ such that $M_r$ has maximum $\wel$ among $\{M_j \mid j \in [k]\}$.
  By \cref{lem:wel-linear} we have $\wel(\Mfrac) = \sum_{j \in [k]}x_j\wel(M_j)$.
  Thus,
  \begin{align*}
    \wel(M_r) &= x_r \cdot \wel(M_r) + (1 - x_r)\cdot \wel(M_r) \\
              &\geq \sum_{j \in [k]}x_j\cdot \wel(M_j) = \wel(\Mfrac).
  \end{align*}
  Since $M_r$ is stable, we have $\optfrac = \optint$.

  Now, to find an \osm with maximum welfare it suffices to find a stable integral matching with maximum welfare.
  Since a stable integral matching with maximum welfare has also achieved the minimum \myemph{egalitarian cost}, %
  and since a stable integral matching with minimum egalitarian cost can be found in~$O((|U|+|W|)^4)$~time~\cite{irving_efficient_1987},
  we can find an \osm with maximum welfare in $O((|U|+|W|)^4)$~time.
\end{proof}

\noindent
\textbf{Remark.} The proof for \cref{thm:egal-bipartite-poly} also shows that for bipartite graphs without ties, finding an \osm with maximum sum of matching values can be done in polynomial time.
\medskip

\fi

\iflong\paragraph{Non-bipartite graphs with strict preferences.} \fi 
The tractability result of \POSFM for bipartite graphs with strict preferences heavily utilizes the fact that each fractional matching of a bipartite graph is a convex combination of integral matchings.
This fact, however, does not hold for non-bipartite graphs.
Nevertheless, we can extend the polynomial-time result to the non-bipartite case\iflong, using the first phase of Irving's polynomial-time algorithm for finding a stable integral matching~(\citeyear{Irving1985})~(see \cref{alg:phase1})\fi.
The correctness is based on the following.

\begin{lemma}[\appsymb]\label{ordinal:phase1correct}
  Let $G=(V,E)$ be a graph with cardinal and strict preferences~$\sat$, and let $M$ be an \osm of~$(G,\sat)$.
  The following hold for each agent~$y\in V$:
  \begin{compactenum}[(1)]
    \item\label{lem:osm-mostpreferred}  %
    $M(x,\wpref y)=1$, where $x$ is the most preferred agent of~$y$.
    \item\label{lem:osm-equivalence} \ifshort The following are equivalent:\else
    The following three statements are equivalent: \fi
\ifshort    \begin{inparaenum}[(i)]
\else   \begin{compactenum}[(i)] \fi
      \item\label{lem:osm:matched} $y$ is matched in $M$;
      \item\label{lem:osm:nonsingleton} $y$ is \emph{not} a singleton in a stable partition of $(G,\sat)$;
      \item\label{lem:osm:fully} $y$ is fully matched in $M$.
\ifshort    \end{inparaenum}\else
    \end{compactenum}\fi
  \end{compactenum}
\end{lemma}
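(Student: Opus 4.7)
For Part~(1), I would apply the ordinal stability of $M$ directly to the edge $\{x,y\}$, where $x$ is the (unique, by strict preferences) most preferred agent of $y$. The non-blocking condition yields $M(x,\wpref y)\geq 1$ or $M(y,\wpref x)\geq 1$; since $\bettere{y}{x}=\{x\}$, the second option reduces to $M(y,x)\geq 1$, hence $M(y,x)=1$ by the matching constraint, which in turn forces $M(x,\wpref y)\geq M(x,y)=1$. In both cases, the matching constraint at $x$ gives the claimed equality $M(x,\wpref y)=1$.

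For Part~(2), the implication (iii)$\Rightarrow$(i) is trivial. For the others, I would fix a stable partition $\pi$ of $(G,\sat)$ and invoke \cref{prop:singletons} so that its singleton set $S$ is independent of the choice of $\pi$. The matching $M^\pi$ of \cref{def:stable-partitions} plainly satisfies: $y\in V\setminus S$ iff $y$ is matched in $M^\pi$ iff $y$ is fully matched in $M^\pi$. The task thus reduces to showing that \emph{every} OSM $M$ matches exactly $V\setminus S$ and fully matches each such agent---a Rural-Hospitals-type invariance. For (i)$\Rightarrow$(ii): if $y\in S$ but $M(y,u)>0$ for some neighbor $u$, condition~(2) of \cref{def:stable-partitions} applied with $v_i=y$ yields $\sat(u,\pi^{-1}(u))>\sat(u,y)$, so $u$ strictly prefers $\pi^{-1}(u)$ to $y$; aggregating ordinal stability across the cycle or transposition of $\pi$ containing $u$, combined with Part~(1), produces a blocking pair. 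For (ii)$\Rightarrow$(iii): if $y\notin S$ is matched but not fully, let $z$ be $y$'s least-preferred partner in $M$; since $M(y,\wpref z)=\sum_u M(y,u)<1$, stability forces $M(z,\wpref y)\geq 1$, so $z$ is fully matched to partners all weakly preferred to $y$. Iterating this inference along the cycle/transposition of $\pi$ through $y$ produces a chain of strictly improving preferences that eventually closes, contradicting the stability of $M$.

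\textbf{Main obstacle.} The critical step is the Rural-Hospitals-type invariance in Part~(2). In the bipartite setting, this would be almost immediate from \cref{lem:support-stable} and the classical Rural Hospitals Theorem applied to any integral stable support of $M$. In the non-bipartite case, however, fractional matchings need not decompose into integral matchings, so the argument must proceed entirely through the stable-partition structure. The delicate point is that individual pairs $\{y,u\}$ typically do not block $M$ on their own; one must aggregate several stability conditions along a cycle of $\pi$ and exploit the strictness of preferences so that the induced chain of inferences terminates in a genuine blocking pair rather than cycling back indefinitely.
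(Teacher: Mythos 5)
Your Part~(1) is correct and is essentially the paper's own argument (the paper phrases it as a contrapositive), so there is nothing to add there. The implication (iii)$\Rightarrow$(i) is of course trivial.

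The problem is Part~(2): the step you yourself identify as the ``main obstacle'' is exactly where the proposal breaks down, and the specific plan of ``aggregating ordinal stability along the cycle or transposition of $\pi$'' does not work as stated. Trace your (i)$\Rightarrow$(ii) argument: from $M(y,u)>0$ with $y$ a singleton you correctly get that $u$ strictly prefers $\pi^{-1}(u)$ to $y$, hence $M(u,\wpref \pi(u))\le M(u,\wpref\pi^{-1}(u))<1$, and stability of $\{u,\pi(u)\}$ forces $M(\pi(u),\wpref u)\ge 1$. But this is where the within-cycle propagation stops: since $\pi(u)$ weakly prefers its own successor to its predecessor $u$, the bound $M(\pi(u),\wpref u)\ge 1$ already witnesses non-blocking for the next cycle edge at $\pi(u)$, so the remaining edges of the cycle impose no further constraints and no blocking pair materializes inside the cycle. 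What actually happens is that the deficiency propagates \emph{out of the cycle}, along whichever agents currently receive positive $M$-weight from $\pi(u)$ above the level of $u$ (and similarly in your (ii)$\Rightarrow$(iii) sketch, the partner $z$ need not lie on $y$'s cycle at all), so the chain of inferences wanders through the whole graph following fractional weights; because the weights are fractional, revisiting an agent does not close the argument, and you provide no monotone quantity or termination mechanism that turns this wandering chain into a genuine blocking pair. In addition, your (ii)$\Rightarrow$(iii) sketch only treats non-singletons that are \emph{matched} but not fully matched; a non-singleton that is entirely unmatched in $M$ is excluded by none of your three implications, so even granting the chain argument the stated equivalence would not follow.

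The paper closes exactly this gap with a different mechanism: it runs the first phase of Irving's algorithm (\cref{alg:phase1}) on the derived lists, uses Part~(1) inductively to show that no pair with positive $M$-value is ever deleted, and then invokes Tan's characterization that the agents whose reduced lists become empty are precisely the singletons of every stable partition. Emptiness of the reduced list then gives that singletons are unmatched, while for each non-singleton $x$ the first/last property of the phase-1 lists yields an agent $y$ with $M(x,\wpref y)=1$ and all less-preferred pairs of $x$ deleted (hence unmatched), so $x$ is fully matched. Your plan would need an argument of comparable global strength---in effect re-deriving these deletions or a genuine counting/potential argument over all of $G$---rather than a local aggregation over a single cycle of~$\pi$.
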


\iflong\begin{proof}\else\begin{proof}[Proof sketch]
    \fi
    \looseness=-1
  Let $G=(V,E)$, $\sat$, $M$, and $y$ be as in the statement. 
  To show \eqref{lem:osm-mostpreferred}, suppose, for the sake of contradiction, that $M(x,\wpref y)<1$, where $x$ is the most-preferred agent of~$y$.
  This implies $M(y,\wpref x)$${}={}$$M(y,x)<1$,  %
   a contradiction to $M$ being \ostable\ regarding edge~$\{x,y\}$.
  \ifshort
  The equivalence of the statements in \eqref{lem:osm-equivalence} is based on Statement~\eqref{lem:osm-mostpreferred}. %
  We can perform the first phase of Irving's algorithm~\cite{Irving1985}~(see \cref{alg:phase1}) to guarantee after that
  \begin{inparaenum}[(a)]
    \item an agent is a singleton (in any stable partition) if and only if her preference list is empty, and 
    \item every (fully) matched agent in~$M$ has non-empty preference list.
  \end{inparaenum}
  Note that in \cref{alg:phase1}, we use $\ffirst_{\Pot}(x)$ and $\LLAST_{\Pot}(x)$ to refer to the most-preferred agent and the least-preferred agent in the preference list~$\succ_x\in \Pot$ of $x$.
  \else
  In order to show the equivalence of the statements in \eqref{lem:osm-equivalence},
  we need to know which agents are matched by~$M$. %
  To obtain this, let us first utilize Statement~\eqref{lem:osm-mostpreferred} to repeatedly ``delete'' pairs from $\Pot$ that will not be part of and will not block any \osm.  
  We need some more notations.
  Since we will modify the preference lists in $\Pot$, we use $\ffirst_{\Pot}(x)$ and $\LLAST_{\Pot}(x)$ to refer to the most-preferred agent and the least-preferred agent in the preference list~$\succ_x\in \Pot$ of $x$.
  Now, if $(G,\Pot)$ admits an \osm, say~$M$,
  then $M$ cannot match some pair~$\{x,z\}$ for which there exists an agent~$y$ with $x=\ffirst_{\Pot}(y)$
  such that $x$ prefers $y$ to~$z$ as this violates Statement~\eqref{lem:osm-mostpreferred} regarding~$\{x,y\}$.
  This means that we can repeatedly delete such pairs.
  A pseudocode description of the above approach is given in the while loop of \cref{alg:phase1}.
  We aim to show that no matched pair of $M$ is deleted in \cref{alg:phase1}.
  Clearly, after the first iteration in the while loop of \cref{alg:phase1}~(see lines~\ref{alg:phase1-while-cond}--\ref{alg:phase1-while-end}), no pair~$e=\{i,j\}$ with $M(e)>0$ are deleted; we call such pairs \myemph{matched pairs}. %
  Using the above reasoning successively, we know that in every iteration, no matched pairs of $M$ are deleted.
  In other words, after the while loop, no matched pairs of $M$ are deleted.
  This also means that after the while loop, every agent matched under~$M$ must contain at least one agent in her preference list.
  Hence, if an agent's preference list becomes empty after the while loop, then no \osm will match her.
  Since an agent is a singleton if and only if its preference list becomes empty after the while loop~\citet{Tan1991},
  no \osm will match a singleton agent.

  This shows ``Statement~\eqref{lem:osm:matched} $\Rightarrow$ \eqref{lem:osm:nonsingleton}''.
  
  It remains to show that ``Statement~\eqref{lem:osm:nonsingleton} $\Rightarrow$ \eqref{lem:osm:fully}'' since ``Statement~\eqref{lem:osm:fully} $\Rightarrow$ \eqref{lem:osm:matched}'' clearly holds.
  To this end, let $\mathcal{S}$ denote the set of agents returned from \cref{alg:phase1} on input~$(G,\sat)$,
  and let $\Pot$ denote the modified preference lists after the execution of \cref{alg:phase1}.
  Clearly, for each $x \in V\setminus \mathcal{S}$ there exists an agent~$y$ with $x=\ffirst_{\Pot}(y)$ and
  $y=\LLAST_{\Pot}(x)$.
  By Statement \eqref{lem:osm-mostpreferred} it must hold that $M(x,\wpref y) = 1$.
  In other words, each agent~$x\in V\setminus\mathcal{S}$ must be fully matched under~$M$.
  Since by \citet{Tan1991}, each agent~$x \in V$ is a non-singleton if and only if $x\in V\setminus \mathcal{S}$, it follows that
  each non-singleton is fully matched under~$M$.
  This completes the proof of the equivalence of the statements in \eqref{lem:osm-equivalence}.
  \fi  
\end{proof}

\begin{algorithm}[t!]
  \DontPrintSemicolon
  \caption{Irving's phase 1 algorithm on input~$(G=(V,E)), \sat)$} %
  \label[algorithm]{alg:phase1}
  \small
  Compute the preference lists~$\Pot$ from $\sat$\label{alg:phase1-osm-noties}

  \While{$\exists y\!\in\!V$ \iflong with \else w.\ \fi non-empty pref.\ s.t.~$\LLAST_{\Pot}(\ffirst_{\Pot}(y)) \!\neq\!y$\label{alg:phase1-while-cond}}{%
    $D \leftarrow \{\{\ffirst_{\Pot}(y), z\} \mid \ffirst_{\Pot}(y) \text{ prefers } y \text{ to }z  \text{ in } \Pot\}$
    
    $\Pot \leftarrow \Pot-D$ \label{alg:phase1-while-end}
  }

  \Return $\{v \in V\mid v \text{ has empty pref.\ list in } \Pot\}$ \label{return:OSM}
\end{algorithm}

\noindent\looseness=-1
Using \cref{ordinal:phase1correct} %
we can show that the \osm returned from \cref{alg:arbitrary-osm}
achieves the maximum number of fully matched agents whenever no ties are present.
\begin{lemma}[\appsymb]\label{osm:noties+perfect:p}
  For graphs with~$n$~vertices and with strict preferences, \POSFM can be solved in $O(n^2)$~time. %
\end{lemma}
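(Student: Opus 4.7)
The plan is to show that the matching $M^{\pi}$ produced by \cref{alg:arbitrary-osm} already fully matches the maximum possible number of agents among all \osm{s}, so the decision problem reduces to running \cref{alg:arbitrary-osm} and counting fully matched agents.

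First, I would invoke \cref{osroommates:always} to obtain, in $O(n^2)$ time, a half-integral \osm $M^{\pi}$ corresponding to some stable partition $\pi$ of $(G,\sat)$, in which every matched agent is fully matched. Thus the set of fully matched agents under $M^{\pi}$ equals the set of non-singletons of $\pi$.

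Next I would argue that no \osm can fully match more agents than $M^{\pi}$. Let $M^{\star}$ be an arbitrary \osm of $(G,\sat)$. By \cref{ordinal:phase1correct}\eqref{lem:osm-equivalence}, an agent is fully matched under $M^{\star}$ if and only if it is a non-singleton in \emph{some} stable partition of $(G,\sat)$. By \cref{prop:singletons} (Tan's theorem), every stable partition of $(G,\sat)$ has the same set of singletons, hence also the same set of non-singletons. Consequently the set of fully matched agents under $M^{\star}$ coincides with that under $M^{\pi}$, so $\fullymatched(M^{\star}) = \fullymatched(M^{\pi})$.

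Combining these two observations, it suffices to output \emph{yes} if and only if $\fullymatched(M^{\pi}) \ge \fullnum$. The running time is dominated by the $O(n^2)$ call to \cref{alg:arbitrary-osm}, plus an $O(n)$ pass to count fully matched agents. The only subtle step is the second one, i.e.\ making sure that the equivalence in \cref{ordinal:phase1correct}\eqref{lem:osm-equivalence} genuinely gives us a bound across \emph{all} \osm{s} and not just $M^{\pi}$; but that is exactly what the uniqueness of singletons in \cref{prop:singletons} provides.
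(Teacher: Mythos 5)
Your proposal is correct and follows essentially the same route as the paper: compute $M^{\pi}$ via \cref{alg:arbitrary-osm} (using \cref{osroommates:always}), then combine \cref{ordinal:phase1correct}\eqref{lem:osm-equivalence} with the uniqueness of singletons from \cref{prop:singletons} to conclude that no \osm{} fully matches more agents than $M^{\pi}$, and compare with $\fullnum$ in $O(n^2)$ total time. Your slightly stronger observation that every \osm{} fully matches exactly the same set of agents is also what the paper records separately as \cref{cor:osm-matched-unmatched}.
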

\iflong\begin{proof}
  Let $(G,\sat, \fullnum)$ be an instance of \POSFM with $G=(V,E)$ and $\sat$ having strict preferences, %
  and let $\Pot=(\succ_{v})_{v\in V}$ denote the strict preference lists derived from~$\sat$.

  We aim to show that on input~$(G,\sat)$, \cref{alg:arbitrary-osm} returns an optimal \osm~$M^{\pi}$ which maximizes the number of fully matched agents.
  To this end, let $\mathcal{S}$ the set of singleton agents according to \cref{prop:singletons} and let $M$ denote an arbitrary \osm of $(G,\sat)$.
  By \cref{ordinal:phase1correct}\eqref{lem:osm-equivalence}, we know that no agent in $\mathcal{S}$ is matched under $M$.
  In other words, every agent matched under~$\mathcal{S}$ comes from $V\setminus\mathcal{S}$.
  Since by \cref{osroommates:always}, every agent from $V\setminus \mathcal{S}$ is fully matched under $M^{\pi}$, we obtain that $M^{\pi}$ maximizes the number of fully matched agents.

  Hence, to solve \POSFM,  we only need to compare whether $\fullnum \le |M^{\pi}|$, and return yes if and only if this is the case.
  The running time of solving \POSFM comes from using \cref{alg:arbitrary-osm} which can be done in $O(|V|^2)$~time due to \cref{osroommates:always}.
\end{proof}

By \cref{ordinal:phase1correct}, we also obtain the same structural property for \ordinalstability as for integral stability.

\begin{observation}\label{cor:osm-matched-unmatched}
  For graphs with strict cardinal preferences, the set of agents is partitioned into two subsets, those that are (fully) matched in every \osm and those that are matched in none.
\end{observation}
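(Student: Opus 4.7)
The plan is to derive the observation directly from \cref{ordinal:phase1correct}\eqref{lem:osm-equivalence} together with the invariance of singletons across stable partitions stated in \cref{prop:singletons}. The point is that whether an agent is matched in a given \osm turns out to depend only on an intrinsic property of the instance, namely being a singleton in a (hence every) stable partition, and not on the chosen \osm.

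First, I would invoke \cref{prop:singletons}: for a graph with strict preferences $(G,\sat)$, every stable partition has the \emph{same} set of singleton agents. Call this set $\mathcal{S}$. Thus $\mathcal{S}$ depends only on $(G,\sat)$. Next, let $M$ be an arbitrary \osm of $(G,\sat)$; such an $M$ exists by \cref{osroommates:always}. Applying \cref{ordinal:phase1correct}\eqref{lem:osm-equivalence} to each agent $y\in V$, the three statements ``$y$ is matched in $M$'', ``$y\notin\mathcal{S}$'', and ``$y$ is fully matched in $M$'' are pairwise equivalent.

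From this I can read off the claim. For every \osm $M$, the set of agents fully matched in $M$ equals $V\setminus\mathcal{S}$, and the set of agents unmatched in $M$ equals $\mathcal{S}$. Since $\mathcal{S}$ does not depend on $M$, the partition $V=(V\setminus\mathcal{S})\,\dot\cup\,\mathcal{S}$ separates exactly those agents that are fully matched in every \osm from those that are matched in no \osm. In particular, every agent matched in some \osm is in fact fully matched in every \osm, which is the desired statement.

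There is no real obstacle here: the two ingredients (\cref{prop:singletons} for invariance of $\mathcal{S}$, and \cref{ordinal:phase1correct}\eqref{lem:osm-equivalence} for the three-way equivalence) do all the work, and the observation is essentially their immediate combination.
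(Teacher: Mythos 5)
Your proof is correct and follows essentially the same route as the paper, which derives the observation directly from \cref{ordinal:phase1correct}\eqref{lem:osm-equivalence}, with the invariance of the singleton set across stable partitions (\cref{prop:singletons}) making the criterion independent of the particular \osm. Nothing is missing.
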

\fi

We close this section by observing that the \osm that we obtain from \cref{alg:arbitrary-osm} is also a $2$-approximate solution \iflong for maximizing fully matched agents\fi for both \cardinalstability and \ordinalstability, even\iflong\ for preferences\fi\ with ties.
\begin{proposition}[\appsymb]\label{prop:2-approx}
  \cref{alg:arbitrary-osm} is a $2$-approximation algorithm for the maximization variants of \POSFM and \PCSFM.
\end{proposition}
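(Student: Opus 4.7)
The plan is to show that Algorithm~\ref{alg:arbitrary-osm}'s output $M^\pi$ fully matches at least $\tau^*/2$ agents, where $\tau^*$ is the maximum number of fully matched agents in any \csm{} (respectively, \osm{}) of the input. By \cref{osroommates:always}, $M^\pi$ is both \cstable{} and \ostable{} and fully matches exactly the non-singletons of the stable partition $\pi$ computed on the tie-broken instance, so $\tau \coloneqq \fullymatched(M^\pi) = |V \setminus S|$ where $S$ denotes the set of singletons of $\pi$. The task thus reduces to proving $\tau^* \le 2\tau$ for an arbitrary (not necessarily stable) fractional matching $M^*$ that fully matches $\tau^*$ agents.

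The key structural observation I would establish first is that every graph-neighbor of a singleton of $\pi$ is a non-singleton. This is a direct application of condition~(\ref{stable-partition:2}) of \cref{def:stable-partitions} to $v_i = s \in S$: any neighbor $v_j$ of $s$ must satisfy $\hat{\sat}(v_j, \pi^{-1}(v_j)) > \hat{\sat}(v_j, s) \ge 0$, which forces $\pi^{-1}(v_j) \neq v_j$. Consequently, all $M^*$-weight incident to $S$ is carried by edges between $S$ and $V \setminus S$.

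Let $F^*$ denote the set of fully matched agents in $M^*$. Clearly $|F^* \setminus S| \le |V \setminus S| = \tau$. For the remaining part, every $s \in F^* \cap S$ contributes exactly $1$ to the total $M^*$-weight incident to $S$, which by the observation above equals the total $M^*$-weight carried by edges between $S$ and $V \setminus S$, so
\[
  |F^* \cap S| = \sum_{s \in F^* \cap S} \sum_{v \in N_G(s)} M^*(s,v) \le \sum_{v \in V \setminus S}\, \sum_{s \in S \cap N_G(v)} M^*(v,s) \le |V \setminus S| = \tau,
\]
where the last step uses the matching constraint $\sum_{u \in N_G(v)} M^*(v,u) \le 1$ at each $v \in V \setminus S$. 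Adding the two bounds yields $\tau^* = |F^* \cap S| + |F^* \setminus S| \le 2\tau$, which is the desired $2$-approximation and applies uniformly to both \POSFM{} and \PCSFM{} regardless of ties, since the argument on $M^*$ uses only that it is a feasible fractional matching and not its stability. The only subtlety to watch out for is that the arbitrary tie-breaking in line~\ref{alg:osm-noties} of \cref{alg:arbitrary-osm} does not affect the conclusion, because the singleton-exclusion property of neighbors is a property of the stable partition of the tie-broken instance and holds by definition.
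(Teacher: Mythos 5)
Your proof is correct and follows essentially the same route as the paper: both arguments rest on the fact that the agents fully matched by $M^{\pi}$ form a vertex cover (equivalently, the singletons form an independent set) and then charge the optimum's fully matched agents against this set twice, using only that the optimal solution is a feasible fractional matching. The only cosmetic difference is that you derive the cover property directly from \cref{def:stable-partitions}\eqref{stable-partition:2}, while the paper derives it from the ordinal stability of $M^{\pi}$ and organizes the counting as a bound on the total matching weight rather than a split of the fully matched agents into those inside and outside~$S$.
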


\iflong\begin{proof}\else\begin{proof}[Proof sketch] \fi
  Let $G=(V,E)$ denote a graph with cardinal preferences~$\sat$ with ties.
  We only show the case of \cardinalstability since the case of \ordinalstability is analogous.
  Let $M^{\csms}$ denote an optimal \csm of $(G,\sat)$
  \iflong with maximum number of fully matched agents.
  \else with~$\fullymatched(M^{\csms})$ being maximum.
  \fi
  By \cref{osroommates:always}, let $M^{\pi}$ denote the \osm returned by \cref{alg:arbitrary-osm} on input~$(G,\sat)$, and let $\Pot$ be the strict preference lists that are used to compute the corresponding stable partition~$\pi$~(see lines \eqref{alg:osm-noties}--\eqref{alg:osm-partition}).
  Recall that in~$M^{\pi}$ every matched agent is fully matched.
  Let $A^{\pi}$ denote the set of (fully) matched agents in $M^{\pi}$; note that $|A^{\pi}|=\fullymatched(M^{\pi})$. 
  \ifshort
  To show that \cref{alg:arbitrary-osm} is a $2$-approximation algorithm for \cardinalstability,
  we observe that at least one agent of each pair of~$E$ must come from $A^{\pi}$ because otherwise this pair is \oblocking~$M^{\pi}$, i.e.,\\  
  {
  $~$\qquad~~$\forall \{x,y\}\in E\colon x\in A^{\pi} \text{ or } y\in A^{\pi}.\hfill~\refstepcounter{equation}(\theequation){\label{eq:2-approx-Api}}$ %
  }
  \else
  
  Since $M^{\pi}$ is a \csm, to show that \cref{alg:arbitrary-osm} is a $2$-approximation algorithm for \cardinalstability, we only need to show that $|\fullymatched(M^{\csms})|\le 2|A^{\pi}|$. %
  We first show that
  each acceptable pair must include some agent from $A^{\pi}$, i.e.,\\
  {
    $~~$~\qquad ~~~$\forall \{x,y\}\in E\colon x\in A^{\pi} \text{ or } y\in A^{\pi}.\hfill~\refstepcounter{equation}(\theequation){\label{eq:2-approx-Api}}$ 
  }
  \fi

  \ifshort
 \noindent Altogether, we derive 
  $\fullymatched(M^{\csms})$$\le$$\sum_{(x,y)\in V\times V}M(x,y) \stackrel{\eqref{eq:2-approx-Api}}{=} \sum_{\substack{x\in A^{\pi},\\ y \in N_G(x)}}M(x,y) + \sum_{\substack{x \in V\setminus A^{\pi},\\ {\color{red}y\in N_G(x)\cap A^{\pi}}}}M(x,y)
     \le    2\cdot |A^{\pi}|. %
  $
  \else Consider an arbitrary pair~$\{x,y\}\in E$. %
  Since $M^{\pi}$ is an \osm of $(G,\sat)$, it follows that $M^{\pi}(x,\wpref) = 1$ or $M^{\pi}(y,\wpref x) =1$.
  In other words, $x\in A^{\pi}$ or $y\in A^{\pi}$ since an agent is fully matched under~$M^{\pi}$ if and only if she is in~$A^{\pi}$(see \cref{osroommates:always}).
  
  \noindent Altogether, we derive that 
  \allowdisplaybreaks
  \begin{align*}
    \fullymatched(M^{\csms})&\le \sum_{x\in V\times V}M(x,y)\\
                        &    = \sum_{\substack{x\in A^{\pi},\\ y \in N_G(x)}}M(x,y) + \sum_{\substack{x \in V\setminus A^{\pi},\\ y\in N_G(x)}}M(x,y)\\
                        &    \stackrel{\eqref{eq:2-approx-Api}}{=} \sum_{\substack{x\in A^{\pi},\\ y \in N_G(x)}}M(x,y) + \sum_{\substack{x \in V\setminus A^{\pi},\\ {\color{red}y\in N_G(x)\cap A^{\pi}}}}M(x,y)\\
    & \le    2\cdot |A^{\pi}|. %
  \end{align*}
  
  Since \cref{alg:arbitrary-osm} runs in $O(|V|^2)$~(see \cref{osroommates:always}), it is a $2$-approximation for the maximization variant of \PCSFM.
  \fi
\end{proof}
\section{Hardness results}
We first give our results for \ostable\ matchings and then turn to \cstable\ matchings.

\myparagraph{Hardness for optimal \ostable matchings} %
The structural property that for \iflong the marriage case (i.e., bipartite graphs), \else
bipartite graphs, \fi
\ostable matchings can be decomposed into a convex combination of stable \emph{integral} matchings (\cref{lem:support-stable,lem:wel-linear}) implies that
\POSFM is equivalent to finding a stable integral matching with maximum cardinality~(\textsc{Max-Card SMTI}),
and \EOSFM is equivalent to finding a stable integral matching with minimum egalitarian cost~(\textsc{Min-Egal SMT}); the egalitarian cost and the welfare of an integral matching are dual to each other.
Since both problems are known to be NP-hard~\citet{MaIrIwMiMo2002}, we obtain the following.
\begin{theorem}[\appsymb]\label{th:Perfect-Welfare-ties-OSM_hard}
  When the preferences have ties, \POSFM and \EOSFM become \NPC, even for bipartite graphs.
\end{theorem}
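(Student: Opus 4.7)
The plan is to use the structural bridge established by \cref{lem:support-stable} and \cref{lem:wel-linear} to reduce the two stable-marriage-with-ties problems known to be \NPH\ by \citet{MaIrIwMiMo2002} to \POSFM and \EOSFM, respectively. Membership in \NP\ is already in hand from the MILP formulation proved earlier, so only \NPH ness has to be shown.

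For \POSFM I would reduce from \textsc{Max-Card SMTI} (given a bipartite instance with ties and possibly incomplete preferences, decide whether there is an integral stable matching of cardinality at least~$\kappa$). I simply treat the given preference lists as the input to \POSFM (assigning any satisfaction values consistent with the lists) and set $\fullnum \coloneqq 2\kappa$. The equivalence relies on two directions. In one direction, any integral stable matching~$N$ of cardinality~$\ell$ is itself an \osm, and fully matches exactly $2\ell$ agents. In the other direction, given an \osm~$M$ with $\fullymatched(M)\ge 2\kappa$, I invoke \cref{prop:marriage:support} to obtain a support $(M_j)_{j\in[k]}$ of $M$; by \cref{lem:support-stable} every $M_j$ is integrally stable, and \cref{lem:wel-linear} yields $\fullymatched(M)\le \max_j \fullymatched(M_j) = 2\cdot \max_j |M_j|$, so some~$M_j$ has cardinality at least~$\kappa$. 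This gives the desired polynomial-time many-one reduction.

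For \EOSFM I would do the analogous reduction from \textsc{Min-Egal SMT}. Here the preference lists have ties and the task is to decide whether some stable integral matching has egalitarian cost (sum of ranks) at most~$c$. I encode ranks as satisfactions by setting $\sat(u,v) \coloneqq R - \rank_u(v)$ for some large enough constant~$R$ (with the same tie structure as in the input), so that on any perfect integral matching the welfare equals $2R|V|/2$ minus the egalitarian cost, making the two objectives dual. Assuming, as is standard in \textsc{Min-Egal SMT}, that all stable matchings are perfect (this is the case on the hard instances produced by \citet{MaIrIwMiMo2002}, and can otherwise be arranged by standard padding), the welfare of any stable integral matching~$N$ equals $R|V|-\wel^{\mathrm{rank}}(N)$, and any \osm~$M$ satisfies $\wel(M) = \sum_j x_j \wel(M_j)$ by \cref{lem:wel-linear} with every $M_j$ integrally stable by \cref{lem:support-stable}. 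Hence the maximum welfare achievable by an \osm equals the maximum welfare achievable by a stable integral matching, which transfers the hardness.

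The only real obstacle is ensuring the satisfaction values used in the \EOSFM reduction genuinely reflect the tied ranks without introducing spurious \oblocking pairs, and making sure that on the source instances every stable matching is perfect (so that the additive constant in the welfare/egal-cost duality is really constant). Both points are handled by a careful but routine choice of~$R$ and, if needed, by adding a ``consenting'' partner of lowest satisfaction to each agent so that stability is unaffected but perfectness is guaranteed; I would defer these bookkeeping details to the formal proof.
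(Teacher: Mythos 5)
Your proposal is correct and follows essentially the same route as the paper: reduce \textsc{Max-Card SMTI} to \POSFM and \textsc{Min-Egal SMT} to \EOSFM by encoding ranks as satisfactions, using \cref{prop:marriage:support}, \cref{lem:support-stable}, and \cref{lem:wel-linear} to transfer optima between \osm{s} and their integral stable supports (the paper avoids your padding concern by using \textsc{Min-Egal SMT} with complete lists, where every stable integral matching is automatically perfect). Your explicit threshold $\fullnum=2\kappa$ is just a slightly more careful bookkeeping of the same argument.
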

\iflong
\begin{proof}
  We first tackle \POSFM by reducing from the \NPH problem~\textsc{Max-Card SMTI}~\cite{MaIrIwMiMo2002}.
  \textsc{Max-Card SMTI} is the problem of deciding whether, given
  a bipartite graph~$G=(U\cup W,E)$ with preference lists~$\Pot=(\wpref_x)_{x\in U\cup W}$ 
  such that
  each agent~$u\in U$ (resp.\ $w\in W$) has a preference list over~$N_G(u)$ (resp.\ $N_G(w)$), 
  and a non-negative integer~$\fullnum$,
  there exists a stable \emph{integral} matching of cardinality at least~$\fullnum$.
  Recall that an integral matching~$M$ is \myemph{stable} if no two agents form a blocking pair,
  and two agents~$u,w$ form a \myemph{blocking pair} of~$M$ if
  \begin{compactenum}[(i)]
    \item $\{u,w\}\in E$ and 
    \item $M(u, \wpref w) + M(w, \wpref u) = 0$.
  \end{compactenum}

  Let $I=(G,\Pot, \fullnum)$
  be an instance of \textsc{Max-Card SMTI} and let $\sat$ be the cardinal preferences derived from $\Pot$, i.e.,
  for each $x\in U\cup W$ and each $y\in N_G(x)$ define
  \begin{alignat}{3}
    \nonumber    \forall (x,y)\in & (U\cup W)\times (U\cup W)\colon\\ 
    \label{maxsmti-rank}
    \rank_{\Pot}(x,y) \coloneqq &
    \left\{
    \begin{array}{ll}
      |\{y'\mid y'\pref_x y\}| ,&\text{if } \{x,y\}\in E \\
     |N_G(x)|, & \text{otherwise,}\\
    \end{array}
    \right. \\
    \label{maxsmti-sat}
    \sat_{\Pot}(x,y)  \coloneqq & |N_G(x)| - \rank_{\Pot}(x,y).
  \end{alignat}
  Note that this definition satisfies Condition~\eqref{eq:Pot} so that the preference lists derived from $\sat_{\Pot}$ are equivalent to~$\Pot$.
  We aim to show that $(G,\sat_{\Pot})$ admits a stable integral matching with cardinality~$\fullnum$ if and only if $(G,\sat_{\Pot})$ admits an \osm where at least $\fullnum$ agents are fully matched.
  The ``only if'' direction is clear since every stable integral matching of $(G, \Pot)$ is an \osm of $(G,\sat_{\Pot})$. %
  For the ``if'' direction, let $M$ be an \osm of $(G, \sat_{\Pot})$ with $\fullymatched(M) \ge \fullnum$.
  By \cref{prop:marriage:support}, let $(M_j)_{j\in [k]}$ be a support of~$M$.
  By \cref{lem:wel-linear}, $\max_{j\in [k]}\fullymatched(M_j) \ge \fullymatched(M) \ge \fullnum$.
  Since every~$M_j$, $j\in [k]$, is stable by \cref{lem:support-stable},
  there exists an integral stable matching of $(G, \sat_{\Pot})$ with cardinality at least~$\fullnum$.
  This completes the proof for showing that \POSFM is NP-hard.
  
  The proof for \EOSFM works similarly.
  Instead of reducing from \textsc{Max-Card SMTI}, we reduce from the \NPH \textsc{Min-Egal SMT} problem~\cite{MaIrIwMiMo2002}.
  \textsc{Min-Egal SMT} is the problem of deciding whether, given
  a complete bipartite graph~$G=(U\cup W,E)$ with complete preference lists~$\Pot=(\wpref_x)_{x\in U\cup W}$
  such that
  each agent~$u\in U$ (resp.\ $w\in W$) has a preference list over all agents from~$W$ (resp.\ $U$), 
  and a non-negative integer~$\gamma'$,
  there exists a stable \emph{integral} matching of \myemph{egalitarian cost} at most~$\gamma'$.
  Here, the \myemph{egalitarian cost} of an integral matching~$M$ is defined as
  \begin{align*}
    \egal_{\Pot}(M) \coloneqq \sum_{x\in U\cup W}\rank_M(x). %
  \end{align*}
  Note that for complete preference lists, each stable integral matching must be perfect.

  Now, we observe that if we use the satisfaction function~$\sat_{\Pot}$ given in \eqref{maxsmti-sat},
  then each each integral matching~$N$ of $(G, \sat_{\Pot})$ satisfies the following duality:
  \begin{align}
   \egal_{\Pot}(N) + \wel_{\sat_{\Pot}}(N) = 2|E| \label{eq:egal-welfare}
  \end{align}
  Hence, a stable integral matching has maximum social welfare if and only if it has minimum egalitarian cost (among all stable integral matchings).
  This means that finding a stable integral matching with maximum social welfare is NP-hard.
  Hence, to show \NPH{ness} for \EOSFM, it suffices to show that
  $(G,\Pot)$ admits a stable integral matching with welfare~$\welfare$ if and only if
  $(G,\sat_{\Pot})$ admits an \osm with welfare~$\welfare$.
  The ``only if'' direction is clear since every stable integral matching of~$(G, \Pot)$ is an \osm of~$(G,\sat_{\Pot})$.

  For the ``if'' direction, let $M$ be an \osm of~$(G, \sat_{\Pot})$ with $\wel(M)\ge \welfare$.
  By \cref{prop:marriage:support}, let $(M_j,x_j)_{j \in [k]}$ be a convex combination of $M$. %
  To show the ``if'' direction, we only need to show that $\max_{j}(\wel(M_j))\ge \wel(M)$.
  Similarly to the proof of \cref{thm:egal-bipartite-poly}, by \cref{lem:wel-linear} and by the property of convex combinations
  $\wel(M) = \sum_{j\in }x_j\cdot \wel(M_j) \le \max_{j\in [k]}\wel(M_j)$.
  This completes the proof for \EOSFM.
\end{proof}

\noindent \textbf{Remark.} The proof for \cref{th:Perfect-Welfare-ties-OSM_hard} also implies NP-hardness when we instead aim to find an \ostable matching with maximum sum of matching values.

\smallskip

\fi

\citeauthor{Feder1994}~(\citeyear{Feder1994}) showed that finding a maximum-welfare stable integral matching in non-bipartite graphs is \APXH, even if no ties are present.
In the next theorem we show that the idea behind the NP-hardness reduction of \citeauthor{Feder1994} with some additional analysis yields the same inapproximability result for \ordinalstability.
\begin{theorem}[\appsymb]\label{th:Welfare-OSR_hard}
  The maximization variant of \EOSFM is \APXH\ and \EOSFM is \NPC, even if no ties are present.
\end{theorem}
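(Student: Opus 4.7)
The plan is to lift Feder's APX-hardness reduction for the integral maximum-welfare Stable Roommates problem (without ties) to \EOSFM. Feder reduces from a restricted variant of vertex cover on cubic graphs and constructs a strict-preference roommates instance $(G',\sat')$ with small gadgets such that integral stable matchings correspond to vertex covers of the input graph and the welfare encodes the cover size through an $L$-reduction. I would first recall the construction and the welfare formula linking integral stable matchings to covers, which already delivers the desired reduction in the integral world.

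The essential new step is to verify that the same welfare upper bound holds for every \osm of $(G',\sat')$, not just for integral stable matchings. For this I would invoke the structural results of \cref{sec:structural}. By \cref{osroommates:always} together with the singleton-uniqueness property of stable partitions, the non-singleton agents form the fixed set of fully matched agents in every \osm of $(G',\sat')$, while the singletons are matched in none. Moreover, \cref{ordinal:phase1correct}, part~(1) forces each matched agent $y$ to be incident with total weight one to her most-preferred acceptable neighbor. Because Feder's gadgets are designed so that the edges that must be forced (or blocked) lie at the top of the agents' preference lists, these two properties rigidly constrain the fractional matching within each gadget, leaving only convex-combinationlike freedom between the ``legal'' integral configurations Feder enumerates. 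A local exchange then shows that the welfare contributed by each gadget is at most that of the best integral configuration, so the total fractional welfare is bounded by the maximum integral welfare. Since the construction is polynomial and the welfare gap is preserved linearly, Feder's $L$-reduction transfers, giving APX-hardness of the maximization variant; NP-completeness of the decision version follows together with the earlier NP-containment proposition.

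The main obstacle is precisely the absence of a Birkhoff--von Neumann decomposition in the non-bipartite setting: \cref{lem:support-stable} and \cref{lem:wel-linear} are not available globally, so we cannot simply write a given \osm as a convex combination of integral stable matchings and pick the best one. The exchange argument therefore has to be carried out locally, gadget by gadget, and rests on the rigidity of ordinal stability at the top of preference lists established in \cref{ordinal:phase1correct}. Checking that Feder's specific gadgets remain well-behaved under this rigidity, in particular that no fractional \osm can outperform the integral configurations by routing weight through lower-ranked edges, is the key technical verification; once it is done, the rest of Feder's analysis applies verbatim.
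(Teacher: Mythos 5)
There is a genuine gap, and it sits exactly where you defer the work. Your plan is to reuse Feder's construction verbatim and then argue that no \osm\ of that instance can achieve higher welfare than the best stable integral matching; but this claim is asserted, not proven, and it does not follow from the structural results you cite. \cref{ordinal:phase1correct}\eqref{lem:osm-mostpreferred} only forces, for each agent~$y$, that $y$'s most-preferred neighbor~$x$ puts total weight one on agents it weakly prefers to~$y$; by itself this does not freeze a gadget into ``convex-combination-like freedom between the legal integral configurations.'' In general roommates instances a fractional \osm\ can strictly outperform every integral stable matching (or exist when none does, e.g.\ on odd cycles), so the bound you need is not a generic consequence of ordinal stability --- it is a property that has to be engineered into the gadgets and then verified by an explicit welfare computation. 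Since you never write down Feder's gadgets or check that weight cannot be routed through their lower-ranked edges, the entire technical core of the theorem is missing from the proposal.

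This is also precisely where the paper departs from your route: it does \emph{not} lift Feder's instance, but builds a new reduction from \MaxIS\ on cubic graphs with a bespoke four-agent vertex gadget ($u_i,w_i,x_i,y_i$, satisfactions involving $L=2n+2m$) plus cross edges between the $u_i$'s of adjacent vertices. On that gadget, three applications of \cref{ordinal:phase1correct}\eqref{lem:osm-mostpreferred} pin the matching down to a single parameter $\alpha_i=M(u_i,y_i)$ with $M(u_i,w_i)+M(u_i,y_i)=1$ (so no weight leaks onto the inter-gadget edges), the non-blocking condition on $\{u_i,u_{i'}\}$ forces $\alpha_i=0$ or $\alpha_{i'}=0$ for adjacent vertices, and the total welfare is exactly $L\cdot(1+\sum_i\alpha_i)$. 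This exact linear relation simultaneously yields the correspondence between independent sets and \osm{s} of welfare $L(k+1)$, NP-hardness, and gap preservation for APX-hardness. Your intuition about ``interpolating between integral configurations'' is the right one, but it is realized only because the gadgets were designed to make it true; if you want to keep your approach you must either reproduce Feder's gadgets and carry out the analogous per-gadget analysis, or (as the paper does) replace them with gadgets for which that analysis is tractable.
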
%
\iflong
\begin{proof}
  To show the inapproximability, we reduce from the maximization variant of \IS, the \MaxIS problem, which is defined as follows.
  \probdefopt{\MaxIS~(\MaxISs)}
  {A graph $G=(V,E)$.}
  {Find a maximum-cardinality independent set of~$G$;
  here, an \myemph{independent set} of $G$ is a vertex subset~$X\subseteq$ such that $G[X]$ is edgeless.}

  \MaxIS is \APXH\ and hence \NPH, even for cubic graphs~\cite{AliKann2000apx-cubic}.
  Let $G$ be a cubic graph with vertex set~$V=\{v_1,v_2,\ldots,v_n\}$. %
  The basic idea is to construct an instance of \EOSFM where for each vertex~$v_i$ of $G$, there are essentially \emph{two} possible matchings, say $M^{\lis}_i$ and $M^{\lvc}_i$, which each may lead to some \ostable matching.
  However, $M^{\lis}_i$ has a higher welfare than $M^{\lvc}_i$, but it is not possible to include~$M^{\lis}_i$ for two adjacent vertices as they will induce an \oblocking pair.

  We create our instance~$I'$ of \EOSFM as follows.
  For each vertex~$v_i \in V$, we create four agents~$u_i,w_i,x_i,y_i$.
  For ease of notation, define \myemph{$L\coloneqq 2n+2m$}.
  We describe the cardinal preferences of the agents in~\cref{fig:Welfare-OSR-card}, where for each vertex~$v_i\in V$, let \myemph{$d_i$} and \myemph{$N(u_i)$} denote the degree of $v_i$ in $G$ and the set of agents~$u_{i'}$ which correspond to the neighbors $v_{i'}$ of $v_i$ in $G$, respectively,
  and let $\pi_i\colon N(u_i) \to [d_i]$ denote an arbitrary but fixed enumeration of the agents in $N(u_i)$; the intent of~$\pi_i$ is to give each ``adjacent'' agent a unique rank so that no ties are present.
  Observe that the constructed preferences indeed contain no ties.

  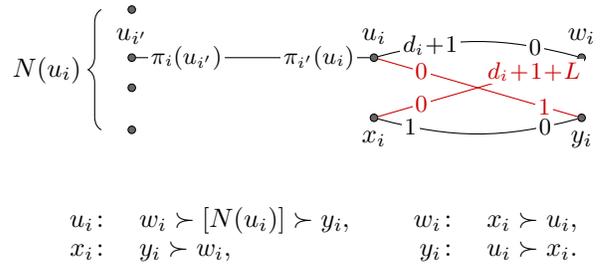
\begin{figure}
      \begin{tikzpicture}
      \def \xscale {2.3}
      \def \yscale {.8}
      \def \ss {.8}
      \def \ee {-1.2}
      \foreach \x / \y / \n / \c / \p in {0/0/{u_i}/ui/above, 1.2/0/{w_i}/wi/above, 0/-1/{x_i}/xi/below, 1.2/-1/{y_i}/yi/below,-1.4/0/{u_{i'}}/uip/above} {
        \node[agent] at (\x*\xscale, \y*\yscale) (\c) {};
        \node[\p = 0pt of \c] {$\n$};
      }

      \foreach \sstep in {\ee, -0.5 0, \ss}{
        \node[agent] at (-1.4*\xscale, \sstep*\yscale) {};
      }

       \draw[decoration={brace,mirror, raise=5pt,amplitude=5pt},decorate] (-1.5*\xscale, \ss*\yscale) -- node[left=2ex] {$N(u_i)$} (-1.5*\xscale, \ee*\yscale);

       \foreach \c / \a / \s /  \t / \n / \p / \m / \q in {black/15/ui/wi/{d_i\!+\!1}/0.25/0/{0.8},
         red!80!black/0/ui/yi/{0}/0.2/1/0.85,
         black/{-15}/xi/yi/1/0.15/0/0.85,
         red!80!black/0/xi/wi/0/0.2/{d_i\!+\!1\!+\!L}/0.8,
         black/0/ui/uip/{\pi_{i'}(u_i)}/0.2/{\pi_{i}(u_{i'})}/0.8}
      {
        \draw[] (\s) edge[bend left=\a,color=\c] node[pos=\p,fill=white, inner sep=1pt] {\footnotesize $\n$} node[pos=\q,fill=white,inner sep=1pt] {\footnotesize $\m$} (\t);
      }
    \end{tikzpicture}
    \[
      \begin{array}{rlrl}
        \\
        ~u_i\colon & w_i \succ [N(u_i)] \succ y_i, ~~~~~ & w_i\colon & x_i \succ u_i,\\
        ~x_i\colon & y_i \succ w_i, ~~~~~ & y_i\colon & u_i \succ x_i.
      \end{array}
    \]
    \caption{Top: The cardinal preferences of the four agents~$u_i,x_i,w_i,y_i$ which correspond to vertex~$v_i$, constructed in the proof of \cref{th:Welfare-OSR_hard}.
      Here, agent~$u_{i'}$ corresponds to a neighbor $v_{i'}$ of~$v_i$, i.e., $u_{i'}\in N(u_i)$. Recall that $\pi_i$ and $\pi_{i'}$ are two fixed enumerations of the ``neighboring'' agents from $N(u_i)$ and $N(u_{i'})$, respectively. Bottom: The induced preference lists, where $[N(u_i)]$ denotes an order of the agents in $N(u_i)$ in decreasing order according to~$\pi_i$.}
    \label{fig:Welfare-OSR-card}
  \end{figure}

  This completes the construction of our instance~$I'$.
  Clearly,~$I'$ can be constructed in linear time and every agent has strict preferences.

  We first show the ``equivalence'' in terms of solutions.
  \begin{claim}\label{approx:equivalence}
    $G$ has an independent set~$V'$ with $|V'| = k$ if and only if the constructed instance~$I'$ admits an \osm~$M$ with $\wel(M) = L\cdot (k+1)$.
  \end{claim}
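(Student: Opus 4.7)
My plan is to prove the claim by two explicit constructions, matching IS solutions to \osm{s} of a prescribed welfare and vice versa.

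For the forward direction, assume $V'\subseteq V(G)$ is an independent set with $|V'|=k$. I would define $M$ gadget by gadget: for each $v_i\in V'$ set $M(u_i,y_i)=M(x_i,w_i)=1$ (call this ``heavy mode''), and for each $v_i\notin V'$ set $M(u_i,w_i)=M(x_i,y_i)=1$ (``light mode''), leaving all remaining edges at $0$. Both modes fully match the four gadget agents, so $M$ is a valid fractional matching. I would verify \ordinalstability\ edge by edge: the four within-gadget edges pass by direct inspection of the two modes; for any cross-gadget edge $\{u_i,u_j\}$ with $\{v_i,v_j\}\in E(G)$, independence of $V'$ guarantees that some endpoint, say $v_j$, is in light mode, so $M(u_j,w_j)=1$ and since $w_j$ is the top choice of $u_j$ we get $M(u_j,\wpref u_i)=1$, ruling out blocking. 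The welfare then totals $\sum_{v_i\in V'}(d_i+2+L)+\sum_{v_i\notin V'}(d_i+2)=\sum_i(d_i+2)+Lk=(2m+2n)+Lk=L(k+1)$.

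For the reverse direction, I would take an \osm\ $M$ of $I'$ with $\wel(M)=L(k+1)$ and, for each gadget $i$, introduce $\alpha_i=M(u_i,w_i)$, $\beta_i=M(u_i,y_i)$, $\gamma_i=M(x_i,y_i)$, $\delta_i=M(x_i,w_i)$, and $\epsilon_i=\sum_{u_j\in N(u_i)}M(u_i,u_j)$. The non-blocking requirement on each of the four within-gadget edges gives four disjunctions: $\alpha_i=1$ or $\alpha_i+\delta_i=1$ (edge $\{u_i,w_i\}$); $\alpha_i+\beta_i+\epsilon_i=1$ or $\beta_i=1$ (edge $\{u_i,y_i\}$); $\gamma_i=1$ or $\beta_i+\gamma_i=1$ (edge $\{x_i,y_i\}$); and $\delta_i=1$ or $\gamma_i+\delta_i=1$ (edge $\{x_i,w_i\}$). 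A case analysis combining these disjunctions with the matching capacity constraints at the four gadget agents shows that in every feasible configuration $\epsilon_i=0$ and the gadget's welfare contribution equals exactly $d_i+2+\delta_i L$. Summing over $i$ (cross-gadget edges contribute nothing since all $\epsilon_i=0$) yields $\wel(M)=\sum_i(d_i+2)+L\sum_i\delta_i=L+L\sum_i\delta_i$, so $\sum_i\delta_i=k$.

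To extract an IS, set $V':=\{v_i : \delta_i>0\}$, giving $|V'|\ge\sum_i\delta_i=k$. For independence, suppose for contradiction that $\{v_i,v_j\}\in E(G)$ with $\delta_i,\delta_j>0$. The case analysis then gives $\alpha_i+\epsilon_i=1-\delta_i<1$, and since $u_j\succ_{u_i}y_i$ we have $M(u_i,\wpref u_j)\le\alpha_i+\epsilon_i<1$; symmetrically $M(u_j,\wpref u_i)<1$, so $\{u_i,u_j\}$ would be an \oblocking\ pair, contradicting \ordinalstability\ of $M$. Hence $V'$ is an IS of size at least $k$, and taking any $k$-subset yields the desired IS.

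The hard part will be the reverse-direction case analysis. One has to thread together the four stability disjunctions with the matching constraints to pin down $\epsilon_i=0$ and the linear formula $d_i+2+\delta_i L$ for the gadget welfare in every feasible branch, including the fractional regime where $\alpha_i=\gamma_i=1-\delta_i$, $\beta_i=\delta_i$, $\delta_i\in(0,1)$. These two facts are precisely what enable both the welfare summation to a closed form and the blocking argument that forces $V'$ to be independent.
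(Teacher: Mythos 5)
Your proposal is correct and follows essentially the same route as the paper: identical forward construction and stability check, and the same reverse-direction structure (per-gadget equations forcing $M(u_i,w_i)+M(u_i,y_i)=1$, zero weight on cross-gadget edges, welfare $d_i+2+\delta_i L$ per gadget, then an \oblocking-pair argument for independence). The only difference is that where the paper obtains these equations by invoking \cref{ordinal:phase1correct}\eqref{lem:osm-mostpreferred} three times, you derive them directly from the four within-gadget non-blocking disjunctions plus the capacity constraints — and that "case analysis" collapses immediately (each disjunct forces the same equality), so the step you flag as hard is in fact routine.
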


  \begin{proof}[Proof of \cref{approx:equivalence}]
    \renewcommand{\qedsymbol}{$\diamond$} For the ``only if'' part,
    assume that $G$ has an independent set of cardinality~$k$ and let $V'\subseteq V$ denote such an independent set.
  We form an integral matching~$M$ as follows.
  \begin{enumerate}[(i)]
    \item\label{it:IS} For each $v_i\in V'$, let $M(u_i,y_i)=M(x_i,w_i)=1$.
    \item\label{it:notIS} For each $v_i\in V\setminus V'$, let $M(u_i,w_i)=M(x_i,y_i)=1$.
    \item For each remaining acceptable edge~$e=\{a,b\}$ not mentioned in~\eqref{it:IS}--\eqref{it:notIS}, let $M(a,b) = 0$.
  \end{enumerate}
  By construction, matching~$M$ has the following social welfare:
  \allowdisplaybreaks
  \begin{align*}
    \wel(M)  &= \sum_{\mathclap{v_i\in V'}} \big(\sat(u_i,y_i)\!+\!\sat(x_i,w_i)\big) \\
    & \qquad + \sum_{\mathclap{v_i \in V\setminus V'}} \big(\sat(u_i,w_i)\!+\!\sat(x_i,y_i)\big)\\
     &=  \sum_{\mathclap{v_i\in V'}} (d_i+2+L) + \sum_{\mathclap{v_i \in V\setminus V'}} (d_i+2)\\
    &= 2n+2m+|V'|\cdot L \stackrel{{\scriptsize L=2n+2m}}{=} L\cdot (k+1).
  \end{align*}
  It remains to show that $M$ is \ostable.

  Clearly, for each $v_i\in V \setminus V'$, neither~$u_i$ nor $x_i$ is involved in an \oblocking pair since they already receive their most preferred agents, respectively.
  Analogously, for each~$v_i\in V'$, neither $w_i$ nor $y_i$ is involved in an \oblocking pair. %
  Hence, for each~$i\in [n]$, no two agents from $\{u_i,w_i,x_i,y_i\}$ form an \oblocking pair.
  
  Finally, suppose that $\{u_i,u_{i'}\}$ is an \oblocking pair of $M$.
  By the definition of~$M$,~$u_i$ and $u_{i'}$ appear in each other's preference lists and that $M(u_i,y_i) = M(u_{i'},y_{i'})=1$.
  By the cardinal preferences and by the definition of~$M$,~$v_i$ and $v_i'$ are neighbors in $G$, and thus $v_i,v_{i'}\in V'$, a contradiction to $V'$ being an independent set.

  For the ``if'' part, assume that $I'$ admits an \osm with $\wel(M)=L\cdot (k+1)$, and let $M$ be such a matching.
  We claim that the following vertex set~$V'\coloneqq \{v_i \in V \mid M(u_i,y_i) > 0\}$ is an independent set of size at least~$k$.

  To show that $V'$ is an independent set, let us consider an arbitrary vertex~$v_i\in V'$.
  We aim to show that no neighbor of~$v_i$ belongs to~$V'$.
  By applying \cref{ordinal:phase1correct}~\eqref{lem:osm-mostpreferred} three times (setting $y=u_i$, $y=x_i$, $y=y_i$, respectively), we obtain that
  \begin{align}
   \nonumber M(w_i, x_i) \!+\! M(w_i,u_i)  &= M(y_i,u_i)+M(y_i,x_i)\\
     &= M(x_i, y_i) + M(x_i,w_i) =1.\label{wiyixi}
  \end{align}
  This means that
  \begin{align}
    M(u_i,w_i)+M(u_i,y_i) = 1. \label{ui}
  \end{align}
  In other words, $M(u_{i}, u_{i'})=0$ for each $u_{i'}$ with $v_{i'}$ being the neighbor of $v_i$, and
  $V'$ is indeed an independent set.
  
  It remains to consider the size~$|V'|$. %
  To ease notation, for each~$v_i\in V$, let $\alpha_i = M(u_i,y_i)$ with $0 \le \alpha_i \le 1$.
  Then, by \eqref{wiyixi}--\eqref{ui},
  we have that $M(u_i,y_i) = M(x_i,w_i) = \alpha_i$ while $M(u_i,w_i) = M(x_i,y_i) = 1-\alpha_i$.
  Hence,
  \begin{align*}
    L\cdot (h+1) & =  \sum_{v_i \in V} \big( (1-\alpha_i) \cdot (d_i+1+0+1+0)+\\
    &~~~~~~~~~~~\alpha_i\cdot (0+1+0+d_i+1+L)\big)\\
    & =  \sum_{v_i\in V} \big(d_i+2 \big) + L\cdot \sum_{v_i\in V} \alpha_i\\
    & \stackrel{{\scriptsize L=2n+2m}}{=}  L+ L\cdot \sum_{v_i\in V}\alpha_i.
  \end{align*}
  Since $\alpha_i \le 1$, the above implies that there must be at least~\(k\) vertices~$v_i\in V$ such that $\alpha_i > 0$.
  In other words, $|V'|\ge k$.
  Since the independent set property is closed under subsets, we delete some vertices from $V'$ to let it contain exactly $k$ vertices if $|V'|>k$.
  \end{proof}

  Now, suppose that we have a polynomial-time algorithm which approximates the maximization variant of \EOSFM within factor~$\varepsilon$.
  That is, on input~$(G,\sat)$ and a positive approximation error value~$\varepsilon$ with $0 < \varepsilon < 1$,
  the algorithm returns an \osm~$M$ such that $\wel(M)\ge \varepsilon \cdot \welfare^*$ where $\welfare^*$ denotes the maximum welfare of all \osm{s} of $(G,\sat)$.
  If the maximum-cardinality independent set of $G$ has at least $k^*$ vertices, then 
  by \cref{approx:equivalence}, our instance~$I'$ has an \osm~$M^*$ with
    $\wel(M^*)=L\cdot (k^*+1)$.
  Hence, the approximation algorithm finds an \osm~$M$ with %
  \begin{align}\label{eq:wel-approx}
    \wel(M)\ge \varepsilon \cdot L\cdot (k^*+1).
  \end{align}
  Again, by \cref{approx:equivalence}, the approximation algorithm also finds an independent set~$V'$ for $G$ with $|V'|=k$ and %
  \begin{align*}
    k = \frac{\wel(M)}{L} - 1 \stackrel{\eqref{eq:wel-approx}}{\ge} \varepsilon k^* + \varepsilon -1 > \varepsilon k^*-1.
  \end{align*}
  Since $k$ is an integer, it follows that $k\ge \varepsilon \cdot k^*$.
  In other words, the approximation algorithm can also be used to approximate \MaxIS to an arbitrary factor, a contradiction.

  Since \MaxISs is also NP-hard, the same reduction shows that \EOSFM is NP-hard, even for strict preferences.
\end{proof}

\noindent \textbf{Remark.} Note that in the instance created in the reduction for~\cref{th:Welfare-OSR_hard} the number of agents acceptable to each agent is bounded by five.
This implies that the both APX-hardness and \NPH{ness} remain even for this restricted case.
\smallskip

\fi

\myparagraph{Hardness for optimal \cstable matchings.}
We now prove that \ECSFM and \PCSFM are \NPC even when the input graph $G$ is bipartite and \iflong the values of $\sat(v,\cdot)$ are distinct for each vertex $v$ in $G$, i.e., \fi\sat\ has no ties.  For each of the problems we give a \iflong many-to-one \fi
polynomial-time reduction from the \iflong well-known \fi
\iflong
\NPC problem \IS~\cite{GJ79}.%
\probdef{\textsc{Independent Set}~(\ISs)}
{A graph $G$ and a non-negative integer $k$.}
{Is there a size-at-least-\(k\)-vertex \myemph{independent set} $X$ in $G$, that is, a vertex subset~\(X\) such that $G[X]$ is edgeless?}
\else
\NPC problem \IS~(\ISs)~\cite{GJ79}, wherein we are given a graph $G$ and a non-negative integer $k$ and ask to find a \myemph{independent set} $X$ in $G$ with at least \(k\) vertices, i.e., a vertex subset~\(X\) such that $G[X]$ is edgeless.
\fi

The gadgets used in and the correctness proof of the two reductions have some similar parts.
In fact, we use the same \myemph{edge gadgets}, which we now describe.

\looseness=-1
\ifshort
Let $G=(V,E)$ be a graph with vertex set~$V=\{v_1,\ldots,v_n\}$ and edge set~$E=\{e_1,\ldots,e_m\}$.
The \myemph{edge gadgets} are contained in a bipartite graph $G_E$ with preference function~$\sat$, constructed as follows.
The vertex set $V(G_E)$ is the union of the two disjoint sets $U_E$ and $W_E$ which are defined as follows.
For each $j \in [m]$:
\begin{compactitem}[--]
\item Add vertices $e_j$, $f_j$, and $g_j$ to $U_E$ and a vertex~$h_j$ to $W_E$.
\item Add $\{ e_j^{i}, e_j^{i'}$$\mid$$e_j$$\in$$E, e_j$$=$$\{v_i, v_{i'}\} \text{ and } i,i' \in [n] \}$ to~$W_E$.
\item Let $e_j = \{v_i,v_{i'}\} \in E$. Then add two extra vertices
  $u_i$ and $u_{i'}$ to $U_E$ if not already present.
  They represent the connections of the edge gadget to the \myemph{vertex gadgets}.%
\end{compactitem}
\looseness=-1
The edges of $G_E$ along with the values of \sat\ are for each $j \in [m]$ as described in Figure~\ref{fig:csm_edge_gadget}.
The \sat-values for $u_i$ and $u_{i'}$ will be defined in the hardness construction. 
For each $j \in [m]$ by the \myemph{edge gadget for edge $e_j$} (of $G$) we refer to the subgraph of $G_E$ induced by $\{e_j, f_j, g_j, h_j, e^i_j, e^{i'}_j\}$.
\fi

\iflong
\begin{construction}[Edge gadget for the \cardinalstability]\label{cons:edge-gadget}
  Let $(G=(V,E), k)$ be an instance of \ISs, where $V=$ $\{v_1, v_2, \ldots, v_n\}$ and $E$$=$$\{e_1, e_2, \ldots, e_m\}$ denote the vertex set and the edge set, respectively.
  The \myemph{edge gadgets} are contained in a bipartite graph $G_E$ with preference function~$\sat$.
The \myemph{vertex set $V(G_E)$} is the union of two disjoint sets~\myemph{$U_E$} and \myemph{$W_E$}; we will call the vertices in $G_E$ \emph{agents} to distinguish them from the vertices in~$G$.
For each $e_j\in E$ with $e_j=\{v_i, v_{i'}\}$ do the following:
\begin{compactitem}[--]
  \item Add to~$U_E$ three {agents}~$e_j$, $f_j$, and $g_j$;
  add to~$W_E$ an agent~$h_j$.
\item Add to~ $W_E$ two agents $e_j^{i}$ and  $e_j^{i'}$. %
\item %
Add to~$U_E$ two extra agents~$u_i$ and~$u_{i'}$ to $U_E$ if not already present.
They represent the connections of the edge gadget to the \myemph{vertex gadgets}.%
\end{compactitem}
\looseness=-1
For each $j \in [m]$ by the \myemph{edge gadget for edge~$e_j$} (of $G$) we refer to the subgraph of $G_E$ induced by the agents $\{{e_j}, f_j, g_j\} \cup \{h_j, e^i_j, e^{i'}_j\}$. 
\cref{fig:csm_edge_gadget} illustrates the edge gadget for edge~$e_j\in E$.
More precisely, it depicts the acceptability (sub)graph corresponding to edge~$e_j$, $e_j\in E$, labeled with the cardinal preferences~$\sat$,
and the derived preferences lists.
The cardinal preferences~$\sat$ of the agents~$u_i$ and $u_{i'}$ from the vertex gadgets towards the agents~$e_j^{i}$ and $e_j^{i'}$ are bounded by~$5m$.
The cardinal preferences not mentioned until now are set to zero.
The precise values are irrelevant for now and will be defined in the hardness constructions when we use the edge gadgets.\hfill~$\diamond$
\end{construction}
Note that although we use the symbol~$e_j$ for both an edge in~$G$ from the \ISs instance and an agent in the edge gadget,
the precise meaning will be clear from the context.
\fi
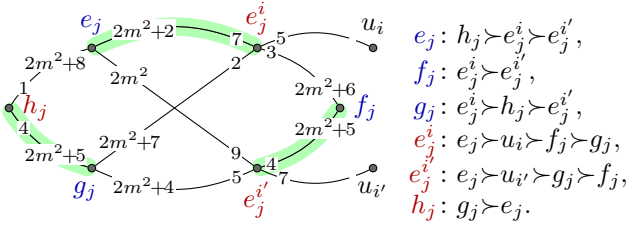
\begin{figure}[t!]
\begin{minipage}{.63\linewidth}
  \begin{tikzpicture}
    \def \xscale {2.2}
    \def \yscale {1.6}
    \def \ss {.8}
    \def \ee {-1.2}
    \foreach \col / \x / \y / \n / \c / \p in {blue/0/0/{e_j}/ej/above, red/1/0/{e_j^{i}}/eji/above, blue/0/-1/{g_j~~}/gj/below, red/1/-1/{e_j^{i'}}/ejip/below,
      red/-.5/-0.5/{h_j}/hj/right,blue/1.5/-0.5/{f_j}/fj/right, black/1.7/0/{u_i}/ui/above, black/1.7/-1/{u_{i'}}/uip/below} {
      \node[agent] at (\x*\xscale, \y*\yscale) (\c) {};
      \node[\p = 0pt of \c,\col!70!black] {$\n$};
    }
    \foreach \c / \a / \s /  \t / \n / \p / \m / \q in
    {black/-20/ej/hj/{2m^2\!\!+\!\!8}/0.35/1/{0.8},
      black/20/gj/hj/{2m^2\!\!+\!\!5}/0.35/4/{0.8},
    black/25/ej/eji/{2m^2\!\!+\!\!2}/0.3/{7}/{.9}, 
    black/-25/gj/ejip/{2m^2\!\!+\!\!4}/0.3/{5}/{.9},
    black/0/ej/ejip/{2m^2}/0.2/{9}/{.9},
    black/0/gj/eji/{2m^2\!\!+\!\!7}/0.2/{2}/{.9},
    black/-20/fj/eji/{2m^2\!\!+\!\!6}/0.2/{3}/{.9},
    black/20/fj/ejip/{2m^2\!\!+\!\!5}/0.2/{4}/{.9},%
    black/25/uip/ejip/{~}/0.15/{7}/{.8},
    black/-25/ui/eji/{~}/0.15/{5}/{.82}}%
    {
        \draw[] (\s) edge[bend left=\a,color=\c] node[pos=\p,fill=white, inner sep=.3pt] {\scriptsize $\n$} node[pos=\q,fill=white,inner sep=.3pt] {\scriptsize $\m$} (\t);
      }
      \begin{pgfonlayer}{background}
      \foreach \i / \j / \ang in {ej/eji/25,hj/gj/-20,fj/ejip/20} {
        \draw[linemarkg] (\i) edge[bend left=\ang] (\j);
        \draw[gray] (\i) edge[bend left=\ang] (\j);
      }
    \end{pgfonlayer}
    \end{tikzpicture}
    \end{minipage} 
    \begin{minipage}{.35\linewidth}
  \[\begin{array}{@{}r@{}l@{}}
        \textcolor{blue!70!black}{e_j}\colon & h_j \pref e_j^i\pref e_j^{i'},\\
        \textcolor{blue!70!black}{f_j}\colon & e_j^i \pref e_j^{i'},  \\
        \textcolor{blue!70!black}{g_j}\colon & e_j^i \pref h_j \pref e_j^{i'}, \\
        \textcolor{red!70!black}{e_j^{i}}\colon & e_j \pref u_i \pref f_j \pref g_j,\\
        \textcolor{red!70!black}{e_j^{i'}}\colon & e_j \pref u_{i'} \pref g_j \pref f_j,\\
        \textcolor{red!70!black}{h_j}\colon & g_j \pref e_j.
      \end{array}
    \]
    \end{minipage}
    \caption{The edge gadget. %
      Left: The cardinal preferences of the six agents~$e_j,f_j,g_j,h_j,e_j^i$, and $e_j^{i'}$ which correspond to edge~$e_j=\{v_i,v_{i'}\}\in E$ with $i<i'$. %
      When constructing a matching from an independent set, we
      integrally match the pairs corresponding to the green edges if $v_i \in V(G)$ is in the independent set.
      Right: The induced preference lists.}\label{fig:csm_edge_gadget}
  \end{figure}

  \iflong
  The edge gadget by \cref{cons:edge-gadget} has two crucial properties for a \csm, one regarding social welfare and another regarding fully matched agents. 
 We will exploit both properties for the hardness proofs. %
 The desired behavior is essentially that, in every \myemph{feasible matching}~$M$ for $G_E$, that is,
 in every \csm that has sufficiently large social welfare or sufficiently many fully matched agents,
 at least one of $\{e^i_j, e^{i'}_j\}$ is \myemph{unsatisfied} with respect to~$M$, i.e.,
 $e^i_j$ or $e^{i'}_j$ would rather like to be integrally matched with $u_i$ or $u_{i'}$, respectively.
 We then use this property to force the matching to assign~$u_i$ or $u_{i'}$ to partners so that at most one of the two assignments signifies that $v_i$ or $v_{i'}$ is supposed to be in the independent set.

 We first summarize this property regarding social welfare. %
 \else
 We now prove the essential property of the edge gadgets: Let $e_j \in E$ with $e_j = \{v_i,v_{i'}\}$. We show that $e^i_j$ or $e^{i'}_j$ are unsatisfied with every feasible matching.
 This will force that not both $v_i$ and $v_{i'}$ are in independent set. %
 \fi

\begin{lemma}[\appsymb]\label{cl:edge-gadget-max-sat}
  Let \(M\) be a \csm\ for \(G_E\).
  Then, the total welfare, \(\omega\), received from \(M\) by the vertices of the edge gadgets is at most $3m(2m^2+9)$.
  Moreover, if there \iflong is an edge~\else exists~\fi \(e_j \in E\) with \(e_j = \{v_i, v_{i'}\}\) such that for both \(\nu \in \{i, i'\}\) we have \(\util_M(u_\nu) < \sat(u_{\nu}, e^\nu_j)\), then \(\omega < 3m(2m^2 + 9) - n\).
\end{lemma}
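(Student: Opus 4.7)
The plan is to express the welfare contribution of each edge gadget as a linear combination of matching values, obtain the first bound via the matching capacity constraints, and then strengthen the bound for the second part by invoking \cstability\ under the hypothesis. The central observation is that every \emph{internal} edge $\{x,y\}$ of an edge gadget (the eight edges drawn in \cref{fig:csm_edge_gadget} with both endpoints in $\{e_j,f_j,g_j,h_j,e_j^i,e_j^{i'}\}$) has satisfactions summing to $2m^2+9$; for instance, $\sat(e_j,h_j)+\sat(h_j,e_j)=(2m^2+8)+1=2m^2+9$, and one verifies the same for the remaining seven internal edges. Among the edges incident to the six gadget vertices, only $\{e_j^i,u_i\}$ and $\{e_j^{i'},u_{i'}\}$ are external, and they contribute to $\omega_{e_j}$ only through $e_j^i$ and $e_j^{i'}$, with $\sat(e_j^i,u_i)=5$ and $\sat(e_j^{i'},u_{i'})=7$. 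Denoting by $S_j$ the sum of $M$-values over the internal edges of gadget~$j$, this yields
\[
  \omega_{e_j} \;=\; (2m^2+9)\,S_j \;+\; 5\,M(u_i,e_j^i) \;+\; 7\,M(u_{i'},e_j^{i'}).
\]

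Next, I would sum the capacity constraints at $h_j$, $e_j^i$, and $e_j^{i'}$. Each internal edge has exactly one endpoint in $\{h_j,e_j^i,e_j^{i'}\}$, so the sum equals $S_j + M(u_i,e_j^i) + M(u_{i'},e_j^{i'}) \le 3$. Substituting into the expression for $\omega_{e_j}$ gives
$\omega_{e_j} \le 3(2m^2+9) - (2m^2+4)\,M(u_i,e_j^i) - (2m^2+2)\,M(u_{i'},e_j^{i'}) \le 3(2m^2+9)$, and summation over $j\in[m]$ establishes the first part $\omega\le 3m(2m^2+9)$. For the second part, \cstability\ of $M$ applied to the pair $\{u_\nu, e_j^\nu\}$ together with the hypothesis $\util_M(u_\nu)<\sat(u_\nu,e_j^\nu)$ forces $\util_M(e_j^i) \ge \sat(e_j^i,u_i) = 5$ and $\util_M(e_j^{i'}) \ge \sat(e_j^{i'},u_{i'}) = 7$.

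The main obstacle is to convert these two utility lower bounds into a quantitative drop of at least $n$ in the welfare bound for gadget~$j$. The plan is a short optimization argument. Writing the utility expressions for $e_j^i, e_j^{i'}$ explicitly and combining them with the capacity constraint $M(e_j,e_j^i)+M(e_j,e_j^{i'})+M(e_j,h_j)\le 1$ at $e_j$, I would show that it is \emph{infeasible} to meet both utility lower bounds using only the internal edges: the highest-satisfaction internal partner of both $e_j^i$ and $e_j^{i'}$ is $e_j$ (with satisfactions $7$ and $9$), so even saturating $e_j$ fully cannot simultaneously supply the required utilities. A brief LP computation then yields a constant lower bound $M(u_i,e_j^i)+M(u_{i'},e_j^{i'})\ge c_0$ for some $c_0>0$ (concretely $c_0=31/49$ suffices), which in combination with the penalty coefficients $2m^2+4$ and $2m^2+2$ derived above produces a drop of order~$m^2$ in $\omega_{e_j}$, and thus in $\omega$. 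Since the reduction from \ISs\ will be instantiated only for instances with $m$ large enough relative to $n$ (in particular $m\ge\sqrt{n}$, which is automatic for graphs used in the \NPH{ness} of \ISs), this drop exceeds $n$, giving $\omega<3m(2m^2+9)-n$ as required.
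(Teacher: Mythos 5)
Your setup and first bound are fine: each internal gadget edge has satisfactions summing to $2m^2+9$, summing the capacity constraints at $h_j$, $e_j^i$, $e_j^{i'}$ bounds the internal mass plus external mass by $3$, and your penalty form $\omega_{e_j}\le 3(2m^2+9)-(2m^2+4)M(u_i,e_j^i)-(2m^2+2)M(u_{i'},e_j^{i'})$ is a (slightly sharper) version of the paper's computation. The stability step deriving $\util_M(e_j^i)\ge \sat(e_j^i,u_i)=5$ and $\util_M(e_j^{i'})\ge \sat(e_j^{i'},u_{i'})=7$ is also exactly the paper's.

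The gap is the claimed LP infeasibility. Both utility thresholds \emph{can} be met using internal edges only: take $M(e_j,e_j^i)=M(e_j,e_j^{i'})=\tfrac12$, $M(f_j,e_j^i)=\tfrac12$, $M(g_j,e_j^{i'})=\tfrac12$, all other gadget values $0$. Then $\util(e_j^i)=7\cdot\tfrac12+3\cdot\tfrac12=5$ and $\util(e_j^{i'})=9\cdot\tfrac12+5\cdot\tfrac12=7$, every capacity constraint holds, and $M(u_i,e_j^i)=M(u_{i'},e_j^{i'})=0$. So no positive constant $c_0$ (in particular not $31/49$) lower-bounds the external matching mass from the two utility bounds plus feasibility, and your order-$m^2$ drop via the penalty coefficients does not follow. (Other stability constraints might exclude this particular configuration, but your argument does not invoke them, and you would still have to prove such a lower bound.) The mechanism that actually works, and that the paper uses, is that the thresholds force $f_j$ to be undermatched: since $\util(e_j^i)\le 7\bigl(1-M(e_j^i,f_j)\bigr)+3M(e_j^i,f_j)$, the bound $\util(e_j^i)\ge5$ gives $M(e_j^i,f_j)\le \tfrac12$, and analogously $\util(e_j^{i'})\ge7$ gives $M(e_j^{i'},f_j)\le \tfrac25$; grouping the internal edges by their endpoints $e_j$, $g_j$, $f_j$ then bounds the internal mass by $2.9$, so $\omega_j\le 2.9(2m^2+9)+12<3(2m^2+9)-n$ once $0.2m^2-11.1>n$, which is the same harmless size assumption you make at the end (the paper assumes it WLOG). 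Replacing your infeasibility claim by this $f_j$-capacity argument repairs the proof; the rest of your write-up can stay as is.
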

\iflong
\begin{proof}
  Fix $j \in [m]$ and assume that $e_j=\{v_i,v_{i'}\}$ is an edge from $E$.
  For the sake of readability,
  define $\colU{U_j}\coloneqq \{\colU{e_j},\colU{g_j},\colU{f_j}\}$
  and $\colW{W_j}\coloneqq \{\colW{e^{i}_j},\colW{e^{i'}_j},\colW{h_j}\}$.
  We use \myemph{$G_j$} to refer to the edge gadget corresponding to~$e_j$, where the edges are incident to some agent from~$U_j\cup W_j$.
  Note that the total welfare~$\omega_j$,%
   received from $M$ by the agents of the edge gadget~$G_j$ is $\omega_j = \sum_{x\in \colU{U_j}}\util_{M}(x) + \sum_{y\in \colW{W_j}}\util_M(y)$.

  We first show that the maximum \wel\ given by \(M\) to the agents in~$G_j$ is at most $3(2m^2 +9)$, i.e.,
  \begin{align}
    \omega_j  \le 3(2m^2+9). \label{eq:edgebound}
  \end{align}
  Observe that for each edge~$\{x,y\}\in E(G_j)$ from gadget~$G_j$ we have $\sat(x, y) + \sat(y, x) = 2m^2 + 9$, i.e.,
  \begin{align}
    \nonumber \forall (x,y)\in \colU{U_j}\times \colW{W_j}\text{ with }\{x,y\}\in E(G_j)\colon\\
    \sat(x, y) + \sat(y, x) = 2m^2 + 9.    \label{eq:edge-welfare}
  \end{align}
  Moreover, each edge in $G_j$ is incident with $e_j$, $g_j$, or $f_j$,
  and no edge is incident with two of them since $G_j$ is bipartite.
  Thus, the \wel~$\omega$ achieved by the agents in $G_j$ is bounded as follows:
  \allowdisplaybreaks
  \begin{align}
    \nonumber  \omega_j  = &~~~ \Big( \sum_{\substack{(y,x)\in \colW{W_j}\times \colU{U_j}\colon\\\{y,x\}\in E(G_j)}} M(\{y,x\}) \cdot (\sat(x,y)+\sat(y,x))\Big)  \\
 \nonumber   &  + \sat(e_j^{i}, u_i)\cdot M(e_j^{i},u_i)+\sat(e_j^{i'}, u_{i'})\cdot M(e_j^{i'},u_{i'})  \\
    \nonumber    \stackrel{\eqref{eq:edge-welfare}}{=} &~~       (2m^2 + 9) \cdot \sum_{\mathclap{\substack{(y,x)\in \colW{W_j}\times \colU{U_j}\colon\\\{y,x\}\in E(G_j)}}
    } M(\{y,x\}) + \sat(e_j^{i}, u_i)\cdot M(e_j^{i},u_i)  \\
 \nonumber &  + \sat(e_j^{i'}, u_{i'})\cdot M(e_j^{i'},u_{i'})  \\
    \nonumber
    = &  ~~~ (2m^2\!+\!9) \cdot \Big(~~
        \sum_{\mathclap{x\in \{e_j,g_j\}}}M(\{h_j,x\})+
     \sum_{\mathclap{x\in U_j\cup \{u_i\}}} M(\{e_j^i,x\})\\
  \nonumber  & \phantom{2(m^2\!+\!9) \cdot \Big(~~}  +   \sum_{\mathclap{x\in U_j\cup \{u_{i'}\}}} M(\{e_j^{i'},x\})\Big)    \\
   \nonumber \le & 3(2m^2\!+\!9).%
  \end{align}
  The last inequality holds since $M$ is a fractional matching, implying that the sum of values assigned by~$M$ to
  each agent in~$W_j$ %
  is at most one,
  and $\sat(e_j^i,u_i)$ and
  $\sat(e_j^{i'},u_{i'})$ are bounded by $7$ which are strictly smaller than than $2m^2+9$.
  This shows Inequality~\eqref{eq:edgebound}.
  Indeed, the equation hold only when the matching values assigned to the pairs~$\{e_j^{i}, u_i\}$ and $\{e_j^{i'}, u_{i'}\}$ are both zero.

  To complete the proof,
  we show that if \(\util_M(u_i) < \sat(u_i, e^i_j)\) and \(\util_M(u_{i'}) < \sat(u_{i'}, e^{i'}_j)\), then the total welfare received by~$M$ for the agents in $G_j$ has $\omega\le 3(2m^2+9)-n$.
  Since \(\util_M(u_i) < \sat(u_i, e^i_j)\) and \(M\) is a \csm, we have
  \begin{equation}
    \label{eq:3}
    \util_M(e^i_j) \geq \sat(e^i_j, u_i) = 5.
  \end{equation}
  Similarly, since \(\util_M(u_{i'}) < \sat(u_{i'}, e^{i'}_j)\) and \(M\) is a \csm, we have
  \begin{equation}
    \label{eq:4}
    \util_M(e^{i'}_j) \geq \sat(e^{i'}_j, u_{i'}) = 7.
  \end{equation}
  Consider the two edges incident with~\(f_j\) in $G_j$.
  In order for Equation~\eqref{eq:3} to hold,
  we must have \(7(1 - M(e^i_j, f_j)) + 3M(e^i_j, f_j) \geq 5\),
  i.e., \(M(e^i_j, f_j) \leq 0.5\).
  In order for Equation~\eqref{eq:4} to hold,
  we must have \(9(1 - M(e^{i'}_j, f_j)) + 4M(e^{i'}_j, f_j) \geq 7\),
  i.e., \(M(e^{i'}_j, f_j) \leq 0.4\).
  Combined, this implies the desired upper bound for the welfare~$\omega_j$, as follows:
  \begin{align*}
    \omega_j = & ~~~ \Big( \sum_{\substack{(y,x)\in \colW{W_j}\times \colU{U_j}\colon\\\{y,x\}\in E(G_j)}} M(\{y,x\}) \cdot (\sat(x,y)+\sat(y,x))\Big)  \\
 \nonumber   &  + \sat(e_j^{i}, u_i)\cdot M(e_j^{i},u_i)+\sat(e_j^{i'}, u_{i'})\cdot M(e_j^{i'},u_{i'})  \\
    \nonumber    \stackrel{\eqref{eq:edge-welfare}}{\le} & ~~       (2m^2 + 9) \cdot \sum_{\mathclap{\substack{(y,x)\in \colW{W_j}\times \colU{U_j}\colon\\\{y,x\}\in E(G_j)}}
    } M(\{y,x\}) + 12 \\
    \nonumber
    = & ~~ (2m^2 + 9) \cdot \big(\sum_{\mathclap{(x,y)\in \{e_j,g_j\}\times \colW{W_j}}}M(\{x,y\})\big)\\
             & + (2m^2+9)\cdot (M(\{f_j,e_j^i\}) + M(\{f_j,e^{i'}_j\}) + 12\\
    \le & ~~ 2.9(2m^2+9)+12 \\
    < & ~~ 3(2m^2+9)-n,
  \end{align*}
  where the first inequality holds since  $\sat(e_j^i,u_i)\le 5$ and  $\sat(e_j^{i'},u_{i'})\le 7$, 
  the second last inequality holds since the sum of values of the matching for each agent is at most one, and
  the last inequality holds since we assume without loss of generality that $0.2\cdot m^2-11.1 > n$.
\end{proof}
\fi

\ifshort
\stepcounter{theorem}
\stepcounter{theorem}
\else

The following lemma summarizes the desired property regarding the fully matched agents.
\begin{lemma}\label{cl:edge-gadget-toVC}
  Let $M$ be a fractional matching of $G_E$ (see \cref{cons:edge-gadget}). %
  For each edge $e_j \in E(G)$ with $e_j = \{v_i, v_{i'}\}$ it holds that
  if agent~$f_j$ is fully matched, then %
  we have $\util_{M}(e_j^i) < \sat(e_j^i,u_i)$ or $\util_{M}(e_j^{i'}) <\sat(e_j^{i'},u_{i'})$.
\end{lemma}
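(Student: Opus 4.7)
The plan is to proceed by contradiction: assume that $f_j$ is fully matched and that simultaneously $\util_M(e_j^i) \ge \sat(e_j^i, u_i) = 5$ and $\util_M(e_j^{i'}) \ge \sat(e_j^{i'}, u_{i'}) = 7$ hold. The structural observation that drives the argument is that inside the gadget the only two neighbors of $f_j$ are $e_j^i$ and $e_j^{i'}$, so ``$f_j$ is fully matched'' is equivalent to $M(f_j, e_j^i) + M(f_j, e_j^{i'}) = 1$. The goal is then to bound each of these two matching values separately so that their sum is forced to be strictly less than one.

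First I would bound $M(e_j^i, f_j)$ from above by combining the utility lower bound $7 M(e_j^i, e_j) + 5 M(e_j^i, u_i) + 3 M(e_j^i, f_j) + 2 M(e_j^i, g_j) \ge 5$ (read off from the satisfactions $7, 5, 3, 2$ of $e_j^i$ toward its four neighbors $e_j, u_i, f_j, g_j$) with the matching-capacity constraint on $e_j^i$. Subtracting $5$ times the capacity inequality from the utility inequality cancels the $u_i$-term and yields $2 M(e_j^i, e_j) \ge 2 M(e_j^i, f_j) + 3 M(e_j^i, g_j)$, hence $M(e_j^i, f_j) \le M(e_j^i, e_j)$; combined with $M(e_j^i, e_j) + M(e_j^i, f_j) \le 1$ this gives $M(e_j^i, f_j) \le 1/2$.

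Next I would apply the same template to $e_j^{i'}$, whose satisfactions toward $e_j, u_{i'}, f_j, g_j$ are $9, 7, 4, 5$, using the utility lower bound $7$; subtracting $7$ times the capacity constraint yields $M(e_j^{i'}, f_j) \le (2/3)\, M(e_j^{i'}, e_j)$, and then the capacity constraint pushes this down to $M(e_j^{i'}, f_j) \le 2/5$. Adding the two bounds gives $M(f_j, e_j^i) + M(f_j, e_j^{i'}) \le 1/2 + 2/5 = 9/10 < 1$, contradicting the assumption that $f_j$ is fully matched.

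The main obstacle is really just careful bookkeeping with the gadget's numerical constants: the satisfactions were chosen precisely so that these two fractional bounds add to less than one, and the asymmetry between $e_j^i$ (threshold $5$, bound $1/2$) and $e_j^{i'}$ (threshold $7$, bound $2/5$) is what makes $1/2 + 2/5$ the critical value. Worth noting is that nothing about the \cstable{} property of $M$ is used in this argument — the claim is a purely linear-programming fact about any fractional matching satisfying the two utility inequalities — which is consistent with the lemma's phrasing for arbitrary fractional matchings.
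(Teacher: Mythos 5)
Your proof is correct and follows essentially the same argument as the paper: the paper derives the identical bounds (if $\util_M(e_j^i)\ge 5$ then $M(f_j,e_j^i)\le 1/2$, and if $\util_M(e_j^{i'})\ge 7$ then $M(f_j,e_j^{i'})\le 2/5$), merely organized as a case distinction on whether $M(f_j,e_j^{i'})>2/5$ instead of your contradiction via $1/2+2/5<1$. The numerical bookkeeping, the constants, and the observation that stability of $M$ is never needed all match the paper's proof.
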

\begin{proof}
  Let fractional matching~$M$, edge~$e_j=\{v_i,v_{i'}\}\in E(G)$, and agent~$f_j$ be as defined.
  Since $f_j$ is fully matched, $M(f_j,e^{i})+M(f_j,e^{i'})=1$.
  We distinguish between two cases~(see \cref{fig:csm_edge_gadget} for the cardinal preferences):
  \begin{compactitem}[--]
    \item If $M(f_j,e^{i'}_j) > 2/5$, then
    $\util_M(e_j^{i'}) \le \sat(e_j^{i'},e_j)\cdot (1-M(f_j,e^{i'}_j)) + \sat(e_j^{i'},f_j) \cdot M(e^{i'}_j, f_j) = 9-5\cdot M(f_j, e^{i'}_j) < 7$, implying that  $\util_{M}(e_j^{i'}) <\sat(e_j^{i'},u_{i'})$.
    \item If $M(f_j,e^{i'}_j) \le 2/5$, meaning that $M(f_j,e^i_j) \ge 3/5 > 1/2$, then
    $\util_M(e_j^{i}) \le \sat(e_j^{i},e_j)\cdot (1-M(f_j,e^{i}_j)) + \sat(e_j^{i},f_j) \cdot M(e^{i}_j, f_j) = 7-4\cdot M(f_j, e^{i}_j) < 5$, implying that $\util_{M}(e_j^{i}) <\sat(e_j^{i},u_{i})$.\qedhere
  \end{compactitem}
\end{proof}

Finally, to ease the hardness proofs for both optimality criteria, we show a technical result regarding some specific fractional matchings, which is straight-forward to verify.
\begin{lemma}\label{lem:specific-matching}
  Let $M$ be a fractional matching for edge gadget~$G_E$.
  Consider an edge~$e_j=\{v_i,v_{i'}\}$ of $G$ with $i< i'$.
  \begin{compactenum}[(1)]
    \item\label{vi-not-is} If matching~$M$ fulfills the following:
  \begin{compactitem}[--]
    \item $M(e_j,h_j)\!=\!0.3$, $M(e_j,e_j^{i})=0.1$, $M(e_j,e_j^{i'})=0.6$,
    \item $M(f_j,e_j^i)=0.8$, $M(f_j,e_j^{i'})=0.2$, and
    \item $M(g_j,e_j^{i})\!=\!0.1$, $M(g_j,h_j)=0.7$, $M(g_j,e^{i'}_j)=0.2$,
  \end{compactitem}
  then no \cblocking pair of $M$ involves an agent from $\{e_j,f_j,g_j,e_j^{{\color{red}i'}},h_j\}$.
  \item\label{vi-is} If matching~$M$ satisfies $M(e_j,e^i_j)=M(f_j,e_j^{i'})=M(g_j,h_j)=1$~(see the green edges in \cref{fig:csm_edge_gadget}), 
  then no \cblocking pair of $M$ involves an agent from $\{e_j,f_j,g_j,e_j^{{\color{blue}i}},h_j\}$.
\end{compactenum}
\end{lemma}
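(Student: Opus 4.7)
The plan is direct arithmetic verification. For each part I will compute $\util_M(x)$ for every agent $x$ in the edge gadget and then, for every edge $\{x,y\}$ incident with an agent $x$ from the listed set, check that the cardinal blocking condition fails, i.e., that $\util_M(x)\ge\sat(x,y)$ or $\util_M(y)\ge\sat(y,x)$. The only edges leaving the gadget are $\{e_j^i,u_i\}$ and $\{e_j^{i'},u_{i'}\}$, so the case analysis runs over the eight gadget edges depicted in \cref{fig:csm_edge_gadget} plus at most one outgoing edge.

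For Part~(1), first note that the stipulated fractional values sum to~$1$ at each of $e_j,f_j,g_j,h_j,e_j^i,e_j^{i'}$, so these six agents are fully matched by~$M$. Using the satisfactions from \cref{fig:csm_edge_gadget}, direct calculation gives $\util_M(e_j)=2m^2+2.6$, $\util_M(f_j)=2m^2+5.8$, $\util_M(g_j)=2m^2+5$, $\util_M(h_j)=3.1$, $\util_M(e_j^i)=3.3$, and $\util_M(e_j^{i'})=7.2$. The edges $\{e_j,e_j^i\}$, $\{e_j,e_j^{i'}\}$, $\{f_j,e_j^{i'}\}$, $\{g_j,h_j\}$, and $\{g_j,e_j^{i'}\}$ are non-blocking on the $U$-side, because in each case the relevant $\sat$-value at $e_j$, $f_j$, or $g_j$ is at most the corresponding utility. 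The remaining gadget edges $\{e_j,h_j\}$, $\{g_j,e_j^i\}$, and $\{f_j,e_j^i\}$ are killed on the $W$-side, via $\sat(h_j,e_j)=1<3.1$, $\sat(e_j^i,g_j)=2<3.3$, and $\sat(e_j^i,f_j)=3<3.3$ respectively. Finally, the outgoing edge $\{e_j^{i'},u_{i'}\}$ is handled by $\util_M(e_j^{i'})=7.2>7=\sat(e_j^{i'},u_{i'})$.

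For Part~(2), $M$ integrally matches the pairs $\{e_j,e_j^i\}$, $\{f_j,e_j^{i'}\}$, $\{g_j,h_j\}$, giving $\util_M(e_j)=2m^2+2$, $\util_M(f_j)=2m^2+5$, $\util_M(g_j)=2m^2+5$, $\util_M(h_j)=4$, $\util_M(e_j^i)=7$, and $\util_M(e_j^{i'})=4$. The non-matching gadget edges $\{e_j,e_j^{i'}\}$ and $\{g_j,e_j^{i'}\}$ are killed on the $U$-side, since $\util_M(e_j)>\sat(e_j,e_j^{i'})$ and $\util_M(g_j)>\sat(g_j,e_j^{i'})$. The edges $\{e_j,h_j\}$, $\{f_j,e_j^i\}$, and $\{g_j,e_j^i\}$ are killed on the $W$-side via $\sat(h_j,e_j)=1<4$, $\sat(e_j^i,f_j)=3<7$, and $\sat(e_j^i,g_j)=2<7$. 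Lastly, the outgoing edge $\{e_j^i,u_i\}$ is non-blocking because $\sat(e_j^i,u_i)=5<7=\util_M(e_j^i)$.

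The only real ``obstacle'' is bookkeeping, and the sole tight spot occurs in Part~(1) at $\{f_j,e_j^i\}$, where $\util_M(f_j)=2m^2+5.8<2m^2+6=\sat(f_j,e_j^i)$: the fractional values in the hypothesis are engineered precisely so that the corresponding $W$-side utility $\util_M(e_j^i)=3.3$ exceeds $\sat(e_j^i,f_j)=3$, rescuing this edge from being cardinally blocking.
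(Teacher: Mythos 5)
Your proof is correct and follows essentially the same route as the paper: compute the utilities of the gadget agents under the stipulated matching values and verify edge-by-edge that the cardinal blocking condition fails. You are in fact slightly more explicit than the paper, which labels these checks ``straightforward''; in particular you spell out the one genuinely tight case in Part~(1), namely $\{f_j,e_j^i\}$, where $\util_M(f_j)<\sat(f_j,e_j^i)$ and the pair is saved only because $\util_M(e_j^i)=3.3\ge 3=\sat(e_j^i,f_j)$ -- a computation the paper leaves implicit.
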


\begin{proof}
  Let $M$ and $e_j=\{v_i,v_{i'}\}$ be as defined.
  For Statement~\eqref{vi-not-is}, assume that $M$ have the values stated in the if-condition.
  We first compute the utilities of the agents from~$\{e_j,f_j,g_j,e_j^{{\color{red}i'}},h_j\}$:
  \[
    \begin{array}{@{}r@{}l@{}r@{}l@{}r@{}l@{}}
    \util_M(e_j) &= 2m^2 + 2.6,~ & \util_M(f_j) & = 2m^2+5.8, ~& \\
     \util_M(g_j) & =2m^2+5, & \util_M(e^{i'}_j) & = 7.2, ~& \util_M(h_j) & = 3.1.
    \end{array}
  \]
  It is straight-forward to verify that no \cblocking pair of $M$ involves agents from $\{e_j,g_j,f_j,e_j^{i'}\}$.
  Hence, no \cblocking pair of $M$ involves agent~$h_j$.
  This completes the proof for Statement~\eqref{vi-not-is}.

  For Statement~\eqref{vi-is}, assume that $M$ have the values stated in the if-condition.
  We compute the utilities of the agents from $\{e_j,f_j,g_j,e_j^{{\color{blue}i}},h_j\}$:
     \[
    \begin{array}{@{}r@{}l@{}r@{}l@{}r@{}l@{}}
    \util_M(e_j) &= 2m^2 + 2,~ & \util_M(f_j) & = 2m^2+5, ~& \\
     \util_M(g_j) & =2m^2+5, & \util_M(e^{i}_j) & = 7, ~& \util_M(h_j) & = 4.
    \end{array}
  \]
  It is also straight-forward to verify that no \cblocking pair of $M$ involves agent~$e_j^{i}$ or $h_j$ as they both are integrally matched with their most preferred agents, respectively.
  Hence, no \cblocking pair of~$M$ involves an agent from~$\{e_j,f_j,g_j\}$.
  This completes the proof for Statement~\eqref{vi-is}.
\end{proof}
\fi

We are now ready to prove hardness for finding stable matchings with maximum social welfare.

\begin{theorem}[\appsymb]\label{th:ECSFM_hard}
  \ECSFM is \NPC, even for bipartite graphs with strict preferences.
\end{theorem}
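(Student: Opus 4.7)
\medskip

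\noindent\textbf{Proof proposal.} I will reduce from \ISs. Given $(G=(V,E),k)$ with $V=\{v_1,\dots,v_n\}$ and $E=\{e_1,\dots,e_m\}$, I construct an instance $(G'=(U'\cup W', E'),\sat,\welfare)$ of \ECSFM as follows. The core of $G'$ is the edge-gadget graph $G_E$ from Construction~\ref{cons:edge-gadget}. To this I attach a \emph{vertex gadget} for each $v_i\in V$: two fresh agents $p_i\in W'$ and $q_i\in U'$ together with edges $\{u_i,p_i\}$ and $\{p_i,q_i\}$. I choose the satisfaction values so that the induced strict preference lists are $p_i\!:\,q_i\succ u_i$, $q_i\!:\,p_i$, and $u_i\!:\,p_i\succ e^i_{j_1}\succ e^i_{j_2}\succ\cdots$ (in some fixed order of edges incident to $v_i$), with $\sat(q_i,p_i)=\sat(p_i,q_i)=L_1$, $\sat(u_i,p_i)=\sat(p_i,u_i)=L_2$, and the values $\sat(u_i,e^i_j)$ taking distinct values strictly between $S$ and $S+1$ (so as to preserve strict preferences). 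I pick constants $S<L_2<L_1$ so that $L_1-L_2$ is much larger than $n+m+\max_{i,j}\sat(u_i,e^i_j)$ (e.g.\ $S=1$, $L_2=10m$, $L_1=n^4$). Finally I set $\welfare \coloneqq 3m(2m^2+9) + 2L_2\,n + 2(L_1-L_2)\,k$.

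\smallskip
\noindent\textbf{Forward direction.} Given an independent set $X\subseteq V$ of size~$k$, I define $M$ as follows. For each $v_i\in X$, set $M(p_i,q_i)=1$ and leave $M(u_i,p_i)=0$; for each $v_i\notin X$, set $M(u_i,p_i)=1$ and $M(p_i,q_i)=0$. For each edge $e_j=\{v_i,v_{i'}\}\in E$, since $X$ is independent at most one of $v_i,v_{i'}$ lies in $X$: if $v_i\in X$ (resp.\ $v_{i'}\in X$) I use the matching from Lemma~\ref{lem:specific-matching}(\ref{vi-is}) for the edge gadget of $e_j$ with $v_i$ (resp.\ $v_{i'}$) as the ``in-IS'' endpoint; if neither is in $X$ I use the matching from Lemma~\ref{lem:specific-matching}(\ref{vi-not-is}). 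By Lemma~\ref{lem:specific-matching} no blocking pair lies entirely inside an edge gadget; only the ``boundary'' pair $\{u_\nu,e^\nu_j\}$ for the endpoint $v_\nu\notin X$ needs to be checked manually, but that $\util_M(u_\nu)=L_2$ exceeds every $\sat(u_\nu,e^\nu_j)$, so it is not blocking. Similarly $\{u_i,p_i\}$ and $\{p_i,q_i\}$ are easy to verify. A direct computation (each edge gadget contributes exactly $3(2m^2+9)$ by the values in Fig.~\ref{fig:csm_edge_gadget}, and each vertex gadget contributes $2L_1$ if $v_i\in X$ and $2L_2$ otherwise) yields $\wel(M)=\welfare$.

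\smallskip
\noindent\textbf{Backward direction.} Suppose $M$ is a \csm with $\wel(M)\ge \welfare$. Write $\alpha_i=M(p_i,q_i)$, $\beta_i=M(u_i,p_i)$, so that $\alpha_i+\beta_i\le 1$. Since $q_i$'s only neighbor is $p_i$, and using $\sat(u_i,e^i_j)\le S+1$, a routine bound gives
$\util_M(u_i)+\util_M(p_i)+\util_M(q_i) \le 2L_2 + 2(L_1-L_2)\alpha_i + O(S)$.
Summing over~$i$ and adding the bound from Lemma~\ref{cl:edge-gadget-max-sat} on the edge-gadget welfare $\omega\le 3m(2m^2+9)$, the assumption $\wel(M)\ge\welfare$ forces $\sum_i\alpha_i \ge k$, and in particular $|X^*|\ge k$ where $X^*\coloneqq\{v_i\mid \alpha_i>0\}$. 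To see that $X^*$ is independent, suppose $e_j=\{v_i,v_{i'}\}\in E$ with $v_i,v_{i'}\in X^*$. Then $\beta_i,\beta_{i'}<1$, so by the parameter choice $\util_M(u_\nu) \le L_2\beta_\nu + (S+1)(1-\beta_\nu) < L_2 \le \sat(u_\nu,e^\nu_j)$ (after adjusting $\sat(u_i,e^i_j)$ to lie just below $L_2$). Lemma~\ref{cl:edge-gadget-max-sat} then yields $\omega<3m(2m^2+9)-n$, and combined with the vertex-side bound this contradicts $\wel(M)\ge\welfare$ for $L_1-L_2$ chosen sufficiently large. Hence $X^*$ is an independent set of size $\ge k$.

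\smallskip
\noindent\textbf{Main obstacle.} The delicate part is tuning the satisfactions so that (i) every vertex gadget strictly prefers the ``$v_i\in X$'' configuration by a large margin $2(L_1-L_2)$ that dominates all other welfare fluctuations, while simultaneously (ii) the threshold $\sat(u_i,e^i_j)$ is slightly below $L_2$ so that $\alpha_i>0$ really does force $\util_M(u_i)<\sat(u_i,e^i_j)$, letting us invoke Lemma~\ref{cl:edge-gadget-max-sat} to extract the independent-set property. All of this must be done without introducing ties and while keeping the construction polynomial-time.
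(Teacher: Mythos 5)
Your overall strategy is the paper's: reduce from \ISs, reuse the edge gadgets of \cref{cons:edge-gadget} together with \cref{cl:edge-gadget-max-sat,lem:specific-matching}, attach a per-vertex gadget encoding IS-membership, and do a welfare accounting. But your vertex gadget is broken, and this is fatal. Consider the pendant path $q_i$--$p_i$--$u_i$ with $p_i\colon q_i\succ u_i$ and $\sat(p_i,q_i)=\sat(q_i,p_i)=L_1>L_2=\sat(p_i,u_i)$. In your ``$v_i\notin X$'' configuration ($M(u_i,p_i)=1$, $M(p_i,q_i)=0$) the pair $\{p_i,q_i\}$ is \cblocking: $\util_M(p_i)=L_2<L_1=\sat(p_i,q_i)$ and $\util_M(q_i)=0<L_1=\sat(q_i,p_i)$. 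Worse, in \emph{every} \csm of your instance one must have $M(p_i,q_i)=1$, since $\util(q_i)=L_1\,M(p_i,q_i)$ and $\util(p_i)\le L_1 M(p_i,q_i)+L_2(1-M(p_i,q_i))<L_1$ whenever $M(p_i,q_i)<1$; so the gadget cannot encode a binary choice at all, the vertex-side welfare is forced to about $2L_1 n$ in every \csm, and the threshold $\welfare$ no longer separates yes- from no-instances. This failure is structural, not a matter of tuning: for a pendant path with strict preferences, stability of the ``$p_iq_i$'' configuration needs $\sat(p_i,q_i)>\sat(p_i,u_i)$ while stability of the ``$p_iu_i$'' configuration needs the opposite (since $q_i$ always has utility $0$ there), so no strict choice of satisfactions admits both stable configurations. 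The paper's gadget avoids exactly this by using four agents $u_i,w_i,x_i,y_i$ with cyclic top choices (\cref{fig:Welfare-csm-card}), so that in each of the two configurations two agents are integrally matched to their most preferred partners, which kills all internal blocking pairs.

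Two further steps would also fail as written. First, the edge gadget of \cref{fig:csm_edge_gadget} is \emph{not} symmetric in $e^i_j$ and $e^{i'}_j$: there is no stable ``mirrored'' analogue of \cref{lem:specific-matching}\eqref{vi-is} with $e^{i'}_j$ matched to $e_j$ (then $\util(e^i_j)\le 5<7$ and $\util(e_j)=2m^2<2m^2+2$, so $\{e_j,e^i_j\}$ or $\{f_j,e^i_j\}$ blocks), and using \cref{lem:specific-matching}\eqref{vi-is} verbatim when the \emph{higher}-indexed endpoint $v_{i'}$ is in $X$ leaves $\{e^{i'}_j,u_{i'}\}$ \cblocking (then $\util(e^{i'}_j)=4<7$ while $u_{i'}$ is hungry). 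The paper handles this case with the fractional configuration of \cref{lem:specific-matching}\eqref{vi-not-is}, which is used whenever $v_i\notin V'$ and gives $\util(e^{i'}_j)=7.2\ge 7$, so it tolerates a hungry $u_{i'}$. Second, in the backward direction defining $X^*$ by $\alpha_i>0$ cannot work: an arbitrarily small $\alpha_i$ changes $\util(u_i)$ arbitrarily little, so it never forces $\util_M(u_i)<\sat(u_i,e^i_j)$, and your patch of moving $\sat(u_i,e^i_j)$ to ``just below $L_2$'' makes the displayed chain $\util_M(u_\nu)<L_2\le\sat(u_\nu,e^\nu_j)$ self-contradictory (and still fails when $\beta_\nu$ is close to $1$). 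The paper resolves this by setting the edge-agent satisfactions only marginally below $\sat(u_i,w_i)$ (values $3n,3n+1,3n+2$ against $3n+3$) and defining $V'$ with the threshold $M(u_i,y_i)\ge 1/n$, absorbing the resulting additive slack of at most $1$ by integrality of $k$.
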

\begin{proof}
  \iflong
  Let $G = (V, E)$ be a graph with $V =\{v_1, v_2, \ldots, v_n\}$ and $E=\{e_1, e_2, \ldots, e_m\}$, and let $(G, k)$ denote an instance of \IS.
  \else
  Let \((G, k)\) be an instance of \ISs, where $G=(V,E)$, $V=\{v_1,\ldots, v_n\}$ and $E=\{e_1,\ldots, e_m\}$. 
  \fi
  We will indeed reduce from \ISs in cubic graphs~\cite{AliKann2000apx-cubic}, i.e., each vertex in $G$ has degree three.
\ifshort  We construct an instance $(G',\sat, \gamma)$ of \ECSFM where $\gamma = (3n+7)n  + k +3m(2m^2+9)$ and $G'= (U \cup W, E')$ is a bipartite graph with partite sets $U$ and $W$.

  \begin{compactitem}
  \item We first introduce the vertices in the \myemph{vertex gadgets}, that is, for each $i \in [n]$, we add two vertices $u_i$ and $x_i$ to $U$ and we add two vertices $w_i$ and $y_i$ to $W$.
  \item We add $U_E$ to $U$ (if not already present) and $W_E$ to $W$, and we add the edges of $G_E$ to $E'$.
  \item The remaining edges in $E'$ and the $\sat$ function are for each $i \in [n]$ as defined in Figure~\ref{fig:Welfare-csm-card}.
\end{compactitem}
\looseness=-1 For $i \in [n]$ we also call the subgraph of~$G'$ induced by $\{u_i, w_i, x_i, y_i\}$ the \emph{vertex gadget for vertex~$v_i$}.
Note that $G'$ is a bipartite graph because all the introduced edges are between $U$ and~$W$.
This completes the construction of $(G',\sat, \gamma)$ which clearly takes polynomial time.
\else
We construct an instance~$(G',\sat, \gamma)$ of \ECSFM where $\gamma = (3n+7)n  + k +3m(2m^2+9)$ and $G'= (U \cup W, E')$ is a bipartite graph with partite sets~$U$ and~$W$ such that
$U=\colU{U_E}\cup \{u_i,x_i\mid v_i\in V\}$ and $W=\colW{W_E}\cup \{w_i,y_i\mid v_i\in V\}$.
Note that $\colU{U_E} = \{e_j,f_j,g_j\mid e_j\in E\}$ and $\colW{W_E} = \{e_j^{i}, e_j^{i'}, h_j \mid e_j=\{v_i,v_{i'}\}\in E\}$.
  In other words, $U$ (resp.\ $W$) include the agents from~$U_E$ (resp.\ $W_E$) of the edge gadgets (see \cref{cons:edge-gadget}) and four \myemph{vertex agents}~$u_i,x_i$ (resp.\ $w_i,y_i$) for each vertex~$v_i\in V$.
  In total, we have $|U|=|W|=6m+4n$.
  
  Similarly to the proof of \cref{th:Welfare-OSR_hard}, we will construct the cardinal preferences for the vertex agents to ensure the following.
  When combined with \cref{cl:edge-gadget-max-sat},
  there are essentially two possible ways of fractionally matching the vertex agents~$u_i,x_i,w_i,y_i$ corresponding to the same vertex~$v_i\in V$. %
  The first one will have a higher welfare than the second one,
  but it is not possible to use the first one for two adjacent agents as this will induce a \cblocking pair. 
  
  \cref{fig:Welfare-csm-card} illustrates the cardinal preferences of the vertex agents for each vertex~$v_i\in V$.
  The cardinal preferences that are not depicted in the figure are set to zero.
  We call the subgraph induced by the vertex agents~$u_i,x_i,w_i,y_i$ a \myemph{vertex gadget for~$v_i$}, and use $H_i$ to denote it.
Note that $G'$ is a bipartite graph since all the introduced edges are between $U$ and~$W$.
This completes the construction of $(G',\sat, \gamma)$ which clearly takes polynomial time.
\fi 
 \begin{figure}[t!]
   \begin{minipage}{.63\linewidth}
      \begin{tikzpicture}
      \def \xscale {2.3}
      \def \yscale {1.2}
      \def \ss {0.2}
      \def \ee {-.8}
      \foreach \col / \x / \y / \n / \c / \p in {
        {blue!70!black}/0/0/{~~u_i}/ui/above,
        {red!70!black}/1/0/{w_i}/wi/above,
        {blue!70!black}/0/-.6/{x_i}/xi/below,
        {red!70!black}/1/-.6/{y_i}/yi/below,
        {black}/-.5/\ss/{}/Eui1/above,
        black/-.5/\ee/{}/Eui2/above} {
        \node[agent] at (\x*\xscale, \y*\yscale) (\c) {};
        \node[\p = 0pt of \c,color=\col] {$\n$};
      }

      \node[agent] at ($(Eui1)!.5!(Eui2)$) (Eui3) {};

       \draw[decoration={brace,mirror, raise=5pt,amplitude=5pt},decorate] (-.5*\xscale, \ss*\yscale) -- node[left=2ex] {$[E(v_i)]$} (-.5*\xscale, \ee*\yscale);

       \foreach \c / \a / \s /  \t / \n / \p / \m / \q in {%
         black/15/ui/wi/{3n\!\!+\!\!3}/0.25/1/{0.8},
           red!80!black/2/ui/yi/1/0.2/{3n\!\!+\!\!4}/0.8,
           black/{-15}/xi/yi/2/0.15/1/0.85,
           red!80!black/0/xi/wi/1/0.2/2/0.8,
           black/-14.5/ui/Eui1/{3n}/0.4/{}/.9,
           black/0.5/ui/Eui3/{3n\!\!+\!\!1}/0.4/{}/.9,
           black/15/ui/Eui2/{3n\!\!+\!\!2}/0.35/{}/.9,
           black/2/ui/yi/1/0.2/{3n\!\!+\!\!4}/0.8,
           black/0/xi/wi/1/0.2/2/0.8
         }
         {
           \draw[] (\s) edge[bend left=\a,color=\c] node[pos=\p,fill=white, inner sep=.4pt] {\scriptsize $\n$} node[pos=\q,fill=white,inner sep=.4pt] {\scriptsize $\m$} (\t);
         }

         \begin{pgfonlayer}{background}

         \foreach \c / \a / \s /  \t / \n / \p / \m / \q in
         {red!60!black/2/ui/yi/1/0.2/{3n\!+\!4\!}/0.8,
           black/0/xi/wi/1/0.2/2/0.8%
         }
         {
           \draw[linemarkr] (\s) edge[bend left=\a]  (\t);
         }

       \end{pgfonlayer}
     \end{tikzpicture}
     \end{minipage}
     \begin{minipage}{.36\linewidth}
       \[%
      \begin{array}{@{}r@{}l@{}}
        \textcolor{blue!70!black}{u_i}\colon & w_i \pref [E(v_i)] \pref y_i,\\
        \textcolor{blue!70!black}{x_i}\colon & y_i \pref w_i, \\
         \textcolor{red!70!black}{w_i}\colon & x_i \pref u_i,\\
         \textcolor{red!70!black}{y_i}\colon & u_i \pref x_i.
      \end{array}
    \]
    \end{minipage}
    \caption{The vertex gadget for a vertex $v_i \in V$.
      Left: The cardinal preferences of agents~$u_i,x_i,w_i,y_i \in V(G')$ which correspond to vertex~$v_i \in V(G)$, constructed in the proof of Theorem~\ref{th:ECSFM_hard}.
      Here, $E(v_i)= \{e_j^i \mid e_j \in E$ and $v_i$ is incident to~$e_j\}$ and $[E(v_i)]$ denotes the sequence resulting from ordering $E(v_i)$ in increasing order of the indices of the edge agents in~$E(v_i)$.
      The red edges indicate the matching that signifies that $v_i$ is in the independent set.
      Right: The induced preference lists of the vertex gadget.}
    \label{fig:Welfare-csm-card}
  \end{figure}
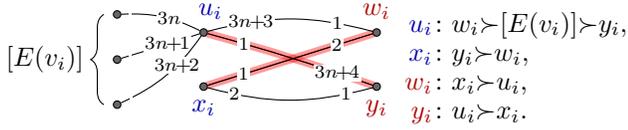

  Next, we prove
  \iflong the correctness, that is,
  \fi $G$ has an independent set of size at least~$k$ if and only if $(G', \sat)$ has a \cstable\ matching with \wel\ at least $\welfare$.
\ifshort
The ``if'' direction amounts to a straightforward check of a somewhat peculiar matching and is available in the appendix.
\else

For the ``if'' direction, given a \(k\)\nobreakdash-vertex independent set $V'$ of $G$, construct a fractional matching $M$ as follows:
\begin{itemize}[--]
\item For each $z\in [n]$, if $v_{z} \in V'$, then set $M(u_z,y_z) = M(x_z,w_z) = 1$; else set $M(u_z,w_z) = M(x_z,y_z)=1$.
\item For each edge $e_j = \{v_i,v_{i'}\}\in E$ with $i<i'$, do:
  \begin{compactenum}[(1)]
  \item If $v_i \notin V'$, then set
  \[  \begin{array}{@{}l@{\,}l@{\,}l@{}}
     M(e_j,h_j)=0.3,& M(e_j,e_j^{i})=0.1, & M(e_j,e_j^{i'})=0.6,\\
     M(f_j,e_j^i)=0.8,& M(f_j,e_j^{i'})=0.2 & \\
     M(g_j,e_j^{i})=0.1,& M(g_j,h_j)=0.7, & M(g_j,e^{i'}_j)=0.2,
  \end{array}\]
  Note that, $M$ satisfies the if-condition for $e_j$ stated in \cref{lem:specific-matching}\eqref{vi-not-is}.
  \item If $v_{i} \in V'$, then set \[M(e_j, e_j^i) = M(f_j, e_j^{i'}) = M(g_j, h_j) = 1;\]
  note that this corresponds to the if-condition stated in \cref{lem:specific-matching}\eqref{vi-is}.
  \end{compactenum}%
\item For every pair $e$ of agents in $G'$ not assigned above, set $M(e) = 0$.
\end{itemize}
First, observe that $M$ is a fractional matching, that is, for each vertex, the values assigned by $M$ to edges incident with that vertex sum to at most one.
Indeed, this matching is a perfect matching, i.e., every agent is fully matched.
A simple calculation shows that
\begin{align*}
  \wel(M)=\sum_{i=1}^n \sum_{\beta \in \{u_i, w_i, x_i, y_i\}} \util_{M}(\beta) = (3n+7)n + k
\end{align*}
and that the total \wel\ received from \(M\) by the agents in the edge gadgets is~$3m(2m^2+9)$, giving the overall required welfare of~\(\gamma\).

It remains to prove that $M$ is \cstable.

\smallskip
\noindent \textit{The edge gadget.} We begin by showing that there is no agent of any edge gadget is involved in a \cblocking pair.
Consider the edge gadget of an edge $e_j=\{v_i, v_{i'}\} \in E(G)$ with $i< i'$.
We distinguish between two cases.
\begin{compactenum}[(1)]
  \item If $v_i\notin V'$, then $M$ fulfills the if-condition given in \cref{lem:specific-matching}\eqref{vi-not-is}.
  Hence, no \cblocking pair involves an agent from $\{e_j,f_j,g_j,e_j^{{\color{red}i'}}, h_j\}$.
  To show that neither does a \cblocking pair involve agent~$e_j^{{\color{blue}i}}$, it remains to consider pair~$\{e_j^{{\color{blue}i}}, u_i\}$.
  Since $v_i\notin V'$, by definition, we have that $M(u_i,w_i)=1$, meaning that $u_i$ is integrally matched with her most preferred agent.
  Therefore, $\{e_j^{{\color{blue}i}}, u_i\}$ is \emph{not} \cblocking~$M$.

  \item If $v_i\in V'$, then $M$ fulfills the if-condition given in \cref{lem:specific-matching}\eqref{vi-is}.
  Hence, no \cblocking pair involves an agent from $\{e_j,f_j,g_j,e_j^{{\color{blue}i}}, h_j\}$.
  To show that neither does a \cblocking pair involve agent~$e_j^{{\color{red}i'}}$, it remains to consider pair~$\{e_j^{{\color{red}i'}}, u_{i'}\}$.
  Since $v_{i}\in V'$ and $V'$ is an independent set, we have that $v_{i'}\notin V'$.
  By definition, it must hold that $M(u_{i'},w_{i'})=1$, meaning that $u_{i'}$ is integrally matched with her most preferred agent.
  Therefore, $\{e_j^{{\color{red}i'}}, u_{i'}\}$ is \emph{not} \cblocking~$M$.
\end{compactenum}
In both cases, we have shown that no \cblocking pair involves an agent from the edge gadgets.

\smallskip
\noindent \textit{The vertex gadget.}
We now show that \cblocking pair of $M$ involves an agent from the vertex gadgets.
This will show that $M$ is \cstable.
Let $i \in [n]$ and consider the vertex gadget for vertex \(v_i \in V\).
Again, we distinguish between two cases:
\begin{compactenum}[(1)]
  \item If $v_i\in V'$, then no \cblocking pair of $M$ involves agents~$w_i$ or $y_i$ since they are integrally matched with their most preferred agents, respectively.
  Hence, no \cblocking pair involves~$x_i$.
  Since no \cblocking pair involves an agent from the edge gadget, neither is $y_i$ involved in a \cblocking pair.
  \item If $v_i \in V'$, then no \cblocking pair of $M$ involves agents~$u_i$ or $x_i$ since they are integrally matched with their most preferred agents, respectively.
  Hence, no \cblocking pair involves~$w_i$ or $x_i$.
\end{compactenum}
Hence, no \cblocking pair of~\(M\) involves an agent from the vertex gadget.
This concludes the proof of the ``if'' direction of the correctness.
\medskip

\fi
For the ``only if'' direction, let $M$ be a \cstable\ matching for $(G',\sat)$ with $\wel(M) \geq \gamma$.
We define a vertex subset \myemph{$V' = \{v_i \in V \mid M(u_i,y_i) \geq 1/n\}$}.
We show that $V'$ is an independent set in $G$ and $|V'| \geq k$.

We first show that \(|V'| \geq k\).
\iflong For brevity we \else We \fi introduce the following notation.
For each $i \in [n]$,
\ifshort
$ m_{i}^{uw}  \coloneqq M(u_i, w_i)$,  $m_i^{ue}   \coloneqq \sum_{e \in E(v_i)} M(u_i, e)$, 
  $m_{i}^{uy}  \coloneqq M(u_i y_i)$,  $m_{i}^{wx}  \coloneqq M(w_i,x_i)\text{, and } 
  m_{i}^{xy} \coloneqq M(x_i,y_i)$.
\else
define \begin{alignat*}{3}
  m_{i}^{uw} & \coloneqq M(u_i, w_i),~~ & m_i^{ue}  & \coloneqq \sum_{\mathclap{e \in E(v_i)}} M(u_i, e),~~ & m_{i}^{uy} & \coloneqq M(u_i y_i), \\
   m_{i}^{wx} & \coloneqq M(w_i,x_i), &
  m_{i}^{xy} & \coloneqq M(x_i,y_i),
       \end{alignat*}
       where $E(v_i)=\{e_j^i\mid v_i \in e_j \text{ for some } e_j \in E(G)\}$.
\fi
\iflong

Then we have,
\begin{align}
  & \sum_{i=1}^n \sum_{z \in \{u_i, w_i, x_i, y_i\}} \util_{M}(z) \nonumber \\
  & = \sum_{i=1}^n (3n+4)m_{i}^{uw} + (3n+2)m_{i}^{ue} \nonumber \\
  & \qquad  + (3n+5)m_{i}^{uy} + 3m_{i}^{wx} + 3m_{i}^{xy} \nonumber \\
  & \leq \sum_{i=1}^n \Big((3n+4)(m_{i}^{uw} + m_{i}^{ue}+ m_{i}^{uy}) \nonumber \\
  & \qquad + 3(m_{i}^{wx} + m_{i}^{xy})\Big) + \sum_{i=1}^nm_{i}^{uy}.\label{eq:5}
\end{align}
\else%
Then,

  $\sum_{i=1}^n \sum_{\substack{z \in \{u_i, w_i, x_i, y_i\}}} \util_{M}(z) \nonumber = \sum_{i=1}^n (3n+4)m_{i}^{uw} \nonumber \\
 \quad~~~~~~~~~~ + (3n+2)m_{i}^{ue} 
   + (3n+5)m_{i}^{uy} + 3m_{i}^{wx} + 3m_{i}^{xy} \nonumber \\
   \leq \sum_{i=1}^n \Big((3n+4)(m_{i}^{uw} + m_{i}^{ue}+ m_{i}^{uy}) + 3(m_{i}^{wx} + m_{i}^{xy})\Big)  \nonumber \\
   \quad~~~~~~~ + \sum_{i=1}^nm_{i}^{uy}.\hfill~\refstepcounter{equation}(\theequation){\label{eq:5}}$
\fi

\ifshort
Since the total matching values of the edges incident to $u_i$ and $w_i$ is at most~$1$, we get that the right-hand side of \cref{eq:5} is at most
$(3n+4)n + 3n + \sum_{i=1}^nm_{i}^{yu}. \hfill~\refstepcounter{equation}(\theequation){\label{eq:6}}$
\else
Since $M$ is a fractional matching, meaning that the sum of the matching values for $u_i$ (resp.\ $w_i$) is at most one, the right-hand side of Inequality~\eqref{eq:5} is upper-bounded by 
\begin{equation}
(3n+4)n + 3n + \sum_{i=1}^nm_{i}^{yu}=(3n+7)n + \sum_{i=1}^nm_{i}^{yu}.\label{eq:6}
\end{equation}\fi

\ifshort
\noindent We show that $\lfloor \sum_{i=1}^n m_{i}^{uy} \rfloor \leq |V'|$.
Define
$k_1 \coloneqq |\{u_i \mid i \in [n]\text{ and } m_{i}^{uy} <
  1/n\}|$.
\else 
\noindent To show that $|V'|\ge k$, we first show that $|V'|\ge \lfloor \sum_{i=1}^n m_{i}^{uy} \rfloor$.
To this end, define \[k_1 \coloneqq |\{u_i \mid i \in [n] \text{ and } m_{i}^{uy} <
  1/n\}|.\]\fi
Then we have \(\sum_{i=1}^nm_{i}^{yu} < k_1/n + |V'|\).
Since \(k_1/n \leq 1\) thus \(|V'| \geq \lfloor \sum_{i=1}^nm_{i}^{yu} \rfloor  \).
To see that \(|V'| \geq k\) it thus suffices to show \(\sum_{i=1}^nm_{i}^{yu} \ge k\).
For this, out of \iflong the~\(\wel(M)\) \else \(\wel(M)\) \fi at most \(3m(2m^2 + 9)\) stems from the agents of the edge gadgets (see \cref{cl:edge-gadget-max-sat}).
Hence, at least \((3n +4)n + 3n + k\) must stem from the agents of the vertex gadgets. 
By the upper bound on the welfare of these agents derived in \cref{eq:6}, thus indeed \(\sum_{i=1}^nm_{i}^{yu} \geq k\), as required.

To conclude the proof, we show that \(V'\) is an independent set in~\(G\).
Towards a contradiction, suppose that there is an edge \(e_j \in E\) with \(e_j = \{v_i, v_{i'}\}\) and \(i' > i\) such that \(e_j \subseteq V'\).
\iflong By the definition of \(V'\), for each \(\nu \in \{i, i'\}\) 
the utility of $u_\nu$ has \(\util_M(u_\nu) \leq 1/n + (n - 1) \cdot (3n + 3) / n < 3n\).
That is, \(\util_M(u_\nu) < \sat(u_\nu, e^\nu_j)\).
By \cref{cl:edge-gadget-max-sat}, the total welfare received from \(M\) by the agents of the edge gadgets is at most \(3m(2m^2 + 9) - n\).
By Inequality~\eqref{eq:6}, the total welfare received from the agents in the vertex gadgets is at most \((3n+4)n + 3n + n\), meaning that \(\wel(M) < \wel\), a contradiction.
\else By definition of \(V'\) we have for both \(\nu \in \{i, i'\}\) that \(\util_M(u_\nu) \leq 1/n + (n - 1) \cdot (3n + 3) / n < 3n\).
That is, \(\util_M(u_\nu) < \sat(u_\nu, e^\nu_j)\).
By \cref{cl:edge-gadget-max-sat}, the total welfare received from \(M\) by vertices of the edge gadgets is at most \(3m(2m^2 + 9) - n\).
By \cref{eq:6}, the total welfare received from the vertices %
in the vertex gadgets is at most \((3n+4)n + 3n + n\), meaning that \(\wel(M) < \gamma\), a contradiction.
\fi
Thus, \(V'\) is indeed an independent set in~\(G\).
\end{proof}

\noindent\textbf{Remark.}
Note that even if we require the preference lists to be complete, Theorem~\ref{th:ECSFM_hard} still holds:
We can add a sufficiently large value to $\sat(u,v)$ and $\sat(v,u)$ for each acceptable pair~$\{u,v\}$ in the constructed instance, and assign small but distinct positive values to each un-mentioned pair in the instance.
Summarizing, Theorem~\ref{th:ECSFM_hard} does not rely on edges with zero values and holds even if we have complete preferences without ties.

\iflong

\par

Next, we prove that \PCSFM is \NPC even when the graph $G$ is bipartite and the values of $\sat(v,\cdot)$ are distinct, for each vertex $v$ in $G$.
\else
Next, we turn to \PCSFM.
As mentioned, the hardness construction uses the same edge gadgets as before, albeit with a different analysis.
\fi
\begin{theorem}[\appsymb]\label{th:PCSFM_hard}
  \PCSFM is \NPC,  even for bipartite graphs with strict preferences.%
\end{theorem}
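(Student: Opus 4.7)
The plan is to reduce from \ISs\ in cubic graphs, which is \NPH~\cite{AliKann2000apx-cubic}. Given an instance $(G,k)$ with $G=(V,E)$ cubic, I will build $(G',\sat,\tau)$ of \PCSFM\ by reusing the edge gadget from \cref{cons:edge-gadget} for every $e_j\in E$, and attaching to every vertex $v_i\in V$ a new vertex gadget that shares the agent $u_i$ with the edge gadgets incident to $v_i$. Unlike the vertex gadget of \cref{th:ECSFM_hard}---where the two configurations happen to fully match the same number of agents---the new vertex gadget must admit two canonical \cstable\ local states: a \emph{covering} state in which $u_i$ is integrally matched inside the vertex gadget to an agent strictly preferred over every incident edge agent $e_j^i$, so that $\util_M(u_i)>\sat(u_i,e_j^i)$ for all such $e_j$; and a \emph{non-covering} state giving $\util_M(u_i)<\sat(u_i,e_j^i)$. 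The vertex gadget is to be engineered so that the non-covering state fully matches exactly one more agent than the covering one, and so that both states remain \cstable\ regardless of how $u_i$ interacts with the incident edge gadgets.

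\textbf{Link to \ISs.} The bridge comes from \cref{cl:edge-gadget-toVC}: if $f_j$ is fully matched, then some $e_j^\nu$, $\nu\in\{i,i'\}$, satisfies $\util_M(e_j^\nu)<\sat(e_j^\nu,u_\nu)$, and \cardinalstability\ then forces $\util_M(u_\nu)\geq\sat(u_\nu,e_j^\nu)$, which only occurs when $v_\nu$ is in its covering state. Consequently, fully matching every $f_j$ makes $C=\{v_i:u_i\text{ in covering state}\}$ a vertex cover of $G$, so $V\setminus C$ is an independent set. Letting $X$ denote the number of fully matched vertex-gadget agents in the covering state and $X+1$ in the non-covering state, I will set $\tau = Xn+6m+k$.

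\textbf{Correctness.} For the forward direction, given an IS $V'\subseteq V$ of size $\geq k$, put $v_i\in V'$ in the non-covering state and the rest in the covering state; since $V\setminus V'$ is a vertex cover, every edge gadget has at least one covering endpoint, so I can match it by the scheme from the ``if'' direction of \cref{th:ECSFM_hard} that fully matches all six gadget agents. A case analysis over potential \cblocking pairs, analogous to that in \cref{th:ECSFM_hard}, establishes \cardinalstability, and the total number of fully matched agents equals $X(n-|V'|)+(X+1)|V'|+6m = Xn+|V'|+6m \geq \tau$. For the reverse direction, let $M$ be \cstable\ with $\fullymatched(M)\geq\tau$, let $V'$ be the set of $v_i$ whose gadgets are in non-covering state under $M$, and let $L$ be the number of edges $e_j$ whose $f_j$ is not fully matched. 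Since a non-fully-matched $f_j$ subtracts at least one from the edge-gadget count, $\fullymatched(M) \leq Xn+|V'|+6m-L$, so $|V'|-L\geq k$. Any edge $e_j=\{v_i,v_{i'}\}\subseteq V'$ has both $u$-agents in the non-covering state, hence by \cref{cl:edge-gadget-toVC} and \cardinalstability\ its $f_j$ is not fully matched and thus contributes to $L$; so the number of $G$-edges inside $V'$ is at most $L$, and removing one endpoint per such edge from $V'$ yields an independent set of size at least $|V'|-L\geq k$.

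\textbf{Main obstacle.} The central difficulty is the design of the vertex gadget: it must admit only the two intended \cstable\ local behaviors (or at the very least no \cstable\ one with a larger fully matched count than the non-covering one), differing by exactly one fully matched agent, and it must remain \cstable\ irrespective of which subset of the three incident edge gadgets ``rely on'' $u_i$ being covering. Strictness and absence of ties in $\sat$ can be ensured, as in \cref{th:ECSFM_hard}, by perturbing the satisfaction values on acceptable pairs by pairwise distinct small quantities. Membership in \NP\ is already provided by the MILP formulation described at the start of \cref{sub:comp-problems}.
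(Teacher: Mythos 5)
Your proposal is a plan rather than a proof: everything hinges on a vertex gadget whose existence you only postulate. You need a gadget with two \cstable{} local configurations that differ by \emph{exactly one} fully matched agent, where the ``covering'' one gives $\util_M(u_i)>\sat(u_i,e_j^i)$ and the ``non-covering'' one gives $\util_M(u_i)<\sat(u_i,e_j^i)$, such that no other \cstable{} configuration fully matches more agents than the non-covering one, and such that stability is preserved no matter which of the (up to three) incident edge gadgets lean on $u_i$. You name this yourself as the main obstacle, and nothing in the proposal discharges it — this is precisely the inventive step of the theorem. Even granting such a gadget, your reverse-direction count $\fullymatched(M)\le Xn+|V'|+6m-L$ treats each vertex gadget as sitting in one of two discrete states, but a \csm{} is fractional: a gadget can occupy arbitrary intermediate configurations, so ``covering/non-covering'' must be defined by a matching-value or utility threshold and the bound must be proved against \emph{all} fractional \cstable{} configurations, not just the two intended ones (the paper's analogous definitions use thresholds like $M(u_i,y_i)\ge 1/n$ for exactly this reason).

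The paper sidesteps the need for such a local gadget entirely. It sets $\fullnum=|U|+|W|$, i.e., demands a \emph{perfect} \csm, keeps each vertex gadget to just the pair $u_i,w_i$, and adds one global \emph{selector gadget} of $4k$ agents $s_j,c_j,a_j,t_j$ (\cref{fig:Perefect-csm-card}). In a perfect matching every $s_j$ must be fully matched and is acceptable only to agents $u_i$, so at least $k$ of the $u_i$ are pulled away from $w_i$ — these encode the independent set, their $w_i$ being absorbed half-integrally by $c_j$ and $t_j$ — and the carefully ordered satisfactions (indexed by $\ell_1<\dots<\ell_k$) rule out \cblocking{} pairs among $u_i,s_j,w_i,c_j,t_j$. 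Perfection also forces every $f_j$ to be fully matched, so \cref{cl:edge-gadget-toVC} applies to every edge directly and no correction term $L$ is needed; the set $V'=\{v_i\mid\sum_{j\in[k]}M(u_i,s_j)\ge 1/n\}$ is then shown to be an independent set of size at least $k$. To make your route work you would have to exhibit your vertex gadget explicitly and verify its claimed properties against all fractional \cstable{} matchings; the selector-gadget construction is the concrete device your outline is missing.
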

\iflong
\begin{proof}
  Let \(G = (V, E)\) be a graph with \(V=\{v_1, v_2, \ldots, v_n\}\) and \(E=\{e_1, e_2, \ldots, e_m\}\) and let \((G, k)\) be an instance of \ISs.
  We construct an instance $(G',\sat)$ of \PCSFM where $G'= (U \cup W, E')$ is a bipartite graph with partite sets \(U\) and \(W\).
  The basic idea is similar to \cref{th:ECSFM_hard}: We have a vertex gadget for each vertex~\(v\) of~\(G\) that has two possibilities of being matched, signifying whether \(v\) is supposed to be in the independent set.
  If \(v\) is in the independent set, then an agent in \(v\)'s vertex gadget will be unsatisfied with respect to all agents in the edge gadgets that correspond to the edges incident to~\(v\).
  In order to ensure that at least \(k\) vertices are selected in the independent set, we use a selector gadget.
  In this gadget there are \(k\) ``selector'' agents which, in order to be fully matched, need to be matched to agents in the vertex gadget.
  If an agent~\(u_i\) in a vertex gadget is matched with large-enough value to a selector agent, this makes him unsatisfied with respect to the agents in his edge gadgets, thus signifying that the vertex~\(v_i \in V\) corresponding to \(u_i\) is selected into the independent set.
  The properties of the edge gadget then ensure that the selected vertices indeed form an independent set.

  We construct $G' = (U \cup W, E')$ and \(\sat\) as follows.
  For each $i \in [n]$ we let $d_i$ denote the degree of the vertex $v_i$ in~\(G\).
  \begin{itemize}
  \item For each \(i \in [n]\) we introduce a \myemph{vertex gadget} for vertex \(v_i \in V\) which contains an agent $u_i$ in $U$ and an agent $w_i$ in $W$.
  \item We introduce a \myemph{selector gadget} which contains the following agents:
    \begin{itemize}
    \item We add $2k$ agents, called~$t_1, t_2, \ldots, t_k$ and $c_1, c_2, \ldots, c_k$, to~$U$.
    \item We add $2k$ agents, called $s_1, s_2, \ldots, s_k$ and $a_1, a_2, \ldots, a_k$,
    to~$W$.
    \end{itemize}
  \item Let \(G_E\) be the graph with vertex set \((\colU{U_E} \cup \colW{W_E})\) that
  contains the edge gadgets as defined in \cref{cons:edge-gadget}.
  That is, $\colU{U_E} = \{e_j,f_j,g_j\mid e_j \in E\}$ and $\colW{W_E} = \{e_j^i, e_j^{i'}, h_j \mid e_j\in E\}$.
  We add each vertex in $\colU{U_E}$ to $U$ (if not already present) and $\colW{W_E}$ to $W$, and we add the edges of $G_E$ to $E'$.
  The \(\sat\) function on these agents is defined in \cref{cons:edge-gadget} and is as shown in \cref{fig:csm_edge_gadget}.
  \item For each $i\in [n]$ and $j\in [k]$ we define the acceptability relations of the agents from the vertex gadget and the selector and their satisfaction~$\sat$ as in \cref{fig:Perefect-csm-card}.
\end{itemize}
  In total, we have introduced $6m+2n+4k$~agents.
  To completes the construction, we define~$\fullnum=|U|+|W|=6m+2n+4k$,
  and obtain the instance~$(G',\sat, \fullnum)$ of \PCSFM.
  Clearly, it can be carried out in polynomial time.
  Moreover, \(G'\) is bipartite since all the introduced edges are between \(U\) and \(W\).
  Since $\fullnum = |U|+|W|$, searching for an \csm with $\fullymatched$ fully matched agents means searching for a perfect \csm.
  Next, we prove the correctness.
  We will show that $G$ has an independent set of size at least~$k$ if and only if $(G', \sat)$ has a perfect \csm. %

 \begin{figure}
    \tikzstyle{ellipsis} = [circle, inner sep=0.5pt, draw, fill=black!60]
   \begin{tikzpicture}
      \def \xscale {2.3}
      \def \yscale {.8}
      \def \ss {2}
      \def \ee {0}
      \foreach \x / \y / \n / \c / \p in {0/1.5/{u_i}/ui/above,1.2/1.5/{s_j}/sj/above, 0/0/{c_j}/cj/left, 1.2/0/{w_i}/wi/right, 0/-1/{t_j}/tj/below, 1.2/-1/{a_j}/aj/below,  -1.4/2/{}/Eui1/above,-1.4/0/{}/Eui2/above} {
        \node[agent] at (\x*\xscale, \y*\yscale) (\c) {};
        \node[\p = 0pt of \c] {$\n$};
      }

      \foreach \sstep in {\ee, 1.0 0,0.8 0, 1.2 0, \ss}{
        \node[ellipsis] at (-1.4*\xscale, \sstep*\yscale) {};
      }

       \draw[decoration={brace,mirror, raise=5pt,amplitude=5pt},decorate] (-1.5*\xscale, \ss*\yscale) -- node[left=2ex] {$[E(v_i)]$} (-1.5*\xscale, \ee*\yscale);

       \foreach \c / \a / \s /  \t / \n / \p / \m / \q in {%
         black/{10}/ui/wi/{2n^2\!+\!d_i}/0.35/k/{0.85},
         red!80!black/0/ui/sj/j/0.2/{i\!-\!1\!}/0.8,
         red!80!black/{-15}/tj/aj/2n/0.15/0/0.85,
         red!80!black!80!black/{-5}/tj/wi/{i\!-\!1\!}/0.2/2k\!+\!{j\!-\!1\!}/0.8,
          red!80!black/{0}/cj/aj/0/0.15/1/0.85,
         red!80!black!80!black/{10}/cj/wi/{i}/0.2/{j\!-\!1\!}/0.7,
         black/0.5/ui/Eui1/{2n^2\!+\!d_i\!-1\!}/0.3/{}/0,
         black/0/ui/Eui2/{2n^2}/0.2/{}/0
         }
      {
        \draw[] (\s) edge[bend left=\a] node[pos=\p,fill=white, inner sep=.5pt] {\scriptsize $\n$} node[pos=\q,fill=white,inner sep=.5pt] {\scriptsize $\m$} (\t);
      }

       \begin{pgfonlayer}{background}

         \foreach \c / \a / \s /  \t / \n / \p / \m / \q in
         {  red!80!black/0/ui/sj/j/0.2/{i\!-\!1\!}/0.8,  red!80!black!80!black/{10}/cj/wi/{i\!-\!1\!}/0.2/{j\!-\!1\!}/0.7,  red!80!black/{0}/cj/aj/0/0.15/1/0.85,  red!80!black/{-15}/tj/aj/2n/0.15/0/0.85,
         red!80!black!80!black/{-5}/tj/wi/{i\!-\!1\!}/0.2/2k\!+\!{j\!-\!1\!}/0.8%
         }
         {
           \draw[linemarkr] (\s) edge[bend left=\a]  (\t);
         }

       \end{pgfonlayer}
    \end{tikzpicture}
    \[
      \begin{array}{@{}r@{}l@{\;}r@{}l@{}}
        \\
        u_i\colon\!&w_i \pref [E(v_i)] \pref s_k\pref \cdots \pref s_1,& a_j\colon\!&c_j \pref t_j,\\
        t_j\colon\!&a_j \pref w_n \pref \cdots \pref w_1,& s_j\colon&u_n \pref \cdots \pref u_1,\\
        c_j\colon\!&w_n \pref \cdots \pref w_1 \pref a_j,&w_i\colon\!&t_k\pref\cdots \pref t_1 \pref u_i \pref c_k \pref \cdots\pref c_1.\\
      \end{array}
    \]
    \caption{The vertex and selector gadget used in the proof of \cref{th:PCSFM_hard}. Top: The cardinal preferences of the two agents~$u_i$ and $w_i$ which correspond to vertex~$v_i$ and four agents $s_j,t_j,a_j,c_j$, $j\in [k]$. %
      Here, $E(v_i)= \{e_j^i \mid e_j \in E$ and $v_i$ is incident to~$e_j\}$ and $[E(v_i)]$ denotes the sequence resulting from ordering $E(v_i)$ in increasing order of
      the indices of the edge agents in~$E(v_i)$.
      The red edges indicated in the matching signifies that $v_i$ is in the independent set.
      Bottom: The induced preference lists.}
    \label{fig:Perefect-csm-card}
  \end{figure}

  For the forward direction, given an independent set of $G$ of size at least~\(k\), we construct a fractional matching~$M$ as follows.
  Let \(V'\) be a subset of the independent set of size exactly~\(k\).
  Denote the vertices of $V'$ as $v_{\ell_1}, \ldots, v_{\ell_k}$, where $\ell_1 < \ell_2 < \cdots < \ell_k$ %
  (this ordering will be crucial to make sure that the agents \(s_j\) will not be involved in a \cblocking pair).
  \begin{compactitem}
    \item For each $j \in [k]$, we set $M(u_{\ell_j},s_j) =1$ and set
    \begin{equation*}
      M(w_{\ell_j},c_j) = M(c_j,a_j) = M(a_j,t_j) = M(t_j,w_{\ell_j}) = 1/2.
    \end{equation*}
    This matching is indicated by the red lines in \cref{fig:Welfare-csm-card}.
  \item For each $v_i \in V \setminus V'$, we set $M(u_i,w_i) = 1$.
  \item For each pair of agents in the edge gadget the values set by $M$ are the same as in Theorem~\ref{th:ECSFM_hard}:  
    \item For each edge $e_j = \{v_i,v_{i'}\} \in E$ with $i<i'$, do:
    \begin{compactenum}[(1)]
  \item If $v_i \notin V'$, then set
  \[  \begin{array}{@{}l@{\,}l@{\,}l@{}}
     M(e_j,h_j)=0.3,& M(e_j,e_j^{i})=0.1, & M(e_j,e_j^{i'})=0.6,\\
     M(f_j,e_j^i)=0.8,& M(f_j,e_j^{i'})=0.2 & \\
     M(g_j,e_j^{i})=0.1,& M(g_j,h_j)=0.7, & M(g_j,e^{i'}_j)=0.2,
  \end{array}\]
  Note that, $M$ satisfies the if-condition for $e_j$ stated in \cref{lem:specific-matching}\eqref{vi-not-is}.
  \item If $v_{i} \in V'$, then set \[M(e_j, e_j^i) = M(f_j, e_j^{i'}) = M(g_j, h_j) = 1;\]
  note that this corresponds to the if-condition stated in \cref{lem:specific-matching}\eqref{vi-is}.
  \end{compactenum}%
  \item For every pair $e$ of agents in $G'$ not assigned above, set $M(e) = 0$. %
\end{compactitem}
  First, observe that $M$ is a matching.
  Indeed, \(M\) is perfect: The agents in the edge gadgets are fully matched by a direct calculation.
  The agents in the vertex and selector gadgets are also fully matched:
  For each $i \in [n]$, agent $u_i$ is matched integrally to either some \(s_j\) or  some \(w_j\).
  As to the agents \(w_i\), \(i \in [n]\), if $v_i \in V \setminus V'$, then agent $w_i$ is also matched integrally.
  Otherwise, if $v_i \in V'$, then $w_i$ is matched half-integrally with both $c_j$ and $t_j$, for some $j \in [k]$.
  Similarly, for each $j \in [k]$, the vertices $c_j$, $a_j$, and $t_j$ are matched half-integrally with two vertices.
  Hence, every vertex is fully matched.
  Therefore, $M$ is a perfect matching.
  
  It remains to prove that $M$ is \cstable.
  We consider the agents from the edge gadgets and the vertex and selector gadgets separately.

  \smallskip
  \noindent \textit{The edge gadget.}
  Consider an arbitrary edge \(e_j \in E\) with \(e_j = \{v_i, v_{i'}\}\) for \(i < i'\) and consider the edge gadget for~\(e_j\).
  Assume that \(v_i \notin V'\); the case \(v_i \in V'\) is analogous.
  By the definition of \(M\) in the edge gadgets in combination with \cref{lem:specific-matching}\eqref{vi-not-is}, we have that every \cblocking pair that involves an agent from the edge gadget for~\(e_j\) must involve both \(e^i_j\) and \(u_i\).
  However, since \(v_i \notin V'\) and by the definition of \(M\), we have \(\util_M(u_i) = \sat(u_i, w_i) > \sat(u_i, e^i_j)\).
  Thus, there is no \cblocking pair involving the agents of the edge gadgets.

  \smallskip
  \noindent \textit{The vertex and selector gadget.} It remains to show that
  no \cblocking pair involves an agent from a vertex gadget or a selector gadget in~$U$.
  Note that each pair of agents from the vertex or selector gadget involves an agent from the following set~\(V^*=\{u_i \mid i \in [n]\} \cup \{c_j, t_j \mid j \in [k]\}\). %
  It thus suffices to show that the agents in this set~$V^*$ are not involved in any \cblocking pairs.
  
  Consider an arbitrary agent \(u_i\) with $i\in [n]$.
  Since there is no \cblocking pair involving \(u_i\) and an agent of the edge gadgets,
  each \cblocking pair involving \(u_i\) must involve either \(w_i\) or \(s_j\) for some \(j \in [k]\).
  Note that if $M(u_i,w_i)=1$, then there is no \cblocking pair involving~$u_i$ because \(w_i\) is \(u_i\)'s most preferred agent.
  Otherwise, we have $i = \ell_p$ for some $p \in [k]$ and $M(u_i,s_p)=1$.
  Then $\{u_i,w_i\}$ is not \cblocking:
  This is because $M(w_i,t_p)=1/2$ and hence $\util_M(w_i)\ge 1/2\cdot \sat(w_i,t_p) = k = \sat(w_i, u_i)$. %
  Moreover, none of the agents \(s_j\) with $j < p$ can form a \cblocking pair with $u_i$ since $M(u_i,s_p)=1$ and hence $\util_M(u_i) = \sat(u_i, s_p) = p > j = \sat(u_i,s_j)$.
  Neither does any agent~$s_j$ with $j > p$ form a \cblocking pair with $u_i$ since $M(s_j,u_{\ell_j}) = 1$ and hence, $\util_{M}(s_j)=\sat(s_j, u_{\ell_j}) = \ell_j > \ell_p = \sat(s_j, u_{i})$; %
  Recall the ordering of \(V'\) and that \(i = \ell_p\).
  Hence, no blocking pair involves~$u_i$.

  Next, let an arbitrary agent~\(c_j\) with $j\in [k]$.
  Since for each $i' \in [n]$, we have $\util_M(w_{i'}) \geq k > \sat(w_{i'}, c_j)$, pair $\{c_j,w_{i'}\}$ is not \cblocking.
  Since $\util_M(c_j) \geq 1/2 > \sat(c_j, a_j)$, neither is pair~$\{c_j, a_j\}$ \cblocking.
  Hence, no blocking pair involves~$c_j$.
  
  Finally, consider an arbitrary agent \(t_j\) with $j\in [k]$.
  Since $\sat(a_j, t_j) = 0$, pair~$\{t_j,a_j\}$ is never a \cblocking pair. %
  Moreover, since for each $i' \in [n]$, $\sat(t_j,w_{i'}) \leq n-1$, $\{t_j,w_{i'}\}$ is not \cblocking.
  Hence, no blocking pair involves~$t_j$.
  Therefore, $M$ is \cstable.
  This proves the forward direction.

  For the other direction, let $M$ be a \cstable matching in the instance $(G',\sat)$ where every vertex is fully matched in $M$.
  We define the vertex set \[V' = \{v_i \in V \mid \sum_{j \in [k]} M(u_i,s_j) \geq 1/n\}.\]
  We show that $V'$ is an independent set in $G$ and $|V'| \geq k$.

  We first show that \(|V'| \geq k\).
  Call \(i \in [n]\) \myemph{good} if \(\sum_{j \in [k]} M(s_j, u_i) \geq 1/n\) and \myemph{bad} otherwise.
  Observe that \(|V'|\) is equal to the number of good indices in \([n]\).
  Since for each \(j \in [k]\) agent \(s_j\) is fully matched by \(M\), we have \(k = \sum_{i \in [n]}\sum_{j \in [k]} M(u_i, s_j)\).
  Thus,
  \begin{align*}
    k & = \sum_{\mathclap{{i \in [n]\colon i \text{ bad }}}}~~~~~~ \sum_{j \in [k]} M(s_j, u_i) + \sum_{\mathclap{i \in [n]\colon i \text{ good }}}~~~~~~ \sum_{j \in [k]} M(s_j, u_i)\label{eq:9}\\
      & < \frac{n - |V'|}{n} + |V'|.
  \end{align*}
  Thus, since \((n - |V'|)/n \leq 1\) and \(k\) is an integer, we have \(|V'| \geq k\), as required.

  It remains to show that \(V'\) is an independent set.
  Consider an arbitrary edge \(e_j \in E\) with \(e_j = \{v_i, v_{i'}\}\) where \(i < i'\).
  Towards a contradiction assume that for both \(\nu \in \{i, i'\}\) we have \(v_\nu \in V'\).
  Then, since \(\sum_{j \in [k]} M(u_\nu, s_j) \geq 1/n\) we have
  \begin{align*}
    \util_M(u_\nu) & \leq (2n^2 + d_\nu)\cdot\frac{n - 1}{n} + \frac{k}{n} \\
                 & = \frac{1}{n} (2n^3 + d_\nu \cdot n + k - 2n^2 - d_\nu) < 2n^2.
  \end{align*}
  Note that the last inequality holds since $k\le n$ and $0\le d_{\nu} < n$.
  Thus, \(\util_M(u_\nu) < \sat(u_\nu, e^\nu_j)\).
  However, by \cref{cl:edge-gadget-toVC} we have \(\util_M(e^i_j) < \sat(e^i_j, u_i)\) or \(\util_M(e^{i'}_j) < \sat(e^{i'}_j, u_{i'})\).
  Thus, \(\{e^i_j, u_i\}\) or \(\{e^{i'}_j, u_{i'}\}\) forms a blocking pair, a contradiction.
  Thus indeed, \(V'\) is an independent set.
  As shown above, \(|V'| \geq k\), as required.
\end{proof}
\fi

\ifshort
The specific construction above allows us also to derive a W[1]-hardness result with respect to a parameterization above a tight lower bound:
\fi
Since \ISs is \WOH wrt.\ the solution size~\cite{clique}, the reduction for Theorem~\ref{th:PCSFM_hard} implies that \PCSFM is \WOH wrt.\ parameter~``$\fullymatched(M)$$-$$\fullymatched(M^{\pi})$''%
\iflong\  
where $M$ is a perfect \csm and $M^{\pi}$ is as defined in \cref{def:stable-partitions}%
\fi.

\iflong
\begin{corollary}
  \PCSFM for bipartite graphs~$G'=(U\cup W, E')$ with strict preferences~$\sat$ is \WOH wrt.\ to the parameter~``$\fullnum-\fullymatched(M^{\pi})$'', where $M^{\pi}$ denotes a \csm of $(G',\sat)$ returned by \cref{alg:arbitrary-osm}. %
\end{corollary}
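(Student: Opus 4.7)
The strategy is to reuse the polynomial-time reduction from cubic \ISs constructed in the proof of \cref{th:PCSFM_hard} and show that it is actually a \emph{parameterized} reduction from \ISs parameterized by the independent-set size~$k$ (which is \WOH\ even on cubic graphs~\cite{clique}) to \PCSFM parameterized by $\fullnum - \fullymatched(M^{\pi})$. It therefore suffices to show that in the instance $(G',\sat,\fullnum)$ produced by the reduction the parameter $\fullnum - \fullymatched(M^{\pi})$ is upper-bounded by a computable function of~$k$.

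First I would invoke \cref{osroommates:always}: the matching $M^{\pi}$ returned by \cref{alg:arbitrary-osm} is half-integral and every matched agent is fully matched, so $\fullnum - \fullymatched(M^{\pi})$ equals the number of singleton agents in any stable partition of the instance. By \cref{prop:singletons} this number is intrinsic (invariant across all stable partitions), and by \cref{ordinal:phase1correct}\eqref{lem:osm-equivalence} it coincides with the number of agents that remain unmatched in every \osm of $(G',\sat)$. Hence to bound the parameter it suffices to exhibit \emph{any} \osm of $(G',\sat)$ with few unmatched agents.

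Next I would bound the singleton count by $O(k)$ by constructing an explicit \osm $M^{*}$ of $(G',\sat)$ in which every vertex-gadget and edge-gadget agent is fully matched and at least half of the $4k$ selector-gadget agents are fully matched as well. The natural candidate matches $\{u_i,w_i\}$ as a transposition for each $i\in[n-k]$; for each $j\in[k]$ it pairs $t_j$ with $w_{n-k+j}$ and $u_{n-k+j}$ with $s_j$, leaves $\{a_j,c_j\}$ as a transposition, and uses the canonical internal edge-gadget matching $\{e_j,e_j^i\},\{g_j,h_j\},\{f_j,e_j^{i'}\}$ (with $i<i'$) inside each edge gadget. Because only the selector gadget contributes potential singletons, this would give $\fullnum - \fullymatched(M^{\pi}) \le 4k$ and the reduction becomes parameter preserving.

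The main obstacle will be verifying the \ordinalstability\ of $M^{*}$ at the interfaces between the gadgets. Since $w_i$ prefers every $t_j$ to $u_i$, the naive transposition $\{u_i,w_i\}$ risks being \oblocked by some $t_j$; the monotone index mapping $t_j\mapsto w_{n-k+j}$ is chosen precisely so that, for any pair $\{t_j,w_i\}$, either $M^{*}(t_j,\wpref w_i)\ge 1$ (when $n-k+j\ge i$) or $w_i$ is itself matched to some $t_{j'}$ with $j'\ge j$ (when $n-k+j<i$), in both cases preventing the pair from \oblocking. The second delicate point is that each $u_i$ matched to a selector~$s_j$ prefers its edge-gadget neighbours $e_j^i$ to $s_j$, but the canonical internal edge-gadget matching puts $e_j^i$ at its top choice~$e_j$, so by \cref{ordinal:phase1correct}\eqref{lem:osm-mostpreferred} the pair $\{u_i,e_j^i\}$ is safe; the symmetric pair $\{u_{i'},e_j^{i'}\}$ is handled by the same argument applied to the other endpoint. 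Once these checks are carried out, \ordinalstability\ of $M^{*}$ follows, and the singleton count is bounded by $O(k)$, completing the parameterized reduction and hence the \WOH-ness of \PCSFM\ with respect to $\fullnum - \fullymatched(M^{\pi})$.
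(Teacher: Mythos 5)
There is a genuine gap in the second half of your plan. The first half is fine and essentially matches the paper's setup: by \cref{osroommates:always}, \cref{prop:singletons}, and \cref{ordinal:phase1correct}\eqref{lem:osm-equivalence}, the parameter $\fullnum-\fullymatched(M^{\pi})$ equals the number of singletons, i.e.\ the number of agents unmatched in \emph{every} \osm (the paper phrases this via Tan's correspondence between stable integral matchings and stable partitions, so that $\fullymatched(M^{\pi})=\fullymatched(N)$ for any stable integral matching $N$). The problem is your explicit matching $M^{*}$: it is not \ostable (nor \cstable). Take any edge $e_\ell=\{v_i,v_{i'}\}$ of $G$ with $i<i'$ and $i'>n-k$; every edge incident to $v_n$ is of this form, so such an edge always exists. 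In your canonical internal gadget matching, $e_\ell^{i'}$ is matched to $f_\ell$, which is its \emph{last} choice (its list is $e_\ell\pref u_{i'}\pref g_\ell\pref f_\ell$), so $M^{*}(e_\ell^{i'},\wpref u_{i'})=0$; and $u_{i'}$ is matched to a selector $s_j$, which it ranks below all agents of $E(v_{i'})$, so $M^{*}(u_{i'},\wpref e_\ell^{i'})=0$. Hence $\{u_{i'},e_\ell^{i'}\}$ is \oblocking. Your sentence ``the symmetric pair is handled by the same argument applied to the other endpoint'' is exactly where this fails: the gadget is asymmetric, and only $e_\ell^{i}$ sits at its top choice. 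Nor can this be repaired by re-choosing the internal gadget matching: by design, in every stable matching at least one of the two interface agents forces its $u$-agent to be matched at the level of that interface agent, which is precisely how the reduction encodes the independence constraint. So diverting an \emph{arbitrary} set of $k$ agents $u_{n-k+1},\dots,u_n$ to the selectors cannot in general yield a stable matching---if it could, the reduction itself would be vacuous. A second (fixable) slip: \ISs parameterized by $k$ is \emph{not} \WOH on cubic graphs (any bounded-degree graph has an independent set of size $n/4$, so the problem is trivially FPT there); the reduction behind \cref{th:PCSFM_hard} is from general \ISs, and that is the source problem you must use.

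The paper avoids constructing any explicit stable fractional matching. Having reduced the task to bounding the number of unmatched agents in a stable \emph{integral} matching of the bipartite instance, it runs (part of) the Gale--Shapley propose-and-reject procedure with a fixed proposal order and only counts proposals: every agent of $W$ except the $k$ selector agents $s_1,\dots,s_k$ receives at least one proposal and hence stays matched, so at most $k$ agents of $W$, and (since $|U|=|W|$ and the matching is integral) at most $k$ agents of $U$, remain unmatched, giving $\fullnum-\fullymatched(M^{\pi})\le 2k$. To salvage your route you would need to replace the explicit $M^{*}$ by such a counting or displacement-cascade argument rather than by selecting an arbitrary $k$-subset of vertices.
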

\begin{proof}
  We use the same reduction as the one given in the proof of \cref{th:PCSFM_hard}.
  Let $I=(G, k)$ denote an instance of \ISs with $G$ being a graph with vertex set~$V=\{v_1,v_2,\ldots, v_n\}$ and
  edge set~$E=\{e_1,e_2,\ldots, e_m\}$, 
  and let $I'=(G', \sat, \fullnum)$ be the constructed instance in the proof with $G'=(U\cup W, E')$ and $U=\{e_j, f_j, g_j\mid e_j \in E\}\cup \{u_i\mid v_i \in V\}\cup \{c_j,t_j\mid j\in [k]\}$,
  $W=\{e_j^{i}, e_j^{i'}, h_j\mid e_j=\{v_i, v_{i'}\} \text{ for some }e_j \in E\} \cup \{w_i\mid v_i \in V\}\cup \{s_j,a_j\mid j\in [k]\}$,
  and $\fullnum=|U|+|W|=6m+2n+4k$.
  
  Since \ISs parameterized by the independent set size~$k$ is \WOH~\cite{CyFoKoLoMaPiPiSa2015} and since we have shown in the proof of \cref{th:PCSFM_hard} that the reduction from \ISs runs in polynomial time and is correct,
  to show \WOH{ness} for \PCSFM,
  it suffices to show that $\fullnum-\fullymatched(M^\pi)\le 2k$,
  where $M^{\pi}$ is a matching computed by \cref{alg:arbitrary-osm} on input~$(G',\sat)$.
  For this, let \(N\) be an arbitrary stable integral matching for \(G', \sat\).
  Observe that \(N\) exists, because \(G'\) is bipartite.
  Moreover, since $\sat$ is strict, \citet{Tan1991} proved in his Proposition~2.1 that the stable matching~\(N\) induces a stable partition in which the transpositions (i.e., the edges) one-to-one correspond to the assignments in~\(N\).
  Moreover, by \cref{prop:singletons} the singletons in any two stable partitions are the same,
  and by \cref{ordinal:phase1correct}\eqref{lem:osm-equivalence} an agent is fully matched in $M^{\pi}$ if and only if she is a non-singleton.
  Thus, \(\fullymatched(M^\pi) = \fullymatched(N)\).
  Again, since $(G, \sat)$ has strict preferences, every stable integral matchings matches the same set of agents. 
  It thus suffices to prove that there exists a stable integral matching~\(M\) for \(G'\) such that \(\fullnum - \fullymatched(M) \leq 2k\).
  Since \(\fullnum = |U|+|W|\) we hence only need to show that that an arbitrary stable integral matching~\(M\) of $(G',\sat)$ (fully) matches all but at most \(2k\) agents.
  One way to show this is to check which agents must be matched under~$M$. 
  We show this by carrying out the propose-and-reject algorithm by \citet{GaleShapley1962}. %

  We let the agents in \(U\) propose to the agents in \(W\) with the following specific order of propositions:
  \begin{compactenum}[(1)]
  \item For each \(j \in [m]\):
  \begin{compactenum}[(a)]
    \item Agent~$f_j$  proposes to her most-preferred partner \(e^i_j\).
    \item Agent~$g_j$ proposes to her next most-preferred partner~$h_j$.
    \item Agent~$e_j$ proposes to her next most-preferred partner \(e^i_j\).
    Observe that \(e^i_j\) accepts, leaving \(f_j\) without partner.
    \item Agent \(f_j\) proposes to her next most-preferred partner \(e^{i'}_j\).
    \end{compactenum}
  Afterwards, each agent from the edge gadgets which is also from $W$ receives a proposal. 
    \item For each $i\in [n]$ agent~$u_i$  proposes to her most-preferred partner~$w_i$.
    \item For each $j\in [k]$ agent~$c_j$  proposes to her next most-preferred partner~$a_j$ (and gets accepted).
    \item For $j = k, k-1, \ldots, 1$, agent~$t_j$ proposes to her next most-preferred partner~$w_{n+k-j}$.
    Observe that \(w_{n+k-j}\) accepts, leaving \(u_{n+k-j}\) without partner.
  \end{compactenum}
  Observe that all agents in \(W\), except those from $\{s_j\mid j\in [k]\}$, receive at least one proposal from some agent from~$U$.
  Since in bipartite graph with strict preferences, an agent from $W$ never becomes unmatched once she got a proposal from some agent from~$U$,
  we know that at most $k$ agents from $W$ will remain unmatched under~$M$.
  Since $M$ is integral and $|U|=|W|$, at most $k$ agents remain unmatched under~$M$.
  Hence, at most \(2k\) agents will be unmatched under~$M$, as required.
\end{proof}
\fi

\section{Conclusion and outlook}
\looseness=-1%
\iflong
Motivated by the benefits of fractional matchings under preferences,
we studied three natural stability concepts (\linearstability, \ordinalstability, and \cardinalstability) and two optimization criteria, from structural and algorithmic perspectives.
We obtained a comprehensive picture of the algorithmic complexity of computing a stable fractional matching which maximizes the either number of fully matched agents or the social welfare, taking into account whether the preferences may contain ties and whether the underlying market is a marriage market or a roommates market.
\fi

\looseness=-1
We conclude with some challenges for future work.
First, it would be interesting to know whether the set of \cstable matchings has some form of lattice structure.
Second, studying optimal stable and fractional matchings using the framework of parameterized algorithmics~\cite{Nie06,CyFoKoLoMaPiPiSa2015} may provide more insights into the fine-grained complexity of the problem.
Promising parameters are the number of fully matched agents and the social welfare of the fractional matchings in the solution.
Finally, regarding preference restrictions~\cite{BreCheFinNie2020-spscSM-jaamas}, it would be interesting to know whether assuming a special preference structure can help in finding tractable cases for optimal fractional stable matchings.

\clearpage

\bibliography{sfm}

\end{document}

